\def\BState{\State\hskip-\ALG@thistlm}
\newcommand{\PreserveBackslash}[1]{\let\temp=\\#1\let\\=\temp}
\newcolumntype{C}[1]{>{\PreserveBackslash\centering}p{#1}}
\newcolumntype{R}[1]{>{\PreserveBackslash\raggedleft}p{#1}}
\newcolumntype{L}[1]{>{\PreserveBackslash\raggedright}p{#1}}
 \newtheorem{theorem}{Theorem}[section]
 \newtheorem{corollary}[theorem]{Corollary}
 \theoremstyle{definition}
 \newtheorem{remark}{Remark}[section]
 \numberwithin{equation}{section}
\newcommand{\N}{\mathbb{N}}
\newcommand{\R}{\mathbb{R}}
\renewcommand{\emptyset}{\text{\usefont{OMS}{cmsy}{m}{n}\symbol{59}}}
\newcommand{\comment}[1]{}
\DeclareMathOperator*{\argmax}{arg\,max}
\renewcommand{\theta}{\vartheta}
\newcommand{\bs}{\boldsymbol}
\newcommand{\out}[1]{{}}
\begin{document}

\allowdisplaybreaks

\baselineskip3.75ex
\title{\bf A Metropolized adaptive subspace algorithm for high-dimensional Bayesian variable selection}

\author{Christian Staerk$^1$, Maria Kateri$^2$ and Ioannis Ntzoufras$^3$ 
        \footnote{
       e-mails: {\tt staerk@imbie.uni-bonn.de}, {\tt maria.kateri@rwth-aachen.de},
 {\tt ntzoufras@aueb.gr}} 
\vspace{2ex}\\ 
$^1$ Department of Medical Biometry, Informatics and Epidemiology, \\ University Hospital Bonn, Germany\\
$^2$ Institute of Statistics, RWTH Aachen University, Germany\\
$^3$ Department of Statistics, Athens University of Economics and Business, Greece
}

\date{}
\maketitle

%

%
\begin{abstract}
 
A simple and efficient adaptive Markov Chain Monte Carlo (MCMC) method, called the Metropolized Adaptive Subspace (MAdaSub) algorithm, is proposed for sampling from high-dimensional posterior model distributions in Bayesian variable selection. The MAdaSub algorithm is based on an independent Metropolis-Hastings sampler, where the individual proposal probabilities of the explanatory variables are updated after each iteration using a form of Bayesian adaptive learning, in a way that they finally converge to the respective covariates' posterior inclusion probabilities. 
We prove the ergodicity of the algorithm and present a parallel version of MAdaSub with an adaptation scheme for the proposal probabilities based on the combination of information from  multiple chains. The effectiveness of the algorithm is demonstrated via various simulated and real data examples, including a high-dimensional problem with more than 20,000 covariates.   
\\[2mm]
\textbf{Keywords}: Adaptive MCMC, Generalized Linear Models, High-dimensional Data, Sparsity, Variable Selection.
\end{abstract}
%
%
\vspace{2mm}
\section{Introduction}\label{sect1}

Variable selection in regression models is one of the big challenges in the era of high-dimensional data where the number of explanatory variables might largely exceed the sample size. During the last two decades, many classical variable selection algorithms have been proposed which are often based on finding the solution to an appropriate optimization problem. As the most famous example, the Lasso \citep{tibshirani1996} relies on an \(\ell_1\)-type relaxation of the original \(\ell_0\)-type optimization problem. 
Convex methods like the Lasso are computationally very efficient and are therefore routinely used in high-dimensional statistical applications. However, such classical methods mainly focus on point estimation and do not provide a measure of uncertainty concerning the best model, per se, although recent works aim at addressing these issues as well (see e.g. \citealp{wasserman2009}, \citealp{meinshausen2010} and \citealp{lee2016}). On the other hand, a major advantage of a fully Bayesian approach is that it automatically accounts for model uncertainty. In particular, Bayesian model averaging \citep{raftery1997} and the median probability model \citep{barbieri2004} can be used for predictive inference. Furthermore, posterior inclusion probabilities of the individual covariates can be computed to quantify the Bayesian evidence. 

Important \(\ell_0\)-type criteria like the Bayesian Information Criterion (BIC, \citealp{schwarz1978}) and the Extended Bayesian Information Criterion (EBIC, \citealp{chen2008}) can be derived as asymptotic approximations to a fully Bayesian approach (compare e.g. \citealp{liang2013}). 
It has been argued that \(\ell_0\)-type methods posses favourable statistical properties in comparison to convex \(\ell_1\)-type methods with respect to variable selection and prediction (see e.g. \citealp{raskutti2011} and \citealp{narisetty2014}). Since solving the associated, generally NP-hard, discrete optimization problems by an exhaustive search is computationally prohibitive, there have been recent attempts in providing more efficient methods for resolving such issues, as for example, mixed integer optimization methods \citep{bertsimas2016} and Adaptive Subspace (AdaSub) methods (\citealp{staerk2018}; \citealp{staerk2021}).   

The challenging practical issue of a fully Bayesian approach is similar to that of optimizing \(\ell_0\)-type information criteria: computing (approximate) posterior model probabilities for all possible models is not feasible if the number of explanatory variables \(p\) is very large, since there are in general \(2^p\) possible models which have to be considered. 
Often, Markov Chain Monte Carlo (MCMC) methods based on Metropolis-Hastings steps (e.g. \citealp{madigan1995}), Gibbs samplers (e.g. \citealp{george1993}; \citealp{dellaportas2002}) and ``reversible jump'' updates (e.g. \citealp{green1995}) are used in order to obtain a representative sample from the posterior model distribution. 
However, the effectiveness of MCMC methods depends heavily on a sensible choice of the proposal distributions being used. Therefore, such methods may suffer from bad mixing resulting in a slow exploration of the model space, especially when the number of covariates is large. Moreover, tuning of the proposal distribution is often only feasible after manual ``pilot'' runs of the algorithm. 

Adaptive MCMC methods aim to address these issues by updating the proposal parameters ``on the fly'' during a single run of the algorithm so that the proposal distribution automatically adjusts according to the currently available information. Recently, a number of different adaptive MCMC algorithms have been proposed in the Bayesian variable selection context, see e.g. \citet{nott2005}, \citet{lamnisos2013}, \citet{ji2013}, \citet{griffin2014}, \citet{griffin2018} and \cite{wan2021}.  
In this work we propose an alternative, simple and efficient adaptive independent Metropolis-Hastings algorithm for Bayesian variable selection, called the Metropolized Adaptive Subspace (MAdaSub) algorithm, and compare it to existing adaptive MCMC algorithms. In MAdaSub the individual proposal probabilities of the explanatory variables are sequentially adapted after each iteration. The employed updating scheme is inspired by the AdaSub method introduced in \citet{staerk2021} and can itself be motivated in a Bayesian way, such that the individual proposal probabilities finally converge against the true respective posterior inclusion probabilities. In the limit, the algorithm can be viewed as a simple Metropolis-Hastings sampler using a product of independent Bernoulli proposals which is the closest to the unknown target distribution in terms of Kullback-Leibler divergence (among the distributions in the family of independent Bernoulli form). 

The paper is structured as follows. The considered setting of Bayesian variable selection in generalized linear models (GLMs) is briefly described in Section~\ref{sec:setting}. The MAdaSub algorithm is motivated and introduced in Section~\ref{sec:MAdaSub}. By making use of general results obtained by \citet{roberts2007}, it is shown that the MAdaSub algorithm is ergodic despite its continuing adaptation, i.e.\ that ``in the limit'' it samples from the targeted posterior model distribution (see Theorem~\ref{thm:MAdaSub}). Alternative adaptive approaches are also briefly discussed and conceptually compared to the newly proposed algorithm. In Section~\ref{sec:parallel}, a parallel version of MAdaSub is presented where the proposal probabilities can be adapted using the information from all available chains, without affecting the ergodicity of the algorithm (see Theorem~\ref{thm:parallel}). 
Detailed proofs of the theoretical results of Sections~\ref{sec:MAdaSub} and~\ref{sec:parallel} can be found in the Supplement to this paper. 
The adaptive behaviour of MAdaSub and the choice of its tuning parameters are illustrated via low- and high-dimensional simulated data applications in Section~\ref{sec:sim}, emphasizing that the speed of convergence against the targeted posterior depends on an appropriate choice of these parameters. In Section~\ref{sec:realdata} various real data applications demonstrate that MAdaSub provides an efficient and stable way for sampling from high-dimensional posterior model distributions. The paper concludes with a discussion in Section~\ref{sec:discussion}. An R-implementation of MAdaSub is available at \url{https://github.com/chstaerk/MAdaSub}.

\section{The setting}\label{sec:setting}

In this work we consider variable selection in univariate generalized linear models (GLMs), where the response variable \(Y\) is modelled in terms of \(p\) possible explanatory variables \(X_1,\dots,X_p\). More precisely, for a sample of size \(n\), the components of the response vector \(\bs Y=(Y_1,\dots,Y_n)^T\) are assumed to be independent with each of them having a distribution from a fixed exponential dispersion family with 
\begin{equation} g\big( E(Y_i \,|\, \bs X_{i,*})\big) = \mu + \sum_{j=1}^p \beta_j X_{i,j},~~~i=1,\dots, n \, , \label{eq:GLM}\end{equation}
where \(g\) is a (fixed) link function, \(\mu\in\R\) is the intercept and \(\bs \beta=(\beta_1,\dots,\beta_p)^T\in\R^{p}\) is the vector of regression coefficients. 
Here, \(\bs X=(X_{i,j})\in\R^{n \times p}\) is the design matrix; it's \(i\)-th row \(\bs X_{i,*}\) corresponds to the \(i\)-th observation and it's \(j\)-th column \(\bs X_{*,j}\equiv \bs X_j\) corresponds to the values of the \(j\)-th predictor. 
For a subset \(S\subseteq\{1,\dots,p\}\), the model induced by \(S\) is defined by a GLM of the form~(\ref{eq:GLM}) but with design matrix \(\bs X_S\in\R^{n\times |S|}\) in place of \(\bs X\in\R^{n\times p}\) and corresponding vector of coefficients \(\bs\beta_S\in\R^{|S|}\), where \(\bs X_S\) denotes the submatrix of the original design matrix \(\bs X\) containing only the columns with indices in \(S\). For brevity, we often simply refer to the model \(S\). Without further notice, we assume that we always include an intercept \(\mu\) in the corresponding GLM with design matrix \(\bs X_{S}\). 
We denote the set of labelled explanatory variables by \(\mathcal{P}=\{1,\dots,p\}\) and the full model space by \( \mathcal{M}=\{S;\, S\subseteq\mathcal{P}\} \).  

In a fully Bayesian approach we assign prior probabilities \(\pi(S)\) to each of the considered models \(S\in\mathcal{M}\) as well as priors \(\pi(\mu,\psi,\bs\beta_S \,|\, S)\) for the parameters of each model \(S\in\mathcal{M}\), where \(\psi\) denotes a possibly present dispersion parameter (e.g.\ the variance in a normal linear model). After observing data \(\mathcal{D}=(\bs X,\bs y)\), with \(\bs X\in\R^{n\times p}\) and \(\bs y\in\R^n\), the posterior model probabilities are proportional to
\begin{equation}
\pi( S \,|\, \mathcal{D} ) \propto  \pi( \bs y \,|\,\bs X, S) \, \pi( S )  \, ,   ~ S\in\mathcal{M} \,,
\end{equation}
where 
\begin{equation}
\pi( \bs y \,|\,\bs X, S) = \int\int\int f(\bs y \,|\,\bs X, S, \mu,\psi, \bs\beta_S) \, \pi(\mu,\psi, \bs\beta_S \,|\, S) \, d\mu \, d\psi  \, d\bs\beta_S \, 
\end{equation}
is the marginal likelihood of the data \(\bs y\) under model \(S\), while \(f(\bs y \,|\,\bs X, S, \mu,\psi, \bs\beta_S) \) denotes the likelihood of the data \(\bs y\) under model \(S\) given the parameter values \(\mu,\psi,\bs\beta_S\) and the values of the explanatory variables \(\bs X\). 
Note that the marginal likelihood \(\pi( \bs y \,|\,\bs X, S)  \) is generally only available in closed form when conjugate priors are used. 

\begin{remark}\label{remark:conjugate} A prominent example in normal linear models is a conjugate prior structure, where the prior on the variance \(\psi = \sigma^2\) is given by Jeffreys prior (independent of the model~\(S\)) and the prior on the vector of coefficients \(\bs\beta_S\) in model \(S\in\mathcal{M}\) is given by a multivariate normal distribution, i.e. 
\begin{equation}\label{eq:prior1}
 \bs \beta_S \, | \, S ,  \sigma^2 \sim \mathcal{N}_{|S|} (\bs\theta_S , \sigma^2 g \, {\bs W}_{\!\! S} ) , ~~~ \pi(\sigma^2) \propto \frac{1}{\sigma^2} \,, 
\end{equation}
where \( \bs\theta_S\in\R^{|S|}\), \(g>0\) and \({\bs W}_{\!\! S}\in \R^{|S|\times |S|}\) 
are hyperparameters. After centering each of the covariates \(\bs X_j\), \(j\in\mathcal{P}\), the improper prior \(\pi(\mu)\propto 1\) is a common choice for the intercept~\(\mu\) (again, independent of the model \(S\)). With no specific prior information, the prior mean of~\(\bs\beta_S\) can be set to the zero vector (\(\bs\theta_S= \bs 0\)). The 
matrix \(\bs {\bs W}_{\!\! S}\) is often chosen to be the identity matrix~\(\bs I_{|S|}\) of dimension \(|S|\) 
or to be \( {\bs W}_{\!\! S} = (\bs X_S^T \bs X_S)^{-1} \) yielding Zellner's g-prior \citep{zellner1986}. 
The first choice corresponds to Ridge Regression and implies prior independence of the regression coefficients, while the second choice with \(g=n\) corresponds to a unit information prior. In case no specific prior information is available about the possible regressors, a natural choice for the model prior is an independent Bernoulli prior of the form
\begin{equation}\label{eq:modelprior}
\pi(S\,|\,\omega) = \omega^{|S|} (1-\omega)^{p-|S|} ,\, S\in\mathcal{M} \,, 
\end{equation}
where \(\omega=\pi(j\in S)\) is the prior probability that variable \(X_j\) is included in the model, for all \(j\in\mathcal{P}\). One can either set the prior inclusion probability \(\omega\) to some fixed value or consider an additional hyperprior for \(\omega\), with the latter option yielding more flexibility. A convenient choice is the (conjugate) beta prior 
\(\omega \sim \mathcal{B}e(a_\omega,b_\omega)\), where \(a_\omega>0\) and \(b_\omega>0\) can be chosen in order to reflect the prior expectation and prior variance of the model size \(s=|S|\), \(S\in\mathcal{M}\) (see \citealp{kohn2001} for details). In practice, one often imposes an a-priori upper bound~$s_{\text{max}}$ on the model size (with $s_{\text{max}}\leq n$) by setting \(\pi(S)=0\) for \(|S|> s_{\text{max}}\) (cf.~\citealp{liang2013, rossell2021}), while for fixed control variables~$X_j$ one can enforce the inclusion of such variables by setting \(\pi(j\in S)=1\). 
\end{remark}

In the general non-conjugate case the marginal likelihood is not readily computable and numerical methods may be used for deriving an approximation to the marginal likelihood. Laplace's method yields an asymptotic analytic approximation to the marginal likelihood \citep{kass1995}. Similarly, different information criteria like the Bayesian Information Criterion (BIC, \citealp{schwarz1978}) or the Extended Bayesian Information Criterion (EBIC, \citealp{chen2008}) can be used directly as asymptotic approximations to fully Bayesian posterior model probabilities under suitable choices of model priors. Under a uniform model prior, i.e. \(\pi(S)=\frac{1}{2^p}\) for all \(S\in\mathcal{M}\), the BIC can be derived as an approximation to \(-2\log(\text{BF}(S)) = -2\log(\text{PO}(S))\), where \(\text{BF}(S) = \pi( \bs y \,|\,\bs X , S)/ \pi( \bs y \,|\,\bs X , \emptyset) \) denotes the Bayes factor of model \(S\in\mathcal{M}\) versus the null model \(\emptyset\in\mathcal{M}\) and \(\text{PO}(S)\) 
denotes the corresponding posterior odds  (\citealp{schwarz1978}; \citealp{kass1995b}). 
In a high-dimensional but sparse situation, in which only a few of the many possible predictors contribute substantially to the response, a uniform prior on the model space is a naive choice since it induces severe overfitting. Therefore, \citet{chen2008} propose the prior  
\begin{equation}\pi(S)\propto {\binom{p}{|S|}}^{-\gamma}\,,\label{eq:EBICprior}\end{equation}
where \(\gamma\in[0,1]\) is an additional parameter. 
If~\(\gamma=1\), then \(\pi(S)=\frac{1}{p+1}{ \binom{p}{|S|}}^{-1}\), so the prior gives equal probability to each model size, and to each model of the same size; note that this prior does also coincide with the beta-binomial model prior discussed above when setting \(a_\omega=b_\omega=1\), providing automatic multiplicity correction \citep{scott2010}. If~\(\gamma=0\), then we obtain the uniform prior used in the original BIC. Similar to the derivation of the BIC one asymptotically obtains the EBIC with parameter \(\gamma\in[0,1]\) as
\begin{equation}\text{EBIC}_\gamma(S) = -2\log\left( f(\bs y \,|\,\bs X, S, \hat{\mu}_S,\hat{\psi}_S, \hat{\bs\beta}_S) \right)+ \Big(\log(n)+2\gamma\log(p)\Big) |S| \,,\label{def:EBIC2}\end{equation}
where \(f(\bs y \,|\,\bs X, S, \hat{\mu}_S,\hat{\psi}_S, \hat{\bs\beta}_S) \) denotes the maximized likelihood under the model \(S\in\mathcal{M}\) (compare \citealp{chen2012}). Under the model prior~(\ref{eq:EBICprior}) and a unit-information prior on the regression coefficients for each model \(S\in\mathcal{M}\), one can asymptotically approximate the model posterior by 
 \begin{equation} \pi(S\,|\,\mathcal{D}) \approx \frac{\exp\left(-\frac{1}{2}\times \text{EBIC}_\gamma(S)\right)}{\sum_{S'\in\mathcal{M}} \exp\left(-\frac{1}{2}\times \text{EBIC}_\gamma(S')\right)} \,, ~ S\in\mathcal{M} \, . \label{eq:EBICkernel}\end{equation} 
In this work we consider situations where the marginal likelihood \(\pi( \bs y \,|\,\bs X , S)\) is available in closed form due to the use of conjugate priors (see Remark~\ref{remark:conjugate}) or where an approximation to the posterior~\(\pi(S\,|\,\mathcal{D})\) is used (e.g. via  equation~(\ref{eq:EBICkernel}) with the EBIC or any other \(\ell_0\)-type criteria such as the risk inflation criterion, cf.~\citealp{foster1994, rossell2021}).  
This assumption allows one to focus on the essential part of efficient sampling in very large model spaces, avoiding challenging technicalities regarding sampling of model parameters for non-conjugate cases. It also facilitates empirical comparisons with other recent adaptive variable selection methods, which focus on conjugate priors~\citep{zanella2019, griffin2018}. 
Furthermore, conjugate priors such as the g-prior as well as normalized \(\ell_0\)-type selection criteria such as the EBIC in equation~(\ref{eq:EBICkernel}) have shown to provide concentration of posterior model probabilities on the (Kullback-Leibler) optimal model under general conditions even in case of model misspecification~\citep{rossell2021}, as well as model selection consistency for the true model in GLMs without misspecification~\citep{chen2012, liang2013}.

\section{The MAdaSub algorithm}\label{sec:MAdaSub}

A simple way to sample from a given target distribution is to use an independent Metropolis-Hastings algorithm. Clearly, the efficiency of such an MCMC algorithm depends on the choice of the proposal distribution, which is in general not an easy task (see e.g.\ \citealp{rosenthal2011}).  
In the ideal situation, the proposal distribution for an independence sampler should be the same as the target distribution  \(\pi(S\,|\,\mathcal{D})\), leading to an independent sample from the target distribution with corresponding acceptance probability of one. Adaptive MCMC algorithms aim to sequentially update the proposal distribution during the algorithm based on the previous samples such that, in case of the independence sampler, the proposal becomes closer and closer to the target distribution as the MCMC sample grows (see e.g.\ \citealp{holden2009}, \citealp{giordani2010}). However, especially in high-dimensional situations, it is crucial that the adaptation of the proposal as well as sampling from the proposal can be carried out efficiently. For this reason, we restrict ourselves to proposal distributions which have an independent Bernoulli form:\ if \(S\in\mathcal{M}\) is the current model, then we propose model \(V\in\mathcal{M}\) with probability
\begin{equation}\label{eq:inde}
q(V\,|\,S;\bs r) \equiv q(V;\bs r) = \prod_{j\in V} r_j \prod_{j\in\mathcal{P}\setminus V} (1-r_j) \,, 
\end{equation} 
for some vector \(\bs r=(r_1,\dots,r_p)\in (0,1)^p\) of individual proposal probabilities.

\begin{algorithm}[]
\caption{Metropolized Adaptive Subspace (MAdaSub) algorithm}\label{algo:MCMC} 
\begin{flushleft} \textbf{Input:} \end{flushleft}
\vspace{-7mm}
\begin{itemize}
\item Data $\mathcal{D}=(\bs X,\bs y)$.
\vspace{-2mm}
\item (Approximate) kernel of posterior \(\pi(S\,|\,\mathcal{D})\propto \pi( \bs y \,|\,\bs X, S) \, \pi( S ) \) for \(S\in\mathcal{M}\).
\vspace{-2mm}
\item Vector of initial proposal probabilities \(\bs r^{(0)} = \left(r_1^{(0)},\dots,r_p^{(0)}\right)^T\in(0,1)^p\). 
\vspace{-2mm}
\item Parameters $L_j>0$  for \(j\in\mathcal{P}\), controlling the adaptation rate of the algorithm (e.g.\ $L_j=L=p$).
\vspace{-2mm}
\item Constant \(\epsilon\in(0,0.5)\) (chosen to be small, e.g.\ \(\epsilon\leq\frac{1}{p}\)).
\vspace{-2mm}
\item Number of iterations $T\in\N$.
\vspace{-2mm}
\item Starting point \(S^{(0)}\in\mathcal{M}\) (optional).
\end{itemize}

\vspace{-5mm}
\begin{flushleft} \textbf{Algorithm: } \end{flushleft}
\vspace{-5mm}
\begin{enumerate}
\vspace{-2mm}
\item[(1)] 
If starting point \(S^{(0)}\) not specified:
\vspace{-1mm}  
\par
\begingroup
\leftskip=0.5cm 
\noindent Sample $b_j^{(0)}\sim\text{Bernoulli}\left(r_j^{(0)}\right)$ independently for $j\in\mathcal{P}$. \\
Set \(S^{(0)} = \{j\in\mathcal{P};~b_j^{(0)}=1\}\).
\par
\endgroup 
\item[(2)] For $t=1,\dots,T$:
\vspace{-2mm}
\begin{enumerate}
\item[(a)] Truncate vector of proposal probabilities to 
\(\tilde{\bs r}^{(t-1)} = \left(\tilde{r}_1^{(t-1)},\dots,\tilde{r}_p^{(t-1)}\right)^T\), i.e. for \(j\in\mathcal{P}\) set 
\vspace{-1mm}
\begin{align*} \tilde{r}_j^{(t-1)} = \begin{cases} r_j^{(t-1)} &\mbox{, if } r_j^{(t-1)}\in[\epsilon,1-\epsilon] \,, \\
\epsilon & \mbox{, if } r_j^{(t-1)}<\epsilon \,,\\ 
1-\epsilon & \mbox{, if } r_j^{(t-1)}>1-\epsilon \,. \\ \end{cases} \end{align*}
\vspace{-3mm}
\item[(b)] Draw $b_j^{(t)}\sim\text{Bernoulli}\left(\tilde{r}_j^{(t-1)}\right)$ independently for $j\in\mathcal{P}$.
\vspace{-1mm}
\item[(c)] Set $V^{(t)}=\{j\in\mathcal{P};~b_j^{(t)}=1\}$.
\vspace{-1mm}
\item[(d)] Compute acceptance probability \[ \alpha^{(t)} 
= \min\left\{ \frac{\pi( \bs y \,|\,\bs X, V^{(t)}) \, \pi( V^{(t)} ) \, q(S^{(t-1)}; \tilde{\bs r}^{(t-1)})}  { \pi( \bs y \,|\,\bs X, S^{(t-1)} ) \, \pi( S^{(t-1)} ) \, q(V^{(t)}; \tilde{\bs r}^{(t-1)} )} ,\, 1\right\} \,.\]
\vspace{-3mm}
\item[(e)] Set \(S^{(t)} = \begin{cases} V^{(t)} &\mbox{, with probability } \alpha^{(t)}, \\
S^{(t-1)} & \mbox{, with probability } 1-\alpha^{(t)}. \\ \end{cases} \)
\vspace{-1mm}
\item[(f)] Update vector of proposal probabilities \(\bs r^{(t)} = \left(r_1^{(t)},\dots,r_p^{(t)}\right)^T\) via 
\vspace{-1mm}
\[r_j^{(t)}= \frac{L_j r_j^{(0)}+\sum_{i=1}^t \mathbbm{1}_{S^{(i)}}(j)}{L_j+t} \,,~~ j\in\mathcal{P} \, . \]
\end{enumerate}
\end{enumerate}

\vspace{-5mm}
\begin{flushleft} \textbf{Output:}  \end{flushleft}
\vspace{-7mm}

\begin{itemize}
\item Approximate sample \(S^{(b+1)},\dots,S^{(T)}\) from posterior distribution \(\pi(\cdot\,|\,\mathcal{D})\), after burn-in period of length \(b\).
\end{itemize}
\end{algorithm}

\subsection{Serial version of the MAdaSub algorithm}\label{sec:3.1}

The fundamental idea of the newly proposed MAdaSub algorithm (given below as Algorithm~\ref{algo:MCMC}) is to sequentially update the individual proposal probabilities according to the currently ``estimated'' posterior inclusion probabilities. In more detail, after initializing the vector of proposal probabilities \(\bs r^{(0)} = \left(r_1^{(0)},\dots,r_p^{(0)}\right)\in(0,1)^p\), the individual proposal probabilities \(r_j^{(t)}\) of variables \(X_j\) are updated after each iteration \(t\) of the algorithm, such that \(r_j^{(t)}\) finally converges to the actual posterior inclusion probability \(\pi_j = \pi(j\in S\,|\,\mathcal{D})\), as \(t\rightarrow\infty\) (see Corollary~\ref{cor:MAdaSub} below). Therefore, in the limit, we make use of the proposal 
\begin{equation} \label{eq:limitingprop}
q(V;\bs r^*) = \prod_{j\in V} \pi_j \prod_{j\in\mathcal{P}\setminus V} (1-\pi_j) ,~~ V\in\mathcal{M} \,, ~~ \text{with } \bs r^* = (\pi_1,\dots,\pi_p) \, ,
\end{equation}
which is the closest distribution (in terms of Kullback-Leibler divergence) to the actual target \(\pi(S\,|\,\mathcal{D})\), among all distributions of independent Bernoulli form~(\ref{eq:inde}) (see \citealp{clyde2011}). Note that the median probability model \citep{barbieri2004, barbieri2021}, defined by \(S_{\text{MPM}}=\{j\in\mathcal{P}:\pi_j\geq 0.5\}\), has the largest probability in the limiting proposal~(\ref{eq:limitingprop}) of MAdaSub, i.e.\ \(\argmax_{V\in\mathcal{M}} q(V;\bs r^*) = S_{\text{MPM}}\). Thus, MAdaSub can be interpreted as an adaptive algorithm which aims to adjust the proposal so that models in the region of the median probability model are proposed with increasing probability. 

For \(j\in\mathcal{P}\), the concrete update of \(r_j^{(t)}\) after iteration \(t\in\N\) is given by
\begin{equation} \label{eq:update}
r_j^{(t)} \,=\, \frac{L_j r_j^{(0)}+\sum_{i=1}^t \mathbbm{1}_{S^{(i)}}(j)}{L_j+t} \,=\, \left(1-\frac{1}{L_j+t}\right)r_j^{(t-1)} + \frac{\mathbbm{1}_{S^{(t)}}(j)}{L_j+t} \,,
\end{equation}
where, for \(j\in\mathcal{P}\), \(L_j>0\) are additional parameters controlling the adaptation rate of the algorithm and \(\mathbbm{1}_{S^{(i)}}\) denotes the indicator function of the set \(S^{(i)}\). If \(j\in S^{(t)}\) (i.e.\ \(\mathbbm{1}_{S^{(t)}}(j)=1\)), then variable~\(X_j\) is included in the sampled model in iteration \(t\) of the algorithm and the proposal probability \(r_j^{(t)}\) of \(X_j\) increases in the next iteration~\(t+1\); similarly, if \(j\notin S^{(t)}\) (i.e.\ \(\mathbbm{1}_{S^{(t)}}(j)=0\)), then the proposal probability decreases. The additional ``truncation'' step 2 (a) in the MAdaSub algorithm ensures that the truncated individual proposal probabilities \(\tilde{r}_j^{(t)}\), \(j\in\mathcal{P}\), are always included in the compact interval \(\mathcal{I}=[\epsilon,1-\epsilon]\), where \(\epsilon\in(0,0.5)\) is a pre-specified ``precision'' parameter. 
This adjustment simplifies the proof of the ergodicity of MAdaSub. 
Note that the mean size of the proposed model~\(V\) from the proposal \(q(V;\tilde{\bs r})\) in equation~(\ref{eq:inde}) with \(\tilde{\bs r}\in[\epsilon,1-\epsilon]^p\) is at least \(E|V|\geq\epsilon \times p\); thus, in practice we recommended to set \(\epsilon\leq\frac{1}{p}\), so that models of small size including the null model can be proposed with sufficiently large probability. On the other hand, if \(\epsilon\) is chosen to be very small, then the MAdaSub algorithm may take a longer time to convergence in case proposal probabilities of informative variables are close to \(\epsilon\approx 0\) during the initial burn-in period of the algorithm. Simulations and real data applications show that the choice \(\epsilon=\frac{1}{p}\) works well in all considered situations (see Sections~\ref{sec:sim} and~\ref{sec:realdata}).   

The updating scheme of the individual proposal probabilities is inspired by the AdaSub method proposed in \citet{staerk2018} and \citet{staerk2021} and can itself be motivated in a Bayesian way: since we do not know the true posterior inclusion probability \(\pi_j\) of variable \(X_j\) for \(j\in\mathcal{P}\), we place a beta prior on \(\pi_j\) with the following parametrization
\begin{equation}
\pi_j \sim \mathcal{B}e\left(L_j r_j^{(0)},\, L_j \left(1- r_j^{(0)}\right)\right) \, ,
\label{eq:betaprior}
\end{equation} 
where \(r_j^{(0)}=E[\pi_j]\) is the prior expectation of \(\pi_j\) and \(L_j>0\) controls the variance of \(\pi_j\) via
\begin{equation}
\text{Var}(\pi_j) = \frac{1}{L_j+1} \times r_j^{(0)}\, \left(1-r_j^{(0)}\right) \,.
\end{equation}
If \(L_j\rightarrow 0\), then \(\text{Var}(\pi_j) \rightarrow r_j^{(0)} \, (1-r_j^{(0)}) \), which is the variance of a Bernoulli random variable with mean \(r_j^{(0)}\). If \(L_j\rightarrow \infty\), then \(\text{Var}(\pi_j) \rightarrow 0\). 
Now, one might view the samples \(S^{(1)},\dots,S^{(t)}\) obtained after \(t\) iterations of MAdaSub as ``new'' data and interpret the information learned about \(\pi_j\) as \(t\) approximately independent Bernoulli trials, where \(j\in S^{(i)}\) corresponds to ``success'' and \(j\notin S^{(i)}\) corresponds to ``failure''. Then the (pseudo) posterior of \(\pi_j\) after iteration \(t\) of the algorithm is given by 
\begin{equation}
\pi_j \,|\,S^{(1)},\dots,S^{(t)}  \sim \mathcal{B}e\left( L_j r_j^{(0)} + \sum_{i=1}^t \mathbbm{1}_{S^{(i)}}(j),\, L_j (1- r_j^{(0)}) + \sum_{i=1}^t \mathbbm{1}_{\mathcal{P}\setminus S^{(i)}}(j) \right) \,  ,
\label{eq:posterior}
\end{equation} 
with posterior expectation 
\begin{equation}\label{eq:update_again}
E(\pi_j \,|\, S^{(1)},\dots,S^{(t)}) = \frac{L_j r_j^{(0)}+\sum_{i=1}^t \mathbbm{1}_{S^{(i)}}(j)}{L_j+t} = r_j^{(t)} \,
\end{equation}
and posterior variance 
\begin{equation}\label{eq:updated_var}
\text{Var}(\pi_j\,|\, S^{(1)},\dots,S^{(t)}) = \frac{1}{L_j+t+1} \times r_j^{(t)}\, \left(1-r_j^{(t)}\right) \,.
\end{equation}

The interpretation of~\(r_j^{(0)}\) as the prior expectation for the posterior inclusion probability~\(\pi_j\) motivates the choice of \(r_j^{(0)}= \pi(j\in S)\) as the actual prior inclusion probability of variable~\(X_j\). 
If no particular prior information about specific variables is available, but the prior expected model size is equal to $q\in(0,p)$, then we recommend to set \(r_j^{(0)}=\frac{q}{p}\) and  \(L=L_j=p\) for all \(j\in\mathcal{P}\), corresponding to the prior~\(\pi_j\sim\mathcal{B}e(q,p-q)\) in equation~\eqref{eq:betaprior}. In this particular situation, equation~(\ref{eq:update_again}) reduces to
\begin{equation}\label{eq:update_no}
E(\pi_j \,|\, S^{(1)},\dots,S^{(t)}) = \frac{q+\sum_{i=1}^t \mathbbm{1}_{S^{(i)}}(j)}{p+t} = r_j^{(t)} \,.
\end{equation}

Even though it seems natural to choose the parameters \(r_j^{(0)}\) and \(L_j\) of MAdaSub as the respective prior quantities, this choice is not imperative. While the optimal choices of these parameters generally depend on the setting, various simulated and real data applications of MAdaSub indicate that choosing \(r_j^{(0)}=\frac{q}{p}\) with \(q\in[2,10]\) and \(L_j\in[p/2,2p]\) for \(j\in\mathcal{P}\) yields a stable algorithm with good mixing in sparse high-dimensional set-ups irrespective of the actual prior (see Sections~\ref{sec:sim} and~\ref{sec:realdata}). Furthermore, if one has already run and stopped the MAdaSub algorithm after a certain number of iterations~\(T\), then one can simply restart the algorithm with the already updated parameters~\(r_j^{(T)}\) and~\(L_j+T\) (compare equation~(\ref{eq:updated_var})) as new starting values for the corresponding parameters. 

Using general results for adaptive MCMC algorithms by \citet{roberts2007}, we show that MAdaSub is ergodic despite its continuing adaptation. 

\begin{theorem}\label{thm:MAdaSub}
The MAdaSub algorithm (Algorithm~\ref{algo:MCMC}) is ergodic for all choices of \(\bs r^{(0)}\in(0,1)^p\), \(L_j>0\) and \(\epsilon\in(0,0.5)\) and fulfils the weak law of large numbers. 
\end{theorem}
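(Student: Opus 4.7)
The strategy is to invoke the adaptive MCMC ergodicity framework of \citet{roberts2007}, which reduces the problem to verifying two conditions on the family of transition kernels $\{P_{\tilde{\bs r}} : \tilde{\bs r} \in [\epsilon, 1-\epsilon]^p\}$ generated by MAdaSub: \emph{diminishing adaptation} (the successive kernels $P_{\tilde{\bs r}^{(t)}}$ become progressively closer in total variation) and \emph{containment} (the convergence times of these kernels are bounded in probability). The crucial simplification here is that the model space $\mathcal{M}$ is finite with $|\mathcal{M}| = 2^p$, and the truncation step 2(a) confines the adaptation parameter to the compact cube $\mathcal{I}^p = [\epsilon,1-\epsilon]^p$, where both $q(V;\tilde{\bs r})$ and the resulting acceptance probability are well-behaved.

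For containment I would establish the stronger property of \emph{simultaneous uniform ergodicity}. Because $\tilde{r}_j \in [\epsilon,1-\epsilon]$, every proposal mass satisfies $q(V;\tilde{\bs r}) \geq \epsilon^p$ uniformly in $V \in \mathcal{M}$ and in $\tilde{\bs r}$. Applying the Mengersen--Tweedie characterization of independence Metropolis--Hastings samplers, the ratio $\pi(V\,|\,\mathcal{D})/q(V;\tilde{\bs r})$ is then bounded above by $\epsilon^{-p}$, so each kernel $P_{\tilde{\bs r}}$ is uniformly ergodic with geometric rate $(1-\epsilon^p)^n$. Since this rate is independent of $\tilde{\bs r}$, containment follows immediately.

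For diminishing adaptation, the update formula (\ref{eq:update}) yields the deterministic bound $|r_j^{(t)} - r_j^{(t-1)}| \leq 1/(L_j+t)$, and since the truncation map $x \mapsto \max(\epsilon,\min(1-\epsilon,x))$ is $1$-Lipschitz, the same bound holds for $\tilde{\bs r}^{(t)}$. On the compact cube $\mathcal{I}^p$ the proposal density $q(V;\tilde{\bs r})$, as a polynomial of degree $p$, is Lipschitz in $\tilde{\bs r}$ uniformly in $V$; combined with the lower bound $q(V;\tilde{\bs r}) \geq \epsilon^p$ and a boundedness argument for the posterior (on its support $\pi(\cdot\,|\,\mathcal{D})$ is positive on a finite set), the acceptance probability $\alpha^{(t)}$ is Lipschitz in $\tilde{\bs r}$ as well. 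Summing over the finite state space then yields
\[
\sup_{S\in\mathcal{M}} \bigl\| P_{\tilde{\bs r}^{(t)}}(S,\cdot) - P_{\tilde{\bs r}^{(t-1)}}(S,\cdot) \bigr\|_{\mathrm{TV}} \;\leq\; \frac{C}{\min_j L_j + t} \longrightarrow 0,
\]
for an explicit constant $C=C(p,\epsilon,\pi,\mathcal{D})$, which in particular gives diminishing adaptation in probability.

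Together, these two conditions invoke Theorem~5 of \citet{roberts2007} to conclude $\|\mathcal{L}(S^{(t)}) - \pi(\cdot\,|\,\mathcal{D})\|_{\mathrm{TV}} \to 0$, and the WLLN follows from the corresponding law of large numbers for adaptive MCMC (e.g.\ Theorem~23 of the same paper), since simultaneous uniform ergodicity is more than enough to validate it. The main technical obstacle I anticipate is the uniform Lipschitz bookkeeping in the diminishing adaptation step: one must track how the ratio in the acceptance probability depends on $\tilde{\bs r}$ both through the proposal of the newly drawn model $V^{(t)}$ and through the reverse proposal from $S^{(t-1)}$. This is purely a careful bookkeeping exercise thanks to the lower bound $q(\cdot;\tilde{\bs r})\geq \epsilon^p$, but it is the only nontrivial estimate in the proof.
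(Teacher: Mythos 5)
Your proposal is correct and follows the same overall architecture as the paper's proof: both verify the two Roberts--Rosenthal conditions (simultaneous uniform ergodicity and diminishing adaptation) for the family of kernels indexed by the truncated cube $[\epsilon,1-\epsilon]^p$, and both hinge on the same two elementary facts, namely the uniform lower bound $q(V;\tilde{\bs r})\ge\epsilon^p$ and the $O(1/t)$ increment bound on the proposal probabilities. The differences are in execution. For simultaneous uniform ergodicity the paper computes a direct one-step minorization $P(S'\,|\,S;\tilde{\bs r})\ge \epsilon^{2p}\min_{S}\pi(S\,|\,\mathcal{D})$, showing that $\mathcal{M}$ is a $1$-small set and then invoking a generic small-set theorem; you instead exploit the independence-sampler structure via the Mengersen--Tweedie criterion, bounding the importance ratio $\pi(V\,|\,\mathcal{D})/q(V;\tilde{\bs r})\le\epsilon^{-p}$ to get the uniform rate $(1-\epsilon^p)^n$. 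Both are valid, and yours is arguably the more natural route for an independence sampler. For diminishing adaptation the paper argues qualitatively, propagating almost-sure convergence of the increments through the products defining $q$ and $\alpha$ via a separate product-convergence lemma, whereas you track explicit Lipschitz constants on the compact cube to obtain a deterministic $O(1/(\min_j L_j+t))$ bound on the kernel differences; your version is quantitative and gives sure rather than merely in-probability diminishing adaptation, with the same bookkeeping burden (the forward and reverse proposal terms in the acceptance ratio) that the paper handles through its lemma. The only omission is that you never record the trivial but formally required fact that each fixed kernel $P_{\tilde{\bs r}}$ leaves $\pi(\cdot\,|\,\mathcal{D})$ stationary, being an ordinary Metropolis--Hastings kernel; the paper states this explicitly before applying the Roberts--Rosenthal theorem.
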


The proof of Theorem~\ref{thm:MAdaSub} can be found in Section~A of the Supplement, where it is shown that MAdaSub satisfies both the simultaneous uniform ergodicity condition and the diminishing adaptation condition (cf.\ \citealp{roberts2007}). As an immediate consequence of Theorem~\ref{thm:MAdaSub} we obtain the following important result. 

\begin{corollary}\label{cor:MAdaSub}
For all choices of \(\bs r^{(0)}\in(0,1)^p\), \(L_j>0\) and \(\epsilon\in(0,0.5)\), the proposal probabilities \(r_j^{(t)}\) of the explanatory variables \(X_j\) in MAdaSub converge (in probability) to the respective posterior inclusion probabilities \(\pi_j=\pi(j\in S\,|\,\mathcal{D})\), i.e. for all \(j\in\mathcal{P}\) it holds that 
\( r_j^{(t)} \overset{\text{P}}{\rightarrow} \pi_j \) as \(t\rightarrow\infty\).
\end{corollary}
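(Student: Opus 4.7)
The plan is to reduce the claim to a direct application of the weak law of large numbers established in Theorem~\ref{thm:MAdaSub}. The key observation is that the update formula for $r_j^{(t)}$ can be split into a deterministic remainder that vanishes and a Cesàro-type empirical average of indicator functions along the trajectory of the chain. Specifically, for every $t\geq 1$,
\[
r_j^{(t)} \;=\; \frac{L_j\, r_j^{(0)}}{L_j + t} \;+\; \frac{t}{L_j+t}\cdot\frac{1}{t}\sum_{i=1}^{t} \mathbbm{1}_{S^{(i)}}(j).
\]
The first summand is deterministic and tends to $0$ as $t\to\infty$, while the prefactor $t/(L_j+t)$ converges deterministically to $1$. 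Therefore it suffices to establish convergence in probability of the empirical average $\bar A_t(j):=\frac{1}{t}\sum_{i=1}^{t}\mathbbm{1}_{S^{(i)}}(j)$ to $\pi_j$.

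Next I would invoke the conclusion of Theorem~\ref{thm:MAdaSub}, which guarantees that the MAdaSub chain satisfies the weak law of large numbers for bounded measurable functionals on the (finite) model space $\mathcal{M}$. Applying this with the test function $f:\mathcal{M}\to\{0,1\}$ defined by $f(S)=\mathbbm{1}_{S}(j)$, which is bounded and measurable, yields
\[
\bar A_t(j) \;\xrightarrow{\;\mathrm{P}\;}\; \sum_{S\in\mathcal{M}} \mathbbm{1}_S(j)\,\pi(S\,|\,\mathcal{D}) \;=\; \pi(j\in S\,|\,\mathcal{D}) \;=\; \pi_j.
\]

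Finally I would combine the pieces using Slutsky's theorem (or an elementary direct argument, since the two non-random factors above converge in $\R$ and $|\bar A_t(j)|\leq 1$): the product of a deterministic sequence tending to $1$ with a sequence tending to $\pi_j$ in probability, plus a deterministic sequence tending to $0$, converges in probability to $\pi_j$. This establishes $r_j^{(t)}\xrightarrow{\mathrm{P}}\pi_j$ for every $j\in\mathcal{P}$.

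There is essentially no hard step here once Theorem~\ref{thm:MAdaSub} is in hand; the only point requiring care is to confirm that the weak law of large numbers from the theorem applies to the specific bounded functional $\mathbbm{1}_{\cdot}(j)$ and that the target expectation equals the posterior inclusion probability $\pi_j$, which is immediate from the definition of $\pi_j$ as $\pi(j\in S\,|\,\mathcal{D})=\sum_{S\ni j}\pi(S\,|\,\mathcal{D})$.
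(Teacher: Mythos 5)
Your proof is correct and follows essentially the same route as the paper's: both apply the weak law of large numbers from Theorem~\ref{thm:MAdaSub} to the indicator functional $\mathbbm{1}_{S}(j)$ and then absorb the vanishing prior term $L_j r_j^{(0)}/(L_j+t)$ and the factor $t/(L_j+t)\to 1$. Your version merely makes the final deterministic bookkeeping slightly more explicit than the paper does.
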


\subsection{Comparison to related adaptive approaches}\label{sec:comp}

In this section we conceptually compare the proposed MAdaSub algorithm (Algorithm~\ref{algo:MCMC}) with other approaches for high-dimensional Bayesian variable selection, focusing on adaptive MCMC algorithms most closely related to the new algorithm (see Section~D of the Supplement for details on further related methods). 

In a pioneering work, \citet{nott2005} propose an adaptive sampling algorithm for Bayesian variable selection based on a  Metropolized Gibbs sampler, showing empirically that the adaptive algorithm outperforms different non-adaptive algorithms in terms of efficiency per iteration. However, since their approach requires the computation of inverses of estimated covariance matrices, it does not scale well to very high-dimensional settings.   
Recently, several variants and extensions of the original adaptive MCMC sampler of \citet{nott2005} have been developed, including an adaptive Metropolis-Hastings algorithm by \citet{lamnisos2013}, where the expected number of variables to be changed by the proposal is adapted during the algorithm. \citet{zanella2019} propose a tempered Gibbs sampling algorithm with adaptive choices of components to be updated in each iteration. Furthermore, different individual adaptation algorithms have been developed in \citet{griffin2014} as well as in the follow-up works of \citet{griffin2018} and \citet{wan2021}, which are closely related to the proposed MAdaSub algorithm. 
These strategies are based on adaptive Metropolis-Hastings algorithms, where the employed proposal distributions are of the following form: if \(S\in\mathcal{M}\) is the current model, then the probability of proposing the model \(V\in\mathcal{M}\) is given by
\begin{equation} \label{eq:propalt}
\tilde{q}(V\,|\,S;\bs \eta) =  \prod_{j\in V\setminus S} A_j  \prod_{j\in S\setminus V} D_j 
\prod_{j\in \mathcal{P}\setminus( S\cup V)} (1-A_j)  \prod_{j\in  S\cap V} (1-D_j) \, ,
\end{equation}
where \(\bs\eta=(A_1,\dots,A_p,D_1,\dots,D_p)^T\in(0,1)^{2p}\) is a vector of tuning parameters with the following interpretation: For \(j\in\mathcal{P}\), \(A_j\) is the probability of adding variable~\(X_j\) if it is not included in the current model \(S\) and \(D_j\) is the probability of deleting variable \(X_j\) if it is included in the current model \(S\). An important difference is that the adaptation strategies in \citet{griffin2018} specifically aim to guard against low acceptance rates of the proposal~(\ref{eq:propalt}), while MAdaSub aims at obtaining a global independent proposal with the largest possible acceptance rate, focusing on regions close to the median probability model. Furthermore, the adaptation of the individual proposal probabilities in MAdaSub can be motivated in a Bayesian way, leading to a natural parallel implementation of the algorithm with an efficient joint updating scheme for the shared adaptive parameters (see Section~\ref{sec:parallel}). Finally, in contrast to MAdaSub, \cite{griffin2018} make use of Rao-Blackwellized estimates of posterior inclusion probabilities to speed up convergence. 

\citet{schafer2013} develop sequential Monte Carlo algorithms (cf.~\citealp{south2019}) 
using model proposals which directly account for the non-independent posterior inclusion of covariates. In contrast, MAdaSub is an adaptive MCMC algorithm which is based on independent Bernoulli proposals. While similar extensions of MAdaSub might be desirable to better approximate the posterior distribution, this may come at the price of a larger computational cost for updating and sampling from the proposal. The simple independent Bernoulli proposals in MAdaSub can also be viewed as mean-field variational approximations to the full posterior model distribution. Despite its connection with variational Bayes approaches (e.g.~\citealp{carbonetto2012}; \citealp{ormerod2017}), MAdaSub samples from the full posterior distribution and the accuracy of the approximation only affects the efficiency of the sampler, as final acceptance rates are expected to be smaller for larger distances between the posterior and the closest independent Bernoulli proposal (cf.~\citealp{neklyudov2019}). Empirical results for MAdaSub (see Sections~\ref{sec:sim} and~\ref{sec:realdata}) indicate that even the simple independent Bernoulli proposals yield good mixing and sufficiently large acceptance rates in various settings.

Finally, MAdaSub is an extension of the Adaptive Subspace (AdaSub) method \citep{staerk2021}, a stochastic search algorithm aiming to identify the best model according to a particular selection criterion (such as the EBIC) by adaptively solving low-dimensional sub-problems of the original problem. While the purpose of AdaSub is to obtain the solution to an optimization problem, its Metropolized version MAdaSub constitutes an adaptive MCMC algorithm which samples from the full posterior model distribution. Despite this difference, the adaptation scheme of AdaSub for the covariates' inclusion probabilities in the sub-problems can be similarly motivated in a Bayesian way (cf.\ \citealp{staerk2018}). The adaptation in AdaSub and MAdaSub is also related to Thompson sampling for multi-armed bandits in reinforcement learning, which has recently been investigated in the context of non-parametric Bayesian variable selection~\citep{liu2021}. In contrast to MAdaSub, Thompson Variable Selection~(TVS) does not provide samples from the posterior distribution but is designed to minimize the regret (i.e.\ the difference between optimal and actual rewards); as a consequence, the sampling probabilities in TVS are not guaranteed to converge to the posterior inclusion probabilities.

\section{Parallelization of the MAdaSub algorithm}\label{sec:parallel}

In this section we present a parallel version of the MAdaSub algorithm which aims at increasing the computational efficiency and accelerating the convergence of the chains. The simplest approach to parallelization would be to independently run the MAdaSub algorithm in parallel on each of \(K\in\N\) different workers, yielding \(K\) individual chains which, in the limit, sample from the posterior model distribution (see Theorem~\ref{thm:MAdaSub}). However, it is desirable that the information learned about the adaptive parameters can be shared efficiently between the different chains, so that the convergence of the adaptive parameters to their optimal values can be accelerated, leading to a faster convergence of the chains to their common limiting distribution. 

We propose a parallel version of MAdaSub, where the workers sample individual MAdaSub chains in parallel, but the acquired information is exchanged periodically between the chains and the adaptive proposal probabilities are updated together (see Algorithm~2 in Section~B of the Supplement for full algorithmic details). More specifically, let \(S^{(k,1)},\dots,S^{(k,T)}\) denote the models sampled by MAdaSub (see Algorithm~\ref{algo:MCMC}) for the first \(T\)~iterations on worker \(k\), for \(k\in\{1,\dots,K\}\). Then, for each worker \(k\in\{1,\dots,K\}\), we define the jointly updated proposal probabilities after the first round~(\(m=1\)) of \(T\)~iterations by 
\begin{equation} \bar{ r}_j^{(k,1)} = \frac{ L_j^{(k)} r_j^{(k,0)} + \sum_{t=1}^{T} \sum_{l=1}^K \mathbbm{1}_{S^{(l,t)}}(j) }{ L_j^{(k)} + TK } ,~~j\in\mathcal{P} \,,\end{equation}
where \(r_j^{(k,0)}\) denotes the initial proposal probability for variable \(X_j\) and \(L_j^{(k)}\) the corresponding adaptation parameter (both can be different across the chains). 

After the joint update, each MAdaSub chain is resumed (with \(\bar{ r}_j^{(k,1)}\) as initial proposal probabilities and \(L^{(k)}_j + TK\) as initial prior variance parameters for \(j\in\mathcal{P}\)) and is run independently on each of the workers for \(T\) additional iterations in a second round (\(m=2\)); then the proposal probabilities are updated jointly again to \(\bar{ r}_j^{(k,2)}\), and so on (up to \(m=R\) rounds in Algorithm~2 of the Supplement). The joint updates of the proposal probabilities after \(m\in\N\) rounds of \(T\) iterations are given by 
\begin{equation} \bar{ r}_j^{(k,m)} = \frac{ L_j^{(k)} r_j^{(k,0)} + \sum_{t=1}^{mT} \sum_{l=1}^{K} \mathbbm{1}_{S^{(l,t)}}(j) }{ L_j^{(k)} + mTK } ,~~k\in\{1,\dots,K\}, ~ j\in\mathcal{P} . \label{eq:jointupdate}\end{equation}

Similarly to the serial version of MAdaSub, the adaptive learning of its parallel version can be naturally motivated in a Bayesian way: 
each worker \(k=1,\dots,K\) can be thought of as an individual subject continuously updating its prior belief about the true posterior inclusion probability \(\pi_j\) of variable \(X_j\) through new information from its individual chain; additionally, after a period of \(T\) iterations the subject updates its prior belief also by obtaining new information from the \(K-1\) other subjects. If the (possibly different) priors of subjects \(k=1,\dots,K\) on \(\pi_j\) are 
\begin{equation}
\pi_j \sim \mathcal{B}e\left(L_j^{(k)} r_j^{(k,0)},\, L_j^{(k)} \left(1- r_j^{(k,0)}\right)\right), \, j\in\mathcal{P}\,,
\end{equation}
where \(r_j^{(k,0)}=E[\pi_j]\) is the prior expectation of subject \(k\) about \(\pi_j\) and \(L_j^{(k)}>0\) controls its prior variance, then the (pseudo) posterior of subject \(k\) about \(\pi_j\) after \(m\) rounds of \(T\) iterations of the parallel MAdaSub algorithm 
is given by (compare to equation~(\ref{eq:posterior})) 
\begin{align}
\pi_j \,\big|\,S^{(1,1)},\dots,S^{(k,mT)}  \sim \mathcal{B}e\bigg( &  L_j^{(k)} r_j^{(k,0)} + \sum_{i=1}^{mT} \sum_{l=1}^K \mathbbm{1}_{S^{(l,i)}}(j) ,\, \nonumber  \\ 
 & L_j^{(k)} (1- r_j^{(k,0)}) + \sum_{i=1}^{mT} \sum_{l=1}^K  \mathbbm{1}_{\mathcal{P}\setminus S^{(l,i)}}(j) \bigg) 
\end{align}
with posterior expectation (compare to equation~(\ref{eq:update_again}))
\begin{equation}
E(\pi_j \,|\, S^{(1,1)},\dots,S^{(k,mT)}) =  \bar{r}_j^{(k,m)} \,,
\end{equation}
corresponding to the joint update in equation~(\ref{eq:jointupdate}).

Although the individual chains in the parallel MAdaSub algorithm make use of the information from all the other chains in order to update the proposal parameters, the ergodicity of the chains is not affected. 

\begin{theorem}\label{thm:parallel}
Consider the parallel version of MAdaSub (see Algorithm~2 in the Supplement). Then, for each worker \(k\in\{1,\dots,K\}\) and all choices of \(\bs r^{(k,0)}\in(0,1)^p\), \(L^{(k)}_j>0\), \(j\in\mathcal{P}\) and \(\epsilon\in(0,0.5)\), each induced chain \(S^{(k,0)},S^{(k,1)},\dots\) of the workers \(k=1,\dots,K\) is ergodic and fulfils the weak law of large numbers. 
\end{theorem}

\begin{corollary}\label{cor:Mparallel} 
For each worker \(k\in\{1,\dots,K\}\) and all choices of \(\bs r^{(k,0)}\in(0,1)^p\), \(L^{(k)}_j>0\), \(j\in\mathcal{P}\) and \(\epsilon\in(0,0.5)\), the proposal probabilities \(\bar{r}_j^{(k,m)}\) of the explanatory variables~\(X_j\) converge (in probability) to the respective posterior inclusion probabilities \(\pi_j=\pi(j\in S\,|\,\mathcal{D})\), i.e. for all \(j\in\mathcal{P}\) and \(k=1,\dots,K\) it holds that 
\( \bar{r}_j^{(k,m)} \overset{\text{P}}{\rightarrow} \pi_j \) as \(m\rightarrow\infty\).
\end{corollary}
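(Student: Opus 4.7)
The plan is to obtain Corollary~\ref{cor:Mparallel} as a direct consequence of Theorem~\ref{thm:parallel}, mirroring how Corollary~\ref{cor:MAdaSub} follows from Theorem~\ref{thm:MAdaSub}. First I would rewrite the joint update~(\ref{eq:jointupdate}) as a convex combination of two terms:
\begin{equation*}
\bar{r}_j^{(k,m)} = \frac{L_j^{(k)}}{L_j^{(k)} + mTK}\, r_j^{(k,0)} \;+\; \frac{mTK}{L_j^{(k)} + mTK}\cdot\frac{1}{K}\sum_{l=1}^K \frac{1}{mT}\sum_{t=1}^{mT} \mathbbm{1}_{S^{(l,t)}}(j).
\end{equation*}
The first summand is deterministic and tends to $0$ as $m \to \infty$, while the deterministic prefactor of the second summand tends to $1$. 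It therefore suffices to show that the inner double average converges in probability to $\pi_j$.

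Next I would invoke Theorem~\ref{thm:parallel}, which guarantees that for every worker $l\in\{1,\dots,K\}$ the induced chain $S^{(l,0)}, S^{(l,1)}, \dots$ is ergodic with stationary distribution $\pi(\,\cdot\,|\,\mathcal{D})$ and satisfies the weak law of large numbers. Applying the WLLN to the bounded functional $f_j(S) = \mathbbm{1}_{S}(j)$ yields, for each fixed $l$,
\begin{equation*}
\frac{1}{mT}\sum_{t=1}^{mT} \mathbbm{1}_{S^{(l,t)}}(j) \; \overset{\mathrm{P}}{\longrightarrow} \; \sum_{S\in\mathcal{M}} \mathbbm{1}_{S}(j)\,\pi(S\,|\,\mathcal{D}) = \pi_j \qquad \text{as } m\to\infty.
\end{equation*}
Since $K$ is finite, averaging these $K$ convergent sequences preserves convergence in probability, so the double average converges in probability to $\pi_j$ as well.

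Combining the two pieces via Slutsky's theorem (applied to a finite linear combination of random variables with bounded deterministic coefficients converging to $0$ and $1$, respectively) gives $\bar{r}_j^{(k,m)} \overset{\mathrm{P}}{\to} \pi_j$ for every $j\in\mathcal{P}$ and every worker $k$. I do not anticipate a genuine technical obstacle here, since Theorem~\ref{thm:parallel} has already done the substantive work; the only minor care needed is to note that the WLLN supplied by Theorem~\ref{thm:parallel} concerns averages along a single chain, so the passage to the pooled average uses only the (trivial) fact that a finite sum of sequences converging in probability converges to the sum of the limits, and not any joint limit theorem across workers.
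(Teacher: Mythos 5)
Your proposal is correct and follows essentially the same route as the paper: both invoke the weak law of large numbers from Theorem~\ref{thm:parallel} for each chain separately to get \(\frac{1}{mT}\sum_{t=1}^{mT}\mathbbm{1}_{S^{(l,t)}}(j)\overset{\text{P}}{\rightarrow}\pi_j\), and then rewrite \(\bar{r}_j^{(k,m)}\) so that it is, up to vanishing deterministic terms, the average over the \(K\) workers of these per-chain averages. The paper's algebraic rearrangement differs only cosmetically from your convex-combination decomposition, and your explicit remark that only a finite sum of sequences converging in probability is needed (no joint limit theorem across workers) is exactly the point the paper uses implicitly.
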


Thus, the same convergence results hold for the parallel version as for the serial version of MAdaSub. The benefit of the parallel algorithm is that the convergence of the proposal probabilities against the posterior inclusion probabilities can be accelerated via the exchange of information between the parallel chains, so that the MCMC chains can converge faster against the full posterior distribution. There is a practical trade-off between the effectiveness regarding the joint update for the proposal probabilities and the efficiency regarding the communication between the different chains. If the number of rounds~\(R\) is chosen to be small with a large number of iterations~\(T\) per round, the available information from the multiple chains is not fully utilized during the algorithm; however, if the number of rounds~\(R\) is chosen to be large with a small number of iterations~\(T\) per round, then the computational cost of communication between the chains increases and may outweigh the benefit of the accelerated convergence of the proposal probabilities. 
If~\(T_{\text{max}}\) denotes the maximum number of iterations, we observe that choosing the number of rounds \(R\in[10,100]\) with \(T=T_{\text{max}}/R\) iterations per round works well in practice (see Sections~\ref{sec:sim} and~\ref{sec:realdata} as well as Table~G.4 of the Supplement). 

\section{Simulated data applications}\label{sec:sim}

\subsection{Illustrative example}
 
We first illustrate the adaptive behaviour of the serial MAdaSub algorithm (Algorithm~\ref{algo:MCMC}) in a relatively low-dimensional setting. 
In particular, we consider an illustrative simulated dataset \(\mathcal{D}=(\bs X,\bs y)\) with sample size \(n=60\) and \(p=20\) explanatory variables, by generating \( \bs X=(X_{i,j})\in\mathbb{R}^{n\times p}\) with \(i\)-th row \(\bs X_{i,*}\sim\mathcal{N}_p(\bs 0,\bs \Sigma)\), where \(\bs \Sigma=(\Sigma_{i,j})\in\R^{p \times p}\) is the covariance matrix with entries \(\Sigma_{k,l}=\rho^{|k-l|}\), \(k,l\in\{1,\dots,p\}\), corresponding to a Toeplitz correlation structure with \(\rho=0.9\).  The true vector of regression coefficients is considered to be \[\bs \beta_0=(0.4,0.8,1.2,1.6,2.0,0,\dots,0)^T\in\mathbb{R}^p\,,\] with active set \(S_0=\{1,\dots,5\}\). The response \(\bs y=(y_1,\dots,y_n)^T\) is then simulated from the normal linear model via \(y_i\overset{\text{ind.}}{\sim} N(\bs X_{i,*} \bs \beta_0, 1)\), \(i=1,\dots,n\). 
We employ the g-prior with \(g=n\) and an independent Bernoulli model prior with inclusion probability \(\omega=0.5\), resulting in a uniform prior over the model space (see Remark~\ref{remark:conjugate}). In the MAdaSub algorithm we set \(r_j^{(0)}=\frac{1}{2}\) for \(j\in\mathcal{P}\), i.e.\ we use the prior inclusion probabilities as initial proposal probabilities.  
We first consider the choice \(L_j=p\) (for \(j\in\mathcal{P}\)) for the variance parameters of MAdaSub, corresponding to  equation~(\ref{eq:update_no}). Furthermore, we set \(\epsilon=\frac{1}{p}\) and run the MAdaSub algorithm for \(T=20{,}000\) iterations. To compare the results of MAdaSub with the true posterior model distribution, we have also conducted a full model enumeration using the Bayesian Adaptive Sampling~(BAS) algorithm, which is implemented in the R-package \texttt{BAS}~\citep{clyde2017}.

\begin{figure}[!t]\centering
\includegraphics[width=\textwidth]{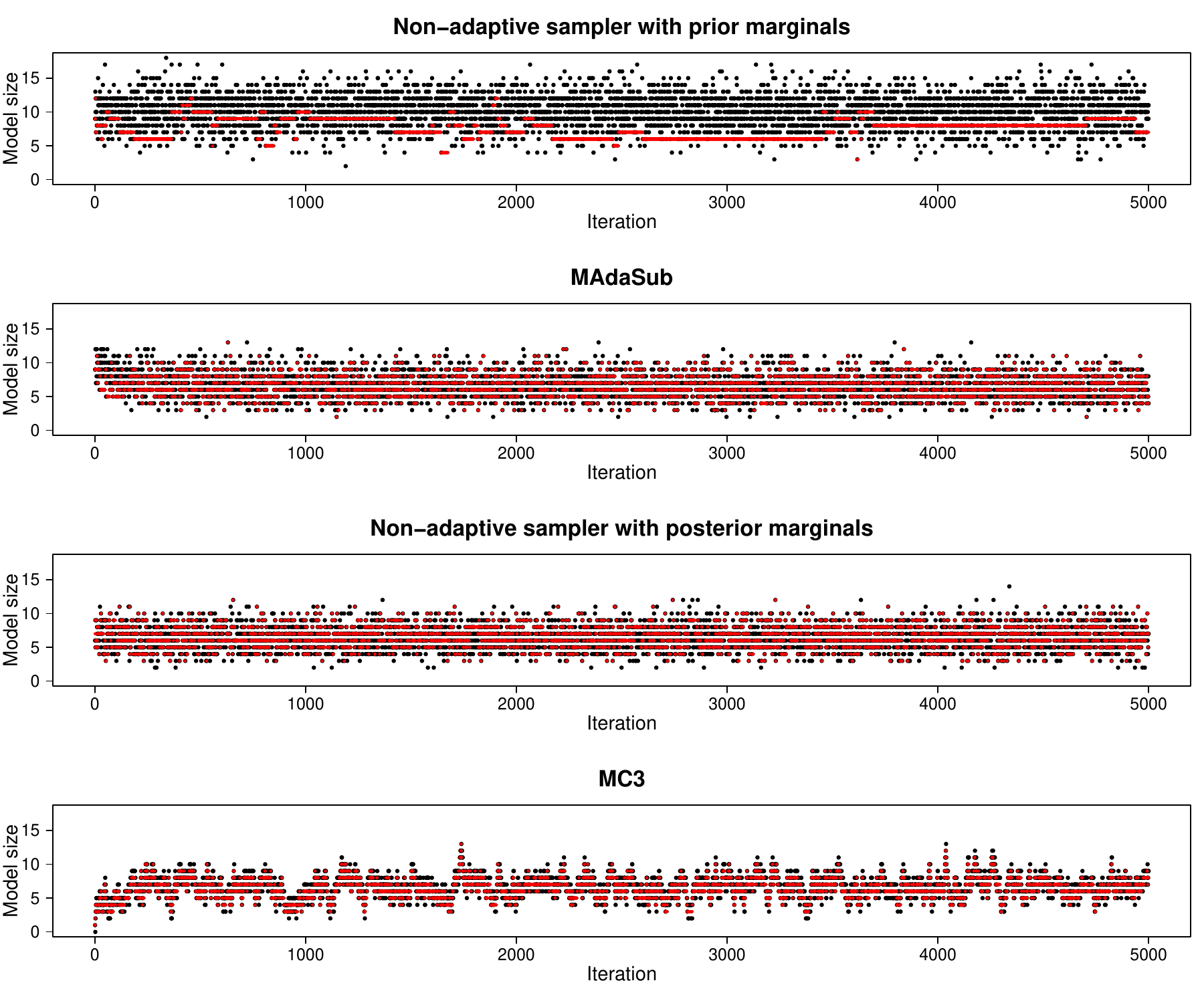} 
\caption{\label{fig:lowdim_sizes}Illustrative example with g-prior. \small Evolution of the sizes \(|V^{(t)}|\) of the proposed models (black) and of the sizes \(|S^{(t)}|\) of the sampled models (red) along the first 5,000 iterations~(\(t\)) for non-adaptive sampler with prior marginals as proposal probabilities
, for MAdaSub (with \(L_j=p\)), for non-adaptive sampler with posterior marginals as proposal probabilities and for local add-delete \(\text{MC}^3\) sampler (from top to bottom).} 
\end{figure}

To illustrate the efficient adaptation of MAdaSub, we present comparisons with independent Metropolis-Hastings algorithms where the individual proposal probabilities are \textit{not} adapted during the algorithm, i.e.\ we set \(r_j^{(t)}=r_j^{(0)}\) for all \(t\in\N\) and \(j\in\mathcal{P}\). In particular, we consider the choice \(r_j^{(t)}=r_j^{(0)}=0.5\), corresponding to the initial proposal distribution in MAdaSub, and the choice \(r_j^{(t)}=r_j^{(0)}=\pi(j\in S\,|\,\mathcal{D})\), corresponding to the targeted proposal distribution, which is, as stated above, the closest independent Bernoulli proposal to the target \(\pi(\cdot\,|\,\mathcal{D})\) in terms of Kullback-Leibler divergence (\citealp{clyde2011}). Note that the non-adaptive independence sampler with posterior inclusion probabilities as proposal probabilities (\(r_j^{(t)}=\pi(j\in S\,|\,\mathcal{D})\)) is only considered as a benchmark and cannot be used in practice, since the true posterior probabilities are initially unknown and are to be estimated by the MCMC algorithms. Furthermore, we also present comparisons with a standard local ``Markov chain Monte Carlo model composition'' (\(\text{MC}^3\)) algorithm \citep{madigan1995}, which in each iteration proposes to delete or add a single variable to the current model. 

\begin{figure}[!t]\centering
\includegraphics[width=0.9\textwidth]{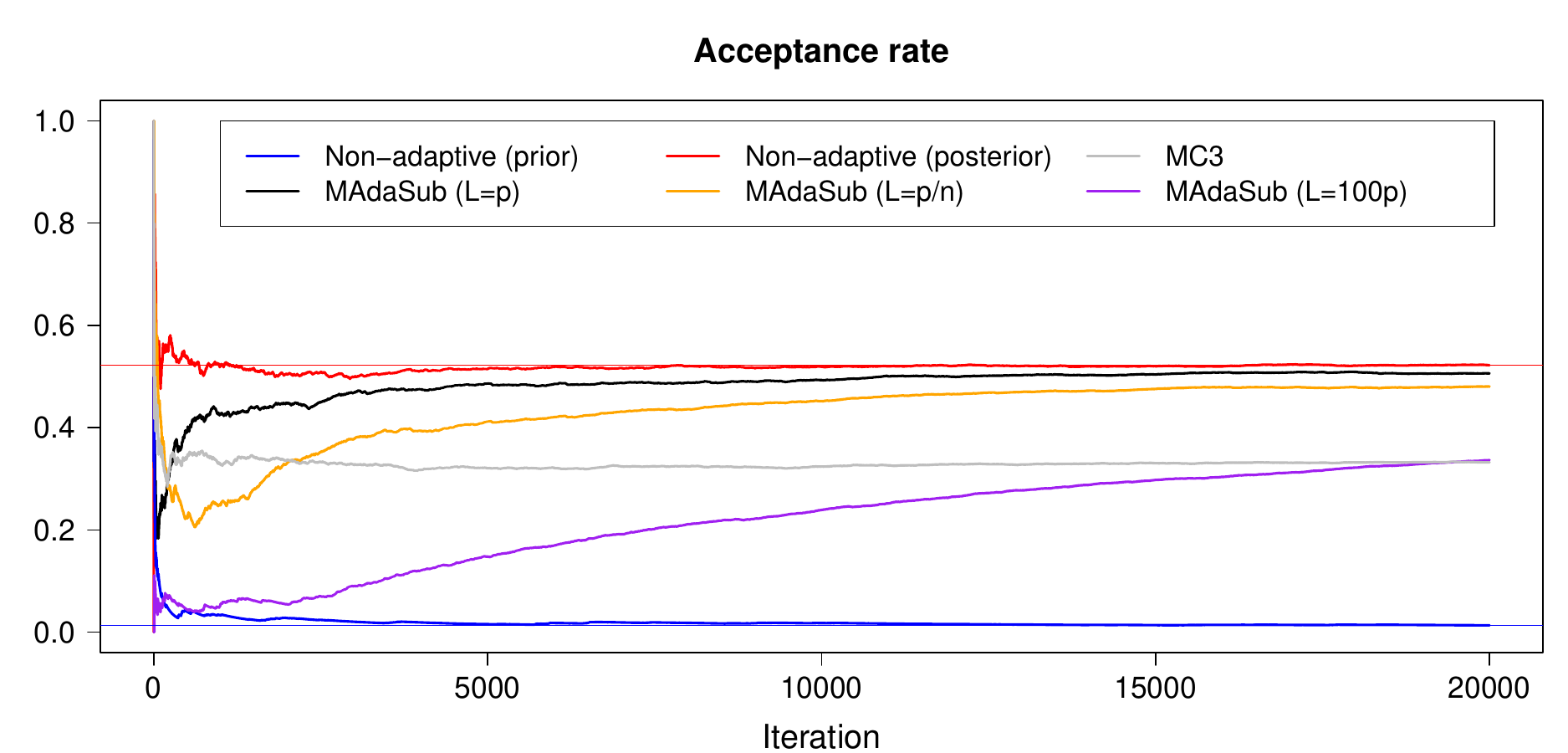}
\caption{\label{fig:lowdim_acc} Illustrative example with g-prior. \small Evolution of acceptance rates along the iterations for non-adaptive independence sampler with prior marginals (blue) and posterior marginals (red) as proposal probabilities, for add-delete \(\text{MC}^3\) sampler (gray), as well as for MAdaSub with \(L_j=p\) (black), \(L_j=p/n\) (orange) and \(L_j=100p\) (purple) for \(j\in\mathcal{P}\).} 
\end{figure}
 
Figure~\ref{fig:lowdim_sizes} depicts the sizes \(|V^{(t)}|\) of the proposed models and the sizes \(|S^{(t)}|\) of the sampled models, while Figure~\ref{fig:lowdim_acc} shows the evolution of the acceptance rates along the iterations \(t\) of the different MCMC algorithms. 
As might have been expected, the non-adaptive sampler with prior marginals as proposal probabilities performs poorly with a very slow exploration of the model space and a small acceptance rate which remains close to zero. On the other hand, the non-adaptive sampler with posterior marginals as proposal probabilities leads to fast mixing with corresponding acceptance rate of approximately \(0.54\). Even though the MAdaSub algorithm starts with exactly the same ``initial configuration'' as the non-adaptive sampler with prior marginals, it quickly adjusts the proposal probabilities accordingly, so that the resulting acceptance rate approaches the target value of \(0.54\) from the non-adaptive sampler with posterior marginals. In particular, when inspecting the evolution of the sampled model sizes in Figure~\ref{fig:lowdim_sizes}, the MAdaSub algorithm is very difficult to distinguish from the sampler with posterior marginals after a very short burn-in period (see also Figure~E.1 of the Supplement).

To illustrate the behaviour of the MAdaSub algorithm with respect to the variance parameters~\(L_j\), additionally to the choice \(L_j=p\) we examine two further runs of MAdaSub with the same specifications as before, but with \(L_j=p/n\) 
and with \(L_j=100p\), respectively. Figure~\ref{fig:lowdim_acc} indicates that the original choice \(L_j=p\) is favourable, yielding a fast and ``sustainable'' increase of the acceptance rate (see also Figure~E.2 of the Supplement for the evolution of proposal probabilities for the different~\(L_j\)). On the other hand, for \(L_j=100p\) the proposal probabilities in MAdaSub are slowly adapted, while for \(L_j=p/n\) the proposal probabilities are adapted very quickly, resulting in initially large acceptance rates; however, this increase is only due to a premature focus of the proposal on certain parts of the model space and thus the acceptance rate decreases at some point when the algorithm identifies other areas of high posterior probability that have not been covered by the proposal. 
This illustrative example shows that --- despite the ergodicity of the MAdaSub algorithm for all choices of its tuning parameters (Theorem~\ref{thm:MAdaSub}) --- the speed of convergence against the target distribution crucially depends on an appropriate choice of these parameters. Regarding the variance parameters we observe that the choice~\(L_j=p\) for~\(j\in\mathcal{P}\) works well in practice (see also results below).

The adaptive nature of MAdaSub entails the possibility for an automatic check of convergence of the algorithm: as the proposal probabilities~\(r_j^{(t)}\) are continuously adjusted towards the current empirical inclusion frequencies \(f_j^{(t)}=\frac{1}{t}\sum_{i=1}^t \mathbbm{1}_{S^{(i)}}(j)\) (see equation~(\ref{eq:update})), the algorithm may be stopped as soon as the individual proposal probabilities and empirical inclusion frequencies are within a prespecified distance \(\delta\in(0,1)\) (e.g.~\(\delta=0.005\), see~Figure~E.3 of the Supplement), i.e.\ the algorithm is stopped at iteration~\(t_c\) if \(\max_{j\in\mathcal{P}}|f_j^{(t_c)}-r_j^{(t_c)}|\leq \delta\). Even when automatic stopping may be applied, we additionally recommend to investigate the convergence of the MAdaSub algorithm via the diagnostic plots presented in this section and in Section~E of the Supplement. 

\subsection{Low-dimensional simulation study}

In this simulation study we further investigate the performance of the serial MAdaSub algorithm in relation to local non-adaptive and adaptive algorithms. In particular, we analyse how the algorithms are affected by high correlations between the covariates. 

We consider a similar low-dimensional setting as in the illustrative data application with \(p=20\) covariates and sample size \(n=60\). To evaluate the performance in a variety of different data situations, for each simulated dataset the number \(s_0\) of informative variables is randomly drawn from \(\{0,1,\dots,10\}\) and the true active set \(S_0\subseteq\mathcal{P}\) of size \(|S_0|=s_0\) is randomly selected from the full set of covariates \(\mathcal{P}=\{1,\dots,p\}\); then, for each \(j\in S_0\), the \(j\)-th component \(\beta_{0,j}\) of the true coefficient vector \(\bs\beta_0\in\R^p\) is simulated from a uniform distribution \(\beta_{0,j}\sim U(-2,2)\). As before, the covariates are simulated using a Toeplitz correlation structure, while the response is simulated from a normal linear model with error variance \(\sigma^2=1\).  We consider three different correlation settings by varying the correlation \(\rho\) between adjacent covariates in the Toeplitz structure:\ a low-correlated setting with \(\rho=0.3\), a highly-correlated setting with \(\rho=0.9\) and a very highly-correlated setting with \(\rho=0.99\). For each of the three settings, 200 different datasets are simulated as described above; in each case, we employ a g-prior with \(g=n\) on the regression coefficients and a uniform prior on the model space. 

\begin{figure}[!t]\centering 
\includegraphics[width=\textwidth]{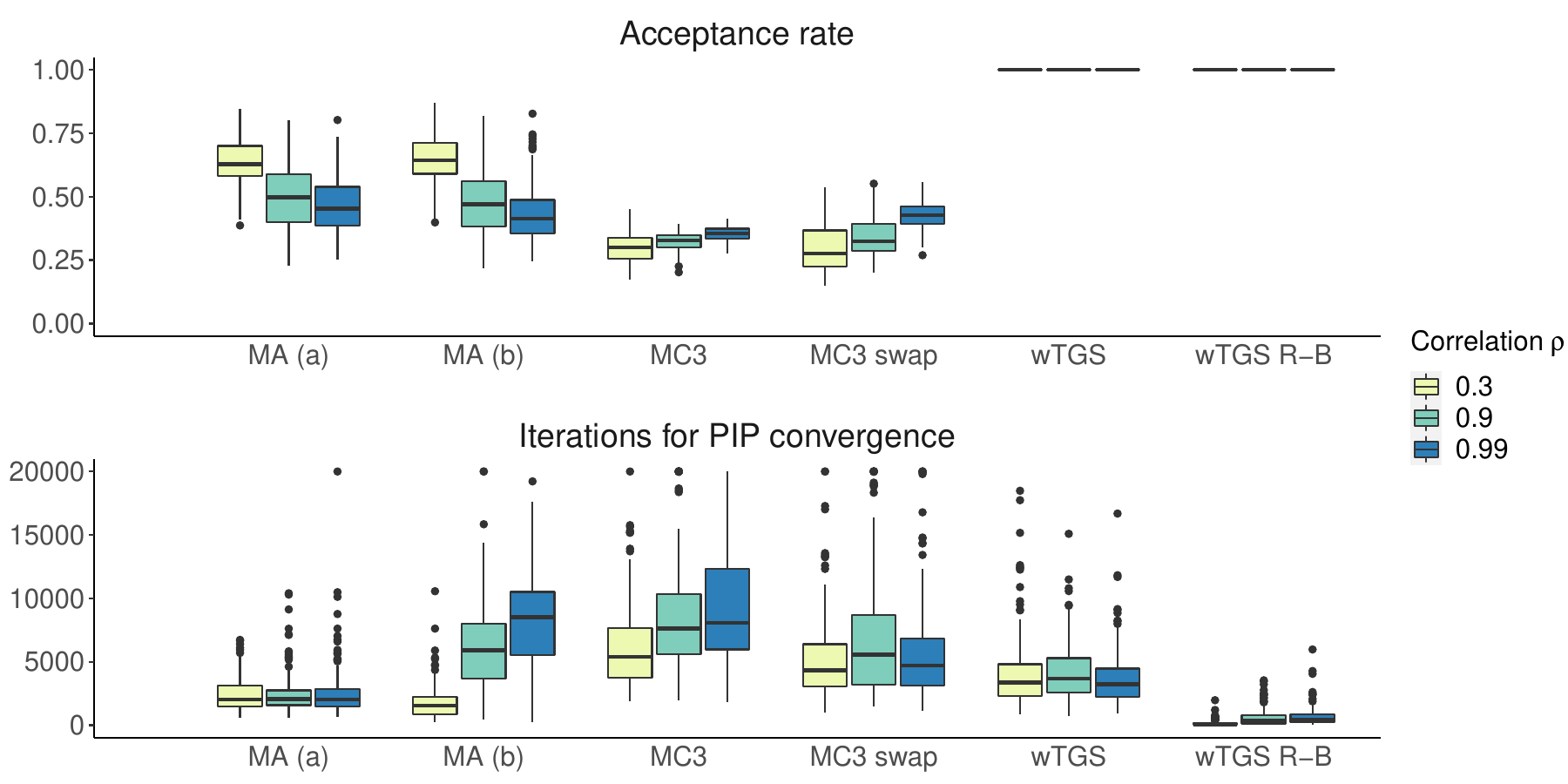}
\caption{\label{fig:simulations_corr} Low-dimensional simulation study with varying correlation \(\rho\in\{0.3,0.9,0.99\}\) in Toeplitz structure. \small Performance of MAdaSub with initial proposal probabilities \(r_j^{(0)}=0.5\) based on prior inclusion probabilities~(MA(a)), MAdaSub with \(r_j^{(0)}\) being based on marginal posterior odds*~(MA(b)), \(\text{MC}^3\) samplers with and without ``swap'' moves, as well as adaptive weighted Tempered Gibbs Sampler based on weighted frequency estimates~(wTGS) and Rao-Blackwellized estimates~(wTGS R-B), in terms of acceptance rates and numbers of iterations for convergence of the estimates to the true posterior inclusion probabilities~(PIP). \\[1mm]
 *The (approximated) marginal posterior odds are provided in equation~(\ref{eq:margodds}). }  
\end{figure}

For each simulated dataset we apply MAdaSub with 20,000 iterations, using \(L_j=p\) for \(j\in\mathcal{P}\) and \(\epsilon=\frac{1}{p}\). In order to investigate the influence of the initial proposal probabilities \(r_j^{(0)}\) in MAdaSub, two different choices for \(r_j^{(0)}\) are considered: choice~(a)~based on prior inclusion probabilities \(r_j^{(0)}=\frac{1}{2}\) and choice~(b)~based on (approximated) marginal posterior odds  
\begin{equation} \pi_j^{\text{marg}} = \frac{\text{PO}_j}{1+\text{PO}_j} ~~~ \text{with } ~~ \text{PO}_j = \frac{P( S=\{j\}\,|\,\mathcal{D})}{P(S=\emptyset\,|\,\mathcal{D})},~ j\in\mathcal{P} , \label{eq:margodds} \end{equation}
and setting \(r_j^{(0)} = \min \{ \max\{ \pi_j^{\text{marg}}, \frac{1}{p} \} , 0.9 \} \) to prevent the premature focus of the algorithm on some covariates (if \(\pi_j^{\text{marg}}\approx 1\)) or the avoidance of other covariates  (if \(\pi^{\text{marg}}_j\approx 0\)). Here, the marginal posterior odds \(\text{PO}_j\) are crude approximations to the true posterior odds, derived under the assumption of posterior independence of variable inclusion. The local \(\text{MC}^3\) algorithm \citep{madigan1995} is applied as before as well as with additional swap moves to potentially improve the mixing (as in~\citealp{griffin2018}). Using the R-package~\texttt{scaleBVS}~\citep{scaleBVS2020}, we apply the adaptive weighted tempered Gibbs sampling algorithm of~\cite{zanella2019} to obtain (weighted) frequency estimates 
(as for the other algorithms) and Rao-Blackwellized estimates of posterior inclusion probabilities~(PIPs). Exact PIPs are again derived using the BAS algorithm \citep{clyde2017}. The algorithms are evaluated based on final acceptance rates and numbers of iterations for convergence of the estimates~$\hat{f}_j^{(t)}$ to the true PIPs, where PIP convergence is defined to occur at the smallest iteration~\(t_c\) for which \(\max_{j\in\mathcal{P}}|\hat{f}_j^{(t_c)}-\pi_j|\leq 0.05\); if \(t_c\geq20{,}000\), then the number of iterations for convergence is displayed as 20,000 in Figure~\ref{fig:simulations_corr}. 

Figure~\ref{fig:simulations_corr} shows that the acceptance rates of the MAdaSub samplers tend to be substantially larger in comparison to the local \(\text{MC}^3\) algorithms, while the acceptance rates of the weighted Tempered Gibbs Sampler~(wTGS) are equal to one by construction. Nevertheless, for the MAdaSub samplers a decreasing trend of acceptance rates can be observed with increasing correlations. This observation reflects that for low-correlated situations the resulting posterior distribution is often closer to an independent Bernoulli form than for highly-correlated cases, and thus can be better approximated by the proposal distributions of MAdaSub, leading to larger acceptance rates. In the low-correlated setting (\(\rho=0.3\)), the choice~(b) for the initial proposal probabilities in MAdaSub based on marginal posterior odds leads to slightly larger acceptance rates and a faster PIP convergence compared to the MAdaSub sampler~(a) based on the prior inclusion probabilities. However, in cases of high correlations among some of the covariates (\(\rho=0.9\) and \(\rho=0.99\)), the prior choice~(a) is clearly favourable yielding larger acceptance rates and a faster PIP convergence compared to the MAdaSub sampler~(b) and the \(\text{MC}^3\) algorithm. Thus, while in low-correlated settings the marginal posterior odds yield reasonable first approximations to the true posterior odds, the prior inclusion probabilities are more robust and to be preferred as initial proposal probabilities in MAdaSub in situations with high correlations. Overall, the MAdaSub sampler~(a) yields a well-mixing algorithm in all considered settings, which is also competitive to the adaptive wTGS algorithm based on weighted frequency estimates, while wTGS with Rao-Blackwellization~(R-B) provides faster convergence. 
Note that the computational cost of R-B is small in this low-dimensional conjugate setting but increases for high-dimensional and non-conjugate settings with Laplace approximations~\citep{zanella2019, wan2021}. 
An additional sensitivity analysis regarding different variance parameters~$L_j$ in MAdaSub (see Figure~F.1 of the Supplement) supports the choice~$L_j=p=20$ in all considered correlation settings and indicates that results are very robust for~$L_j\in[p/2,2p]$.

\subsection{High-dimensional simulation study}\label{sec:highdim}

To investigate the performance of the serial and parallel versions of MAdaSub in high-dimensional settings, we consider the same simulation set-up as in~\cite{yang2016} and~\cite{griffin2018}: data are simulated from a sparse linear regression model with true coefficients 
\begin{equation}\bs\beta_0= \text{SNR} \times \sqrt{\log(p)/n} \times (2, -3, 2, 2, -3, 3, -2, 3, -2, 3, 0, \dots, 0)^T\in \R^p.\end{equation} 
Similar to the low-dimensional simulations, covariates are generated from a Toeplitz correlation structure with~\(\rho=0.6\) and the response is simulated via \(y_i\overset{\text{ind.}}{\sim} N(\bs X_{i,*} \bs \beta_0, 1)\), \(i=1,\dots,n\). 
As in \cite{griffin2018}, we consider the conjugate prior~(\ref{eq:prior1}) with \(g=9\) and prior independence of the regression coefficients (\(\bs W_S = \bs I_{|S|}\) for \(S\in\mathcal{M}\)), 
together with the model prior~(\ref{eq:modelprior}) with (fixed) prior inclusion probability \(\omega=10/p\). 
For each setting with~\(n\in\{500,1000\}\), \(p\in\{500,5000\}\) and signal-to-noise ratio~\(\text{SNR}\in\{0.5,1,2,3\}\), we simulate one dataset and apply each algorithm 200 times to assess the stability of estimated posterior inclusion probabilities. As in~\cite{griffin2018}, each algorithm is based on 5 parallel chains using 5 CPUs. 
We consider the serial version of MAdaSub where the individual chains (Algorithm~\ref{algo:MCMC}) are run in parallel but do not exchange any information and the parallel version (Algorithm~2 of the Supplement) where the chains exchange information regarding the proposal probabilities after each of $R=50$ rounds (considering 25 burn-in rounds for both versions; each round consists of 1000 and 10,000 iterations for \(p=500\) and \(p=5000\), respectively). For the serial version, the initial proposal probabilities are set to the prior inclusion probabilities, i.e.\ \(r_j^{(k,0)}=10/p\), and the variance parameters \(L_j^{(k)}=p\) are the same for all chains~\(k\). For the parallel version, we consider different random initializations of proposal probabilities  \(r_j^{(k,0)}=q^{(k)}/p\), \(j\in\mathcal{P}\), with \(q^{(k)}\sim U(2,10)\) and variance parameters \(L_j^{(k)}=L^{(k)}\), \(j\in\mathcal{P}\), with \(L^{(k)}\sim U(p/2,2p)\) for each chain~\(k\). For all MAdaSub chains we set~\(\epsilon=1/p\). Additional results of sensitivity analyses regarding different choices of the tuning parameters of MAdaSub can be found in Section~G of the Supplement. 

The performance of the MAdaSub algorithms~(A) with serial and parallel updating schemes is assessed in terms of median acceptance rates, as well as in comparison to the add-delete-swap \(\text{MC}^3\) algorithm~(B) in terms of the median estimated ratio~$\hat{r}_{A,B}^{(20)}$ of the relative time-standardized effective sample size of algorithm~$A$ versus algorithm~$B$ for the posterior inclusion probabilities~(PIPs) over the 20 variables with the largest estimated PIPs (averaged over all algorithms). The estimated ratio of the relative time-standardized effective sample size is given by $\hat{r}_{A,B}=(s_B^2t_B)/(s_A^2t_A)$, with $t_A$ and $t_B$ the median computation times and $s_A^2$ and $s_B^2$ the variances of PIP estimates based on 200 independent runs of each algorithm (cf.~\citealp{griffin2018}). Here, we consider the median ratio~$\hat{r}_{A,B}^{(20)}$ over the 20 variables with the largest estimated PIPs, as many variables receive very small posterior probability due to the sparsity-inducing prior and the sparse generating model with only 10 signal variables (in all settings the estimated PIPs for variables not among the top 20 are all below 0.5\%, while the median estimated PIP over all variables is below 0.07\%). Complimentary results regarding the median of~$\hat{r}_{A,B}$ over all variables are provided in Table~G.1 of the Supplement, comparing the performance of MAdaSub also with the adaptive approaches in~\cite{griffin2018}. 

\begin{table*}
\begin{center}
\resizebox{\textwidth}{!}{\begin{tabular}{ clR{2.4cm}R{2.4cm}R{2.4cm}R{2.4cm} } 
 \hline
  & & \multicolumn{1}{c}{$\text{SNR}=0.5$} & \multicolumn{1}{c}{$\text{SNR}=1$} & \multicolumn{1}{c}{$\text{SNR}=2$} & \multicolumn{1}{c}{$\text{SNR}=3$}  \\ 
 $(n,p)$ & MAdaSub & $\hat{r}_{A,B}^{(20)}$ / \,\,\,Acc.  &  $\hat{r}_{A,B}^{(20)}$ / \,\,\,Acc. &  $\hat{r}_{A,B}^{(20)}$ / \,\,\,Acc. & $\hat{r}_{A,B}^{(20)}$ / \,\,\,Acc.  \\ \hline \hline 
 $ (500,500)$        & serial & 69.4 / 44.6\% & 23.0 / 31.9\%  & 4.8 / \,\,\,6.3\%  & 8.3  / \,\,\,9.3\% \\
                     & parallel & 22.9 / 45.3\% & 8.9  / 37.7\%  & 7.5 / 18.1\% & 12.1 / 21.4\% \\ 			\hline
 $ (500,5000)$       & serial & 376.9 / 47.5\% & 50.3 / 46.6\% & 8.2 / \,\,\,5.1\% & 17.9  / \,\,\,9.5\% \\
                     & parallel & 474.4 / 48.0\% & 78.7 / 44.8\% & 82.8 / 17.5\% & 186.4 / 23.4\% \\
										\hline
	 $ (1000,500)$     & serial & 110.7 / 53.4\% & 13.7 / 39.0\%  & 2.4 / \,\,\,6.0\%  & 8.7 / \,\,\,9.0\% \\
                     & parallel & 62.0  / 54.2\% & 7.0 / 39.0\%   & 7.3 / 17.7\% & 12.8 / 21.0\% \\ 			\hline
	 $ (1000,5000)$    & serial & 657.3 / 45.3\% & 7.5 / 26.5\%  & 23.9  / \,\,\,9.4\%  & 35.1  / 11.6\%  \\
                     & parallel & 674.1 / 45.8\% & 6.2  / 10.7\%  & 175.6 / 23.1\% & 281.7 / 24.7\% \\		\hline							
\end{tabular}}
\end{center}
\caption{Results of high-dimensional simulation study. \small Performance of MAdaSub algorithms~(A) with serial and parallel updating schemes compared to add-delete-swap \(\text{MC}^3\) algorithm~(B) in terms of median estimated ratios~$\hat{r}_{A,B}^{(20)}$ of the relative time-standardized effective sample size for PIPs over the 20 variables with the largest estimated PIPs. Median acceptance rates (Acc.) for MAdaSub are also provided. 
}
\label{tab:highdim}
\end{table*}

Table~\ref{tab:highdim} shows that in all considered settings the median estimated time-standardized effective sample size for both MAdaSub versions is several orders larger than for the \(\text{MC}^3\) algorithm. For low SNRs (e.g.\ \(\text{SNR}=0.5\)), both MAdaSub versions tend to show larger improvements compared to the \(\text{MC}^3\) algorithm than for high SNRs (e.g.\ \(\text{SNR}=3\)). Note that for high SNRs, the posterior distribution tends to be more concentrated around the true model \(S_0=\{1,\dots,10\}\), so that local proposals of the add-delete-swap \(\text{MC}^3\) algorithm may also be reasonable. On the other hand, for low SNR, the posterior tends to be less concentrated, so that global moves of MAdaSub have a larger potential to improve the mixing compared to the \(\text{MC}^3\) algorithm. The acceptance rates of MAdaSub are also larger in small SNR scenarios, as the posterior model distribution tends to be better approximated by independent Bernoulli proposals. However, in all considered settings, the acceptance rates of MAdaSub are reasonably large with median acceptance rates between 5.1\% and 54.2\% (see Table~\ref{tab:highdim}) and are considerably larger compared to the \(\text{MC}^3\) algorithm with median acceptance rates between 0.6\% and 5.8\% (detailed results not shown).

For low SNRs ($\text{SNR}\leq1$), serial updating in MAdaSub tends to yield larger (for \(p=500\)) or similar (for \(p=5000\)) time-standardized effective sample sizes compared to parallel updating, as both versions appear to have converged to stationarity with similar acceptance rates, while the parallel version tends to yield larger computation times as a result of communicating chains. For large SNRs ($\text{SNR}\geq2$), MAdaSub with parallel updating performs favourable since the proposal probabilities tend to converge faster than with serial updating, which leads to considerably larger acceptance rates and outweighs the computational cost of communicating chains. 
Previous results for the same simulation set-up indicate that the two alternative individual adaptation algorithms of~\cite{griffin2018} tend to yield the largest improvements compared to the \(\text{MC}^3\) algorithm for higher SNR (particularly for \(\text{SNR}=2\)). The proposal~\eqref{eq:propalt} of these algorithms allows for larger moves than the add-delete-swap proposal in \(\text{MC}^3\), but --- in contrast to the independence proposal of MAdaSub --- the proposal~\eqref{eq:propalt} still locally depends on the previously sampled model. Overall, MAdaSub shows a competitive performance compared to the adaptive algorithms of~\cite{griffin2018}, with advantages of MAdaSub in low SNR settings and advantages of the adaptive algorithms of~\cite{griffin2018} in high SNR settings~(see Table~G.1 of the Supplement).

\section{Real data applications}\label{sec:realdata}

\subsection{Tecator data}\label{sec:tecator}

We first examine the Tecator dataset which has already been investigated in \citet{griffin2010}, \citet{lamnisos2013} and \citet{griffin2018}. The data has been recorded by \citet{borggaard1992} on a Tecator Infratec Food Analyzer and consists of \(n=172\) meat samples and their near-infrared absorbance spectra, represented by \(p=100\) channels in the wavelength range 850-1050nm (compare \citealp{griffin2010}). The fat content of the samples is considered as the response variable. 
 For comparison reasons, we choose the same conjugate prior set-up as in \citet{lamnisos2013}, i.e.\ we use the prior given in equation~(\ref{eq:prior1}) with \(g=5\), \(\bs W_S = \bs I_{|S|}\) for \(S\in\mathcal{M}\) 
and we employ the independent Bernoulli model prior given in equation~(\ref{eq:modelprior}) with (fixed) prior inclusion probability \(\omega=\frac{5}{100}\). 

\begin{figure}[!t]\centering 
\includegraphics[width=\textwidth]{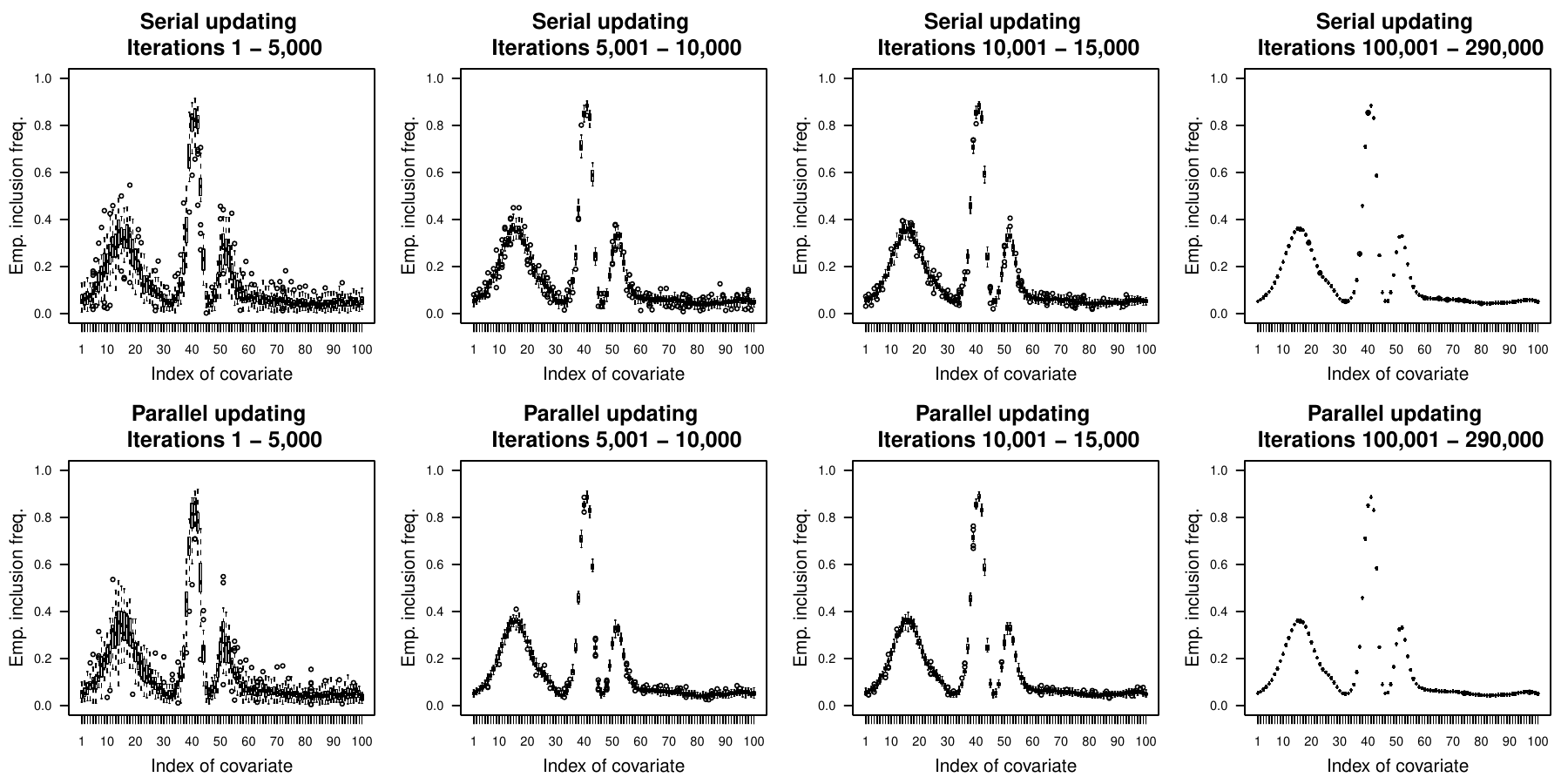}
\caption{\label{fig:Tecator_parallel}Tecator data application. \small Results of 25 independent serial MAdaSub chains (Algorithm~\ref{algo:MCMC}) and of 25 parallel MAdaSub chains exchanging information after every 5,000 iterations (Algorithm~2) in terms of empirical variable inclusion frequencies~\(f_j\) for \(j\in\{1,\dots,100\}\).} 
\end{figure}

To investigate the stability of MAdaSub for different choices of its tuning parameters, we run 25 independent serial MAdaSub chains (Algorithm~\ref{algo:MCMC}) 
with random initializations of the proposal probabilities \(r_j^{(k,0)}=q^{(k)}/p\), \(j\in\mathcal{P}\), with \(q^{(k)}\sim U(2,10)\) and of the variance parameters \(L_j^{(k)}=L^{(k)}\), \(j\in\mathcal{P}\), with \(L^{(k)}\sim U(p/2,2p)\), for each chain \(k=1,\dots,25\). Furthermore, we run 25 additional parallel MAdaSub chains (Algorithm~2) with the described random initializations, exchanging the information after each of \(R=58\) rounds of \(T=5{,}000\) iterations (yielding in total 290,000 iterations for each of the chains, cf.~\citealp{lamnisos2013}). 
Figure~\ref{fig:Tecator_parallel} shows the resulting empirical variable inclusion frequencies (as estimates of posterior inclusion probabilities) for the 25 serial and 25 parallel MAdaSub chains. From left to right, the first three plots of Figure~\ref{fig:Tecator_parallel} depict the development of the empirical inclusion frequencies for the first three rounds of 5,000 iterations each, while the rightmost plots depict the final empirical inclusion frequencies after 290,000 iterations (disregarding a burn-in period of 100,000 iterations, cf.~\citealp{lamnisos2013}).  After the first 5,000 iterations, the empirical inclusion frequencies show a similar variability for the serial and parallel chains, as no communication between the parallel chains has yet occurred. After the second round of 5,000 further iterations, the benefit of the communication between the 25 parallel chains is apparent, leading to less variable estimates due to a faster convergence of the proposal probabilities against the posterior inclusion probabilities. Nevertheless, also the serial MAdaSub chains (with different initial tuning parameters) provide quite accurate estimates after only 10,000 iterations. 

After 290,000 iterations, all of the serial and parallel MAdaSub chains yield very stable estimates of posterior inclusion probabilities, reproducing the results shown in Figure~1 of \citet{lamnisos2013}. Details on additional comparisons with \citet{lamnisos2013} and computation times can be found in Section~H of the Supplement. As the covariates represent 100 channels of the near-infrared absorbance spectrum, adjacent covariates are highly correlated and it is not surprising that they have similar posterior inclusion probabilities. If one is interested in selecting a final single model, the median probability model (which includes all variables with posterior inclusion probability greater than 0.5, see \citealp{barbieri2004}) might not be the best choice in this particular situation, since then only variables corresponding to the ``global mode'' and no variables from the two other ``local modes'' in Figure~\ref{fig:Tecator_parallel} are selected. Alternatively, one may choose one or two variables from each of the three ``local modes'' or make use of Bayesian model averaging \citep{raftery1997} for predictive inference.   

\subsection{PCR and Leukemia data}\label{sec:PCR}

We illustrate the effectiveness of MAdaSub for two further high-dimensional datasets. In particular, we consider the polymerase chain reaction (PCR) dataset of \citet{lan2006} with \(p=22{,}575\) explanatory variables (expression levels of genes), sample size \(n=60\) (mice) and continuous response data (the dataset is available in JRSS(B) Datasets Vol. 77(5), \citealp{Song2015}). Furthermore, we consider the leukemia dataset of \citet{golub1999} with \(6817\) gene expression measurements of \(n=72\) patients and binary response data (the dataset can be loaded via the R-package \texttt{golubEsets}, \citealp{golub2017}). For the PCR dataset we face the problem of variable selection in a linear regression framework, while for the leukemia dataset we consider variable selection in a logistic regression framework. We have preprocessed the leukemia dataset as described in \citet{dudoit2002}, resulting in a restricted design matrix with \(p=3571\) columns (genes). Furthermore, in both datasets we have mean-centered the columns of the design matrix after the initial preprocessing.

Here we adopt the posterior approximation induced by \(\text{EBIC}_\gamma\) with \(\gamma=1\) (see equation~(\ref{eq:EBICkernel})), corresponding to a beta-binomial model prior with \(a_\omega=b_\omega=1\) as parameters in the beta distribution (see Section~\ref{sec:setting}). 
For both datasets we run 25 independent serial MAdaSub chains with 1,000,000 iterations and 25 parallel MAdaSub chains exchanging information after each of \(R=50\) rounds of \(T=20{,}000\) iterations (yielding also 1,000,000 iterations for each parallel chain). For each serial and parallel chain \(k=1,\dots,50\), we set~\(\epsilon=\frac{1}{p}\) and randomly initialize the proposal probabilities \(r_j^{(k,0)}=q^{(k)}/p\), \(j\in\mathcal{P}\), with \(q^{(k)}\sim U(2,5)\) and the variance parameters \(L_j^{(k)}=L^{(k)}\), \(j\in\mathcal{P}\), with \(L^{(k)}\sim U(p/2,2p)\). For the leukemia dataset we make use of a fast C\texttt{++} implementation for ML-estimation in logistic regression models via a limited-memory Broyden-Fletcher-Goldfarb-Shanno (L-BFGS) algorithm, which is available in the R-package \texttt{RcppNumerical} \citep{qiu2016}. For both datasets, the 50 MAdaSub chains are run in parallel on a computer cluster with 50 CPUs, yielding overall computation times of 2,836 seconds for the PCR data (2,310 seconds for a single chain) and 1,402 seconds for the leukemia data (995 seconds for a single chain).

\begin{figure}[!t]\centering 
\includegraphics[width=\textwidth]{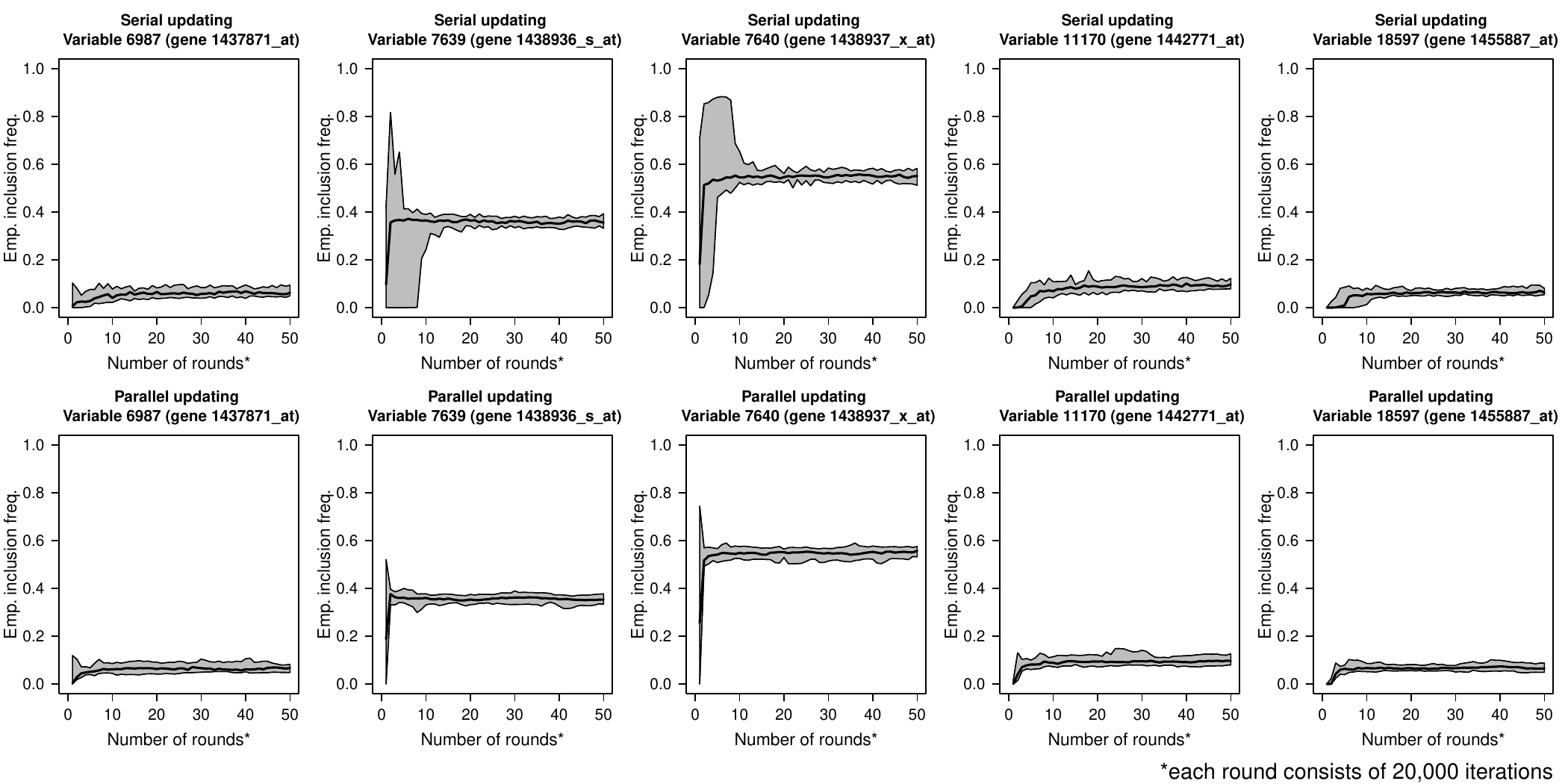}
\caption{\label{fig:PCR_evolution}PCR data application. \small 
Evolution of empirical variable inclusion frequencies 
for 25 serial MAdaSub chains (Algorithm~\ref{algo:MCMC}, top) and 25 parallel MAdaSub chains exchanging information after every round of 20,000 iterations 
(Algorithm~2, bottom). Bold lines represent median frequencies with 5\%- and 95\%-quantiles (shaded area) over the chains within each round, for most informative variables \(X_j\) (with final estimate \(f_j\geq 0.05\) for at least one chain). 
} 
\end{figure}

Figures~\ref{fig:PCR_evolution} and~\ref{fig:Leukemia_evolution} show that, despite the high-dimensional model spaces and the different initializations of each chain, the parallel MAdaSub algorithm provides stable estimates of posterior inclusion probabilities for both datasets after a small number of rounds. 
In particular, the estimates from the parallel MAdaSub algorithm stabilize after only three rounds of 20,000 iterations (see also Figures~I.3 and~I.4 of the Supplement). For the PCR data, all serial and parallel MAdaSub chains yield congruent estimates of posterior inclusion probabilities after 1,000,000 iterations (Figures~\ref{fig:PCR_evolution},~I.2 and~I.3).  
The final acceptance rates of MAdaSub for the PCR dataset are between 20\% and 22\%, while the acceptance rates for the leukemia dataset are between 3\% and 6\%. The smaller acceptance rates for the leukemia dataset indicate that this corresponds to a more challenging scenario (i.e.~the targeted posterior model distribution seems to be ``further away'' from an independent Bernoulli form). This observation is also reflected in the larger variability of the estimates from the MAdaSub chains without parallel updating (Figures~\ref{fig:Leukemia_evolution},~I.2 and~I.4). The leukemia data application particularly illustrates the benefits of the parallel version of MAdaSub, where multiple chains with different initializations sequentially explore different regions of the model space, but exchange the information after each round of 20,000 iterations, increasing the speed of convergence of the proposal probabilities to the posterior inclusion probabilities. 

Note that in very high-dimensional settings such as for the PCR data (with \(p=22{,}575\)), the classical \(\text{MC}^3\) algorithm \citep{madigan1995} does not yield stable estimates due to slow mixing (cf.\ \citealp{griffin2018}), while the BAS algorithm \citep{clyde2017} using sampling without replacement is computationally intractable. Further results in \cite{griffin2018} show that several competing adaptive algorithms --- including sequential Monte Carlo algorithms of \cite{schafer2013} and tempered Gibbs sampling algorithms of \cite{zanella2019} --- do not provide reliable estimates of posterior inclusion probabilities for the PCR data; only the adaptively scaled individual adaptation algorithm of \cite{griffin2018} with proposals of the form (\ref{eq:propalt}) yields stable results for the PCR data similarly to MAdaSub with a slightly different prior set-up (see Figures~10 and~11 of the Supplement of \citealp{griffin2018}).

\begin{figure}[!t]\centering 
\includegraphics[width=\textwidth]{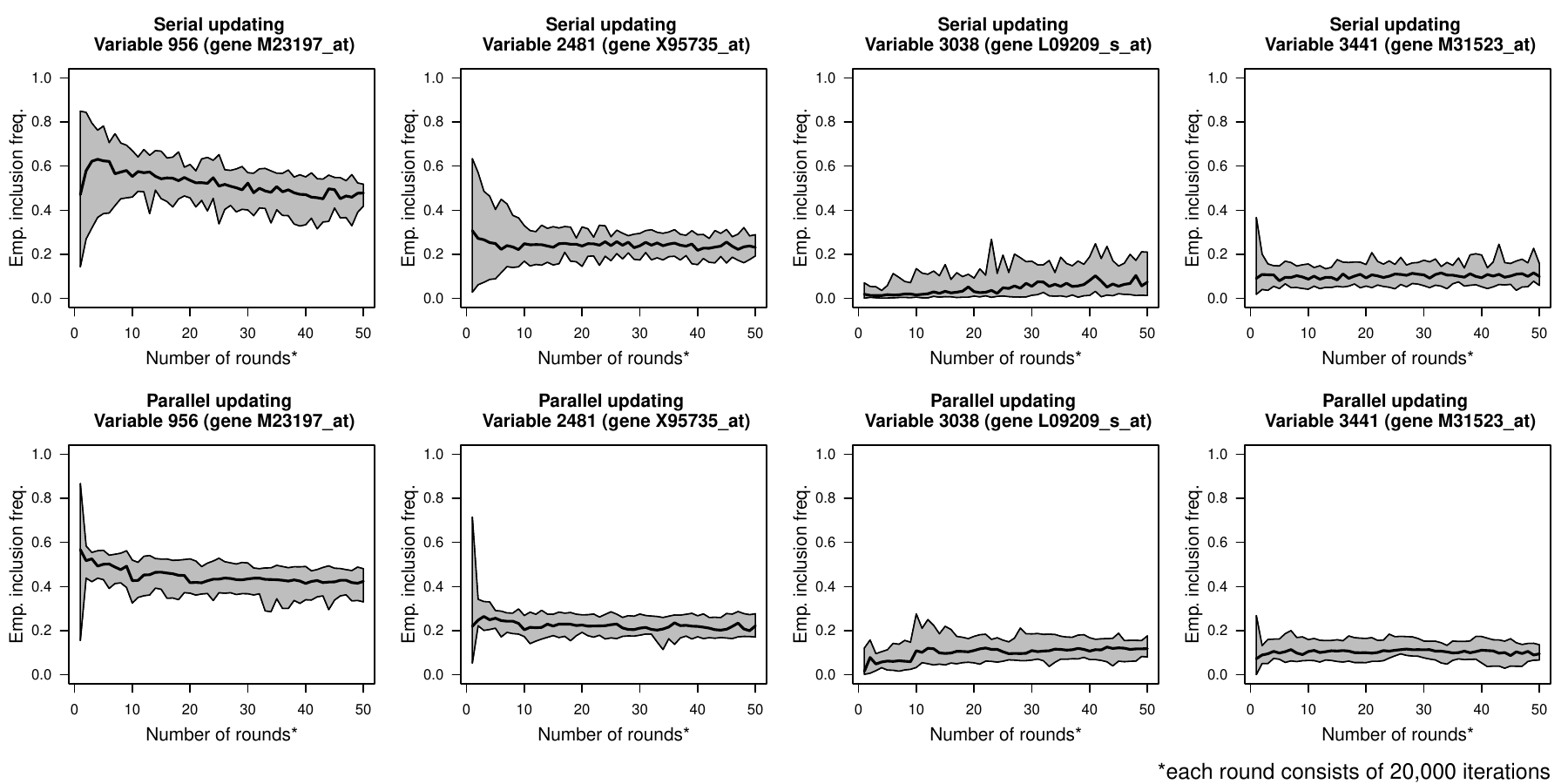}
\caption{\label{fig:Leukemia_evolution}Leukemia data application. \small Evolution of empirical variable inclusion frequencies for 25 serial MAdaSub chains (Algorithm~\ref{algo:MCMC}, top) and 25 parallel MAdaSub chains exchanging information after each round of 20,000 iterations (Algorithm~2, bottom) for most informative variables \(X_j\) (with final estimate \(f_j\geq 0.1\) for at least one chain), cf.\ Figure~\ref{fig:PCR_evolution}.} 
\end{figure}

Due to the very large model spaces in both considered examples, posterior probabilities of individual models are generally small and corresponding MCMC estimates will typically not be very reliable. Therefore, as in similar studies (see \citealp{griffin2018}), we have focused on the estimation of posterior inclusion probabilities~(PIPs). For the PCR data two variables (genes) stand out with respect to the final estimates of their PIPs, namely the gene 1438937\_x\_at (covariate index \(j=7640\)) with estimated PIP between 0.54 and 0.56, and the gene 1438936\_s\_at (\(j=7639\)) with estimated PIP between 0.35 and 0.37. 
Similarly, for the leukemia data two genes stand out, namely the genes M23197\_at (\(j=956\)) with estimated PIP between 0.39 and 0.43 and X95735\_at (\(j=2481\)) with estimated PIP between between 0.21 and 0.22 (considering final estimates from the 25 parallel chains only); these two genes are also among the four top scoring genes in a Bayesian probit regression analysis in \citet{ai2009}.

\section{Discussion}\label{sec:discussion} 
We introduced the Metropolized Adaptive Subspace (MAdaSub) algorithm for sampling from high-dimensional posterior model distributions in situations where conjugate priors or approximations to the posterior are employed. We further developed an efficient parallel version of MAdaSub, where the information regarding the adaptive proposal probabilities of the variables can be shared periodically between the different chains. 
Simulated and real data applications illustrated that MAdaSub can efficiently sample from multimodal posterior model distributions, yielding stable estimates of posterior inclusion probabilities even for ten thousands of possible covariates. 

The reliable estimation of posterior inclusion probabilities is particularly important for Bayesian inference, since the median probability model (MPM) --- including all variables with posterior inclusion probability larger than 0.5 --- has been shown to yield optimal predictions for uncorrelated covariates \citep{barbieri2004} and also a favourable performance for correlated designs \citep{barbieri2021}, e.g.\ compared to the largest posterior probability model. MAdaSub provides a natural adaptive MCMC algorithm which focuses on the sequential adaptation of currently estimated inclusion probabilities, with the aim of driving the sampler quickly into regions near to the MPM; in the limit, the MPM itself is the model which receives the largest probability under the independent Bernoulli proposal of MAdaSub. 
Despite the continuing adaptation of the proposals, we have shown that MAdaSub constitutes a valid MCMC algorithm which samples from the full posterior model distribution. 
While the serial and parallel versions of MAdaSub are ergodic for all choices of their tuning parameters (see Theorem~\ref{thm:MAdaSub} and Theorem~\ref{thm:parallel}), in practice the speed of convergence against the targeted posterior depends crucially on a proper choice of their tuning parameters (see Section~\ref{sec:sim}). Deriving theoretical results regarding the mixing time of the proposed algorithms is an important but challenging issue for further research. 

Since MAdaSub is based on adaptive independent proposal distributions, in each iteration of the algorithm the proposed model is (almost) independent of the current model, so that ``distant'' moves in the model space are encouraged. This can be advantageous in comparison to Gibbs samplers and Metropolis-Hastings algorithms based on local proposal distributions, which may yield larger acceptance rates but are more prone to be stuck in local modes of the posterior model distribution. In future work one may also consider combinations of the adaptive independent proposals in MAdaSub with adaptive local proposals as for example in \citet{lamnisos2013} and \citet{zanella2019}. While MAdaSub yields competitive results without the use of Rao-Blackwellization compared to the related adaptive algorithms of \cite{griffin2018}, the incorporation of Rao-Blackwellized estimates of posterior inclusion probabilities in the burn-in phase or as initial proposal probabilities may further increase the speed of convergence of MAdaSub. Finally, the extension of MAdaSub to settings with non-conjugate priors is interesting to be investigated, for example by considering data augmentation approaches with additional latent variables or by incorporating reversible-jump moves \citep{green1995, wan2021}.

\bibliographystyle{Chicago}
{\footnotesize
\setlength{\bibsep}{0.5pt}
\bibliography{AdaptiveMCMC}
}







\counterwithout{theorem}{section}
\counterwithout{notation}{section}

\newtheorem{notationS}{Notation}

\theoremstyle{plain}
\newtheorem{theoremS}{Theorem}
\newtheorem{lemmaS}{Lemma}
\newtheorem{theoremrep}{Theorem}
\newtheorem{corollaryrep}{Corollary}

\renewcommand{\thetheoremS}{A.\arabic{theoremS}}
\renewcommand{\thelemmaS}{A.\arabic{lemmaS}}
\renewcommand{\thenotationS}{A.\arabic{notationS}}
\renewcommand*{\thesection}{A\arabic{section}}
\renewcommand*{\thefigure}{A.\arabic{figure}}

\clearpage

\appendix


\section{Ergodicity of the MAdaSub algorithm}\label{sec:ergodicity}

In this section we present a detailed proof for the ergodicity of the serial MAdaSub algorithm (see Theorem~\ref{thmA:MAdaSub}), i.e.\ we show that ``in the limit'' MAdaSub samples from the targeted posterior model distribution \(\pi(\cdot\,|\,\mathcal{D})\) despite the continuing adaptation of the algorithm. We will make use of a general ergodicity result for adaptive MCMC algorithms by \citet{roberts2007}. In order to state the result directly for the specific setting of the MAdaSub algorithm, we first introduce some notation. 

\begin{notationS}\label{not:MAdaSubSupp}
\begin{enumerate}
\item[(a)] In the following, the models \(S^{(0)},S^{(1)},S^{(2)},\dots\) generated by the MAdaSub algorithm (see Algorithm~1 of the main document) should be viewed as random variables with values in the model space \(\mathcal{M}=\{S;\, S\subseteq\{1,\dots,p\}\}\). Furthermore, the (truncated) vectors of proposal probabilities \(\tilde{\bs r}^{(t)}=\left(\tilde{\bs r}_1^{(t)},\dots,\tilde{\bs r}_p^{(t)}\right)^T\), \(t\in\N\) should be viewed as 
random vectors 
with values in the compact set \(\mathcal{I}^p=[\epsilon,1-\epsilon]^p\). 


\item[(b)] For a (current) model \(S\in\mathcal{M}\) and a vector of proposal probabilities \(\tilde{\bs{r}}\in[\epsilon,1-\epsilon]^p\), let \(P(\cdot\,|\,S;\tilde{\bs r})\) denote the one-step transition kernel of MAdaSub, i.e.\ for iteration \(t\in\N\) of MAdaSub and a subset of models \(A'\subseteq\mathcal{M}\) we have
\begin{equation}
P(A'\,|\,S;\tilde{\bs r}) = P\left(S^{(t)}\in A'\,\big|\, S^{(t-1)}=S, \tilde{\bs r} ^{(t-1)} = \tilde{\bs r}\right) \,.
\label{eq:transition} 
\end{equation}
In particular, for \(S'\in\mathcal{M}\), let
\(
P(S'\,|\,S;\tilde{\bs r}) \equiv  P(\{S'\}\,|\,S;\tilde{\bs r})
\)
denote the probability that the next state of the MAdaSub chain is \(S^{(t)}=S'\), given the current model \(S^{(t-1)}=S\) and the current vector of proposal probabilities \(\tilde{\bs r} ^{(t-1)}=\tilde{\bs r}\). Note that for \(\tilde{\bs{r}}\in[\epsilon,1-\epsilon]^p\) and \(S,S'\in\mathcal{M}\) with \(S\neq S'\) we have 
\begin{equation}
P(S'\,|\,S;\tilde{\bs r}) = q(S'; \tilde{\bs r}) \, \alpha(S'\,|\,S;\tilde{\bs r})\, ,
\end{equation}
where \(q(S'; \tilde{\bs r})\) is the probability of proposing the model \(S'\) and \(\alpha(S'\,|\,S;\tilde{\bs r})\) is the corresponding acceptance probability.

\item[(c)] 
For \(t\in\N\), \(S\in\mathcal{M}\), \(A'\subseteq\mathcal{M}\) and \(\tilde{\bs{r}}\in[\epsilon,1-\epsilon]^p\) let 
\begin{equation}
P^{(t)}(A'\,|\,S;\tilde{\bs r}) := P\left(S^{(t)}\in A'\,\big|\, S^{(0)}=S, \tilde{\bs r} ^{(0)} = \ldots = \tilde{\bs r} ^{(t-1)} = \tilde{\bs r}\right)
\end{equation}
denote the \(t\)-step transition kernel of MAdaSub when the vector of proposal probabilities \(\tilde{\bs{r}}\) is fixed (i.e.\ not adapted during the algorithm). Similarly, let
\begin{equation}
Q^{(t)}(A'\,|\,S;\tilde{\bs r}) := P\left(S^{(t)}\in A'\,\big|\, S^{(0)}=S, \tilde{\bs r} ^{(0)} = \tilde{\bs r}\right)
\label{eq:transitiont} 
\end{equation}
denote the \(t\)-step transition kernel for the first \(t\) iterations of MAdaSub, given only the initial conditions \(S^{(0)}=S\) and \(\tilde{\bs r} ^{(0)} = \tilde{\bs r}\). 
\end{enumerate}
\end{notationS}

The following theorem provides the ergodicity result of \citet[Theorem~1]{roberts2007} 
adjusted to the specific setting of MAdaSub.  

\begin{theoremS}[\citealp{roberts2007}]\label{thm:roberts}
Consider the MAdaSub algorithm with initial parameters \(\bs r^{(0)}\in(0,1)^p\), \(L_j>0\) and \(\epsilon\in(0,0.5)\). 
Suppose that for each fixed vector of proposal probabilities \(\tilde{\bs r}\in[\epsilon,1-\epsilon]^p\), the one-step kernel \(P(\cdot\,|\,\cdot\,;\tilde{\bs r})\) of MAdaSub is stationary for the target distribution \(\pi(\cdot\,|\,\mathcal{D})\), i.e.\ for all \(S'\in\mathcal{M}\) we have 
\begin{equation}
\pi(S'\,|\,\mathcal{D}) = \sum_{S\in\mathcal{M}} P(S'\,|\,S;\tilde{\bs r}) \, \pi(S\,|\,\mathcal{D}) \,.
\end{equation}
Further suppose that the following two conditions hold: 
\begin{enumerate}
\item[(a)] The \textbf{simultaneous uniform ergodicity} condition is satisfied, i.e.\ for all \(\delta>0\), there exists an integer \(T\in\N\) such that 
\begin{equation}\left\lVert P^{(T)}(\cdot\,|\,S; \tilde{\bs r}) -\pi(\cdot\,|\,\mathcal{D})\right\rVert_{TV} \leq \delta\end{equation} 
for all \(S\in\mathcal{M}\) and \(\tilde{\bs r}\in[\epsilon,1-\epsilon]^p\), where \(\lVert P_1-P_2\rVert_{TV} = \sup_{A\in\mathfrak{A}}|P_1(A)-P_2(A)|\) denotes the total variation distance between two distributions \(P_1\) and \(P_2\) defined on some common measurable space \((\Omega,\mathfrak{A})\).
\item[(b)] The \textbf{diminishing adaptation} condition is satisfied, i.e.\ we have 
\begin{equation} \max_{S \in\mathcal{M}} ~\left\lVert\, P\big(\cdot\big|\, S; \tilde{ \bs r}^{(t)}\big) - P\big(\cdot\big|\, S; \tilde{ \bs r}^{(t-1)}\big)\,\right\rVert_{TV} \overset{\text{P}}{\rightarrow} 0 \,,  ~~ t\rightarrow\infty \,,
\end{equation} 
where \(\tilde{ \bs r}^{(t)}\) and \(\tilde{ \bs r}^{(t-1)}\) are random vectors of proposal probabilities induced by the MAdaSub algorithm (see Notation~\ref{not:MAdaSubSupp}).  
\end{enumerate} 
Then the MAdaSub algorithm is \textbf{ergodic}, i.e.\ for all \(S\in\mathcal{M}\) and \(\tilde{\bs r}\in[\epsilon,1-\epsilon]^p\) we have
\begin{equation}
\left\lVert Q^{(t)}(\cdot\,|\, S; \tilde{\bs r}) - \pi(\cdot\,|\,\mathcal{D}) \right\rVert_{TV} \rightarrow 0, ~~ t\rightarrow\infty \, . 
\end{equation}
Furthermore, the \textbf{weak law of large numbers} holds for MAdaSub, i.e.\ for any function \(g:\mathcal{M}\rightarrow\R\) we have 
\begin{equation}
 \frac{1}{t}\sum_{i=1}^t g(S^{(i)}) \overset{\text{P}}{\rightarrow} E[g\,|\,\mathcal{D}] \,,~~ t\rightarrow\infty \,,
\end{equation}
where \(E[g\,|\,\mathcal{D}]=\sum_S g(S)\pi(S\,|\,\mathcal{D})\) denotes the posterior expectation of \(g\).
\end{theoremS}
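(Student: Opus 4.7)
The plan is to invoke the general adaptive MCMC ergodicity theorem of \citet{roberts2007} directly, since conditions (a) and (b) in the statement are written so as to match precisely the containment (simultaneous uniform ergodicity) and diminishing adaptation hypotheses of their framework. The stationarity hypothesis on \(P(\cdot\,|\,\cdot\,;\tilde{\bs r})\) ensures that every ``frozen'' kernel (obtained by fixing the proposal parameter) is \(\pi(\cdot\,|\,\mathcal{D})\)-invariant; this holds automatically here because \(P(\cdot\,|\,\cdot\,;\tilde{\bs r})\) is an independence Metropolis-Hastings kernel whose acceptance ratio enforces detailed balance with respect to \(\pi(\cdot\,|\,\mathcal{D})\). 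With these three ingredients in hand, the theorem is essentially a specialization of \citet[Theorem~1]{roberts2007} to the finite state space \(\mathcal{M}\) with adaptation parameter taking values in the compact set \(\mathcal{I}^p=[\epsilon,1-\epsilon]^p\).

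The underlying argument, which I would outline rather than reproduce in detail, is a coupling between the adaptive chain \(\{S^{(t)}\}\) and a frozen chain driven by the kernel \(P(\cdot\,|\,\cdot\,;\tilde{\bs r}^{(t_0)})\) for a suitably chosen truncation time \(t_0\). Condition (a) ensures that, uniformly in the frozen parameter, the frozen chain reaches \(\delta\)-proximity to \(\pi(\cdot\,|\,\mathcal{D})\) within some common number of steps \(T\). Condition (b) ensures that, over this block of length \(T\), the successive kernels used by the adaptive chain are with high probability close in total variation to the single frozen kernel, so the coupling inequality yields
\[
\bigl\lVert Q^{(t_0+T)}(\cdot\,|\,S;\tilde{\bs r}) - \pi(\cdot\,|\,\mathcal{D})\bigr\rVert_{TV} \;\leq\; \delta + o(1),
\]
as \(t_0\to\infty\); choosing \(\delta\to 0\) afterwards completes the ergodicity claim. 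The weak law of large numbers is then obtained by a near-identical coupling argument applied to the time-averaged functionals \(t^{-1}\sum_{i=1}^t g(S^{(i)})\), together with the standard WLLN for each frozen chain (which holds since each frozen kernel is uniformly ergodic on the finite state space \(\mathcal{M}\)).

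The main obstacle is not in the present theorem but in the accompanying result (Theorem~\ref{thm:MAdaSub}), where one must verify the two hypotheses for the concrete MAdaSub dynamics: simultaneous uniform ergodicity will follow from the fact that, for any \(\tilde{\bs r}\in\mathcal{I}^p\), every model \(V\in\mathcal{M}\) is proposed with probability at least \(\epsilon^p>0\) (a Doeblin/minorization condition on the finite space \(\mathcal{M}\)), while diminishing adaptation will follow from the explicit \(\mathcal{O}(1/(L_j+t))\) change in each \(r_j^{(t)}\) prescribed by the update~\eqref{eq:update}. For the present statement, however, the work reduces to bookkeeping: checking that the abstract state space, adaptation index set, kernel family, and continuity/measurability conditions of \citet{roberts2007} are all trivially met by the finite-\(\mathcal{M}\), compact-\(\mathcal{I}^p\) setting of MAdaSub.
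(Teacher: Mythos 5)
Your proposal matches the paper's treatment exactly: this statement is not proved in the paper but is imported verbatim as a specialization of \citet[Theorem~1]{roberts2007} (with the weak law of large numbers from their companion result) to the finite state space \(\mathcal{M}\) and compact parameter set \([\epsilon,1-\epsilon]^p\), and you correctly identify that the substantive verification work (minorization via \(\epsilon^p\) and the \(\mathcal{O}(1/(L_j+t))\) adaptation bound) belongs to the proof of Theorem~\ref{thm:MAdaSub}, which is where the paper carries it out. Your sketch of the underlying coupling argument is an accurate account of the cited proof and is consistent with, though not reproduced in, the paper.
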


In the following we will show that MAdaSub satisfies both the simultaneous uniform ergodicity condition and the diminishing adaptation condition, so that Theorem~\ref{thm:roberts} can be applied.   

\begin{lemmaS}\label{lemma:simuniform}
The simultaneous uniform ergodicity condition is satisfied for the MAdaSub algorithm for all choices of \(\bs r^{(0)}\in(0,1)^p\), \(L_j>0\) and \(\epsilon\in(0,0.5)\).
\end{lemmaS}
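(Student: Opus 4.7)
The plan is to exploit the fact that the model space $\mathcal{M}$ is finite together with the forced truncation of the proposal probabilities into the compact interval $[\epsilon, 1-\epsilon]$ to establish a uniform Doeblin minorization condition for the family of Metropolis--Hastings kernels $\{P(\cdot\,|\,\cdot\,;\tilde{\bs r})\}_{\tilde{\bs r}\in[\epsilon,1-\epsilon]^p}$. From such a uniform minorization, the simultaneous uniform ergodicity condition follows by a standard coupling argument.

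The first step is to obtain a uniform lower bound on the one-step proposal probabilities. For any $\tilde{\bs r}\in[\epsilon,1-\epsilon]^p$ and any $V\in\mathcal{M}$, each factor in $q(V;\tilde{\bs r}) = \prod_{j\in V}\tilde{r}_j\prod_{j\in\mathcal{P}\setminus V}(1-\tilde{r}_j)$ is at least $\epsilon$, so
\[
q(V;\tilde{\bs r})\;\geq\;\epsilon^p\qquad\text{for all }V\in\mathcal{M},\,\tilde{\bs r}\in[\epsilon,1-\epsilon]^p.
\]
Let $\pi_{\min} := \min_{S\in\mathcal{M}} \pi(S\,|\,\mathcal{D})$, which is strictly positive because $\mathcal{M}$ is finite and the posterior has full support under the considered priors.

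The second step is to turn this into a lower bound on the transition kernel itself. For $S'\neq S$ the Metropolis--Hastings transition probability equals
\[
P(S'\,|\,S;\tilde{\bs r}) \;=\; \min\!\left\{\frac{\pi(S'\,|\,\mathcal{D})\,q(S;\tilde{\bs r})}{\pi(S\,|\,\mathcal{D})},\;q(S';\tilde{\bs r})\right\},
\]
and since $q(S;\tilde{\bs r})\geq\epsilon^p$, $q(S';\tilde{\bs r})\geq\epsilon^p$ and $\pi(S\,|\,\mathcal{D})\leq 1$, one gets $P(S'\,|\,S;\tilde{\bs r})\geq\epsilon^p\pi_{\min}$ by treating the two cases $\pi(S'\,|\,\mathcal{D})\gtreqless\pi(S\,|\,\mathcal{D})$ separately. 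The diagonal case $S'=S$ is handled by noting that a self-loop is always accepted, so $P(S\,|\,S;\tilde{\bs r})\geq q(S;\tilde{\bs r})\geq\epsilon^p\geq\epsilon^p\pi_{\min}$. Summing over $S'\in A'$ yields
\[
P(A'\,|\,S;\tilde{\bs r}) \;\geq\; \epsilon^p\pi_{\min}\,|A'| \;=\; \eta\,\nu_U(A'),
\]
with $\eta := 2^p\epsilon^p\pi_{\min}\in(0,1]$ and $\nu_U$ the uniform probability measure on $\mathcal{M}$. This is a minorization condition that holds \emph{simultaneously} for all $S\in\mathcal{M}$ and all $\tilde{\bs r}\in[\epsilon,1-\epsilon]^p$.

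The third step is to invoke the classical Doeblin/coupling bound: once a one-step minorization $P(\cdot\,|\,S;\tilde{\bs r})\geq\eta\nu_U(\cdot)$ holds uniformly, iterating the coupling gives
\[
\left\lVert P^{(T)}(\cdot\,|\,S;\tilde{\bs r}) - \pi(\cdot\,|\,\mathcal{D})\right\rVert_{TV} \;\leq\; (1-\eta)^T
\]
for every $S$ and $\tilde{\bs r}$, where the stationarity of each fixed kernel $P(\cdot\,|\,\cdot\,;\tilde{\bs r})$ for $\pi(\cdot\,|\,\mathcal{D})$ is inherited from the standard Metropolis--Hastings construction. Given $\delta>0$, choosing $T\in\N$ with $(1-\eta)^T\leq\delta$ delivers the claim.

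The only genuine obstacle is the bookkeeping at the diagonal and the verification that $\pi_{\min}>0$; both are straightforward once one uses the finiteness of $\mathcal{M}$ and the positivity of the prior. Everything else is a direct application of the uniform lower bound $q(V;\tilde{\bs r})\geq\epsilon^p$ guaranteed by the truncation step 2(a) of Algorithm~\ref{algo:MCMC}.
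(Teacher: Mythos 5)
Your proof is correct and takes essentially the same route as the paper: both establish a uniform one-step Doeblin minorization $P(S'\,|\,S;\tilde{\bs r})\geq\beta_0>0$ over all $S,S'\in\mathcal{M}$ and all $\tilde{\bs r}\in[\epsilon,1-\epsilon]^p$, using the truncation bound $q(\cdot;\tilde{\bs r})\geq\epsilon^p$ and the positivity of $\min_S\pi(S\,|\,\mathcal{D})$ on the finite model space, and then invoke the standard small-set/coupling bound (the paper cites Theorem~8 of Roberts and Rosenthal, 2004, where you write out the $(1-\eta)^T$ rate explicitly). Your constant $\epsilon^p\pi_{\min}$ is slightly sharper than the paper's $\epsilon^{2p}\pi_{\min}$, but this is immaterial to the conclusion.
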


\begin{proof}
Here we make use of a very similar argumentation as in the proof of Lemma~1 in \citet{griffin2018}. 
We show that \(\mathcal{M}\) is a \textit{1-small set} (see \citealp[Section~3.3]{roberts2004})
, i.e.\ there exists \(\beta>0\) and a probability measure \(\nu\) on \(\mathcal{M}\) such that \(P(A'\,|\,S; \tilde{\bs r})\geq\beta\nu(A')\) for all \(S\in\mathcal{M}\), \(A'\subseteq\mathcal{M}\) and \(\tilde{\bs r}\in [\epsilon,1-\epsilon]^p\). Then by Theorem~8 in \citet{roberts2004}, the simultaneous uniform ergodicity condition is satisfied. \\
In order to prove that \(\mathcal{M}\) is 1-small (note that \(\mathcal{M}\) is finite), it suffices to show that there exists a constant \(\beta_0>0\) such that \(P(S'\,|\,S; \tilde{\bs r})\geq\beta_0\) for all \(S,S'\in\mathcal{M}\) and all \(\tilde{\bs r}\in [\epsilon,1-\epsilon]^p\). Indeed, for \(S,S'\in\mathcal{M}\) and \(\tilde{\bs r}\in [\epsilon,1-\epsilon]^p\) it holds
\vspace{-2mm}
\begin{align*}
P(S'\,|\,S; \tilde{\bs r}) &\geq q(S'; \tilde{\bs r}) \, \alpha(S'\,|\,S; \tilde{\bs r}) \\
                           &= \Bigg(\prod_{j\in S'} \underbrace{\tilde{r}_j}_{\geq\epsilon} \Bigg) \Bigg(\prod_{j\in\mathcal{P}\setminus S'} \underbrace{(1-\tilde{r}_j)}_{\geq\epsilon} \Bigg) \min\left\{\frac{\pi(S'\,|\,\mathcal{D}) \, q(S; \tilde{\bs r})}  {\pi(S\,|\,\mathcal{D}) \, q(S'; \tilde{\bs r} )} ,1 \right\} \\
													&\geq \epsilon^p \pi(S'\,|\,\mathcal{D}) \, q(S; \tilde{\bs r}) \geq \epsilon^{2p} \min_{S\in\mathcal{M}} \pi(S\,|\,\mathcal{D}) =: \beta_0 \,.
\end{align*}
This completes the proof.
\end{proof}

In order to show that the diminishing adaptation condition is satisfied for the MAdaSub algorithm, we will make repeated use of the following simple observation. 

\begin{lemmaS}\label{lemma:help}
Let \(m\in\N\) be fixed. For \(j\in\{1,\dots,m\}\) let \(\left(a_j^{(t)}\right)_{t\in\N_0}\) be bounded sequences of real numbers \(a_j^{(t)}\in\R\) with \(|a_j^{(t)}-a_j^{(t-1)}|\rightarrow 0\) for \(t\rightarrow\infty\). Then we have
\begin{equation} \left|\prod_{j=1}^m a_j^{(t)} - \prod_{j=1}^m a_j^{(t-1)} \right| \rightarrow 0,~~ t\rightarrow\infty \,. \label{eq:prod0}\end{equation}
\end{lemmaS}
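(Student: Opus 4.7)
The plan is to prove Lemma A.2 by a telescoping decomposition of the difference of products, followed by a term-wise bound using the uniform boundedness assumption.

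First I would fix some $M>0$ such that $|a_j^{(t)}|\leq M$ for all $j\in\{1,\dots,m\}$ and all $t\in\N_0$; such an $M$ exists because each of the finitely many sequences is bounded. Then I would write the difference as a telescoping sum that swaps one factor at a time, namely
\begin{equation*}
\prod_{j=1}^m a_j^{(t)} - \prod_{j=1}^m a_j^{(t-1)} \;=\; \sum_{k=1}^m \Bigg(\prod_{j=1}^{k-1} a_j^{(t)}\Bigg)\bigl(a_k^{(t)}-a_k^{(t-1)}\bigr)\Bigg(\prod_{j=k+1}^{m} a_j^{(t-1)}\Bigg),
\end{equation*}
which one verifies by collapsing the sum (the $k$-th and $(k+1)$-st terms share a common factor that cancels). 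Taking absolute values and using $|a_j^{(\cdot)}|\leq M$ in every slot except the $k$-th yields
\begin{equation*}
\Bigg|\prod_{j=1}^m a_j^{(t)} - \prod_{j=1}^m a_j^{(t-1)}\Bigg| \;\leq\; M^{m-1}\sum_{k=1}^m \bigl|a_k^{(t)}-a_k^{(t-1)}\bigr|.
\end{equation*}

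As each of the $m$ summands on the right tends to $0$ as $t\to\infty$ by assumption and $m$ is fixed, the right-hand side tends to $0$, which gives the claim. Alternatively, the same conclusion follows by a straightforward induction on $m$: the step uses the identity $\prod_{j=1}^{m+1}a_j^{(t)} - \prod_{j=1}^{m+1}a_j^{(t-1)} = \bigl(\prod_{j=1}^m a_j^{(t)}\bigr)(a_{m+1}^{(t)}-a_{m+1}^{(t-1)}) + a_{m+1}^{(t-1)}\bigl(\prod_{j=1}^m a_j^{(t)} - \prod_{j=1}^m a_j^{(t-1)}\bigr)$, and both summands vanish in the limit by boundedness combined with the inductive hypothesis.

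There is no real obstacle here; the only minor care needed is to use a single uniform bound $M$ valid for all $j$ and all $t$ (which is immediate since each sequence is bounded and there are finitely many of them), so that the $M^{m-1}$ prefactor in the estimate is a genuine constant independent of $t$.
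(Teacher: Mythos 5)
Your proof is correct and essentially the same as the paper's: the paper proves the lemma by induction on $m$ using exactly the two-term splitting you give as your ``alternative'' at the end, and your telescoping sum is just the unrolled form of that induction. Nothing is missing.
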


\begin{proof}
Since \(\left(a_j^{(t)}\right)_{t\in\N_0}\) are bounded sequences, there are constants \(L_j>0\) so that \(|a_j^{(t)}|\leq L_j\) for all \(t\in\N_0\) and \(j\in\{1,\dots,m\}\). 
We proceed by induction on \(m\in\N\): equation~(\ref{eq:prod0}) obviously holds for \(m=1\). Now suppose that the assertion holds for \(m-1\) and we want to show that it also holds for \(m\). Then we have 
\begin{align*}
\left| \prod_{j=1}^m a_j^{(t)} - \prod_{j=1}^m a_j^{(t-1)} \right|  
\leq & \left| a_m^{(t)} \prod_{j=1}^{m-1} a_j^{(t)} - a_m^{(t-1)} \prod_{j=1}^{m-1} a_j^{(t)}\right| + \left| a_m^{(t-1)} \prod_{j=1}^{m-1} a_j^{(t)}  - a_m^{(t-1)} \prod_{j=1}^{m-1} a_j^{(t-1)}  \right| \\
= & \underbrace{\prod_{j=1}^{m-1} \left| a_j^{(t)} \right|}_{\leq \prod_{j=1}^{m-1} L_j} \times \underbrace{\left|a_m^{(t)}- a_m^{(t-1)} \right|}_{\rightarrow 0} + \underbrace{\left| a_m^{(t-1)}\right|}_{\leq L_m} \times \underbrace{\left|\prod_{j=1}^{m-1} a_j^{(t)} - \prod_{j=1}^{m-1} a_j^{(t-1)} \right|}_{\rightarrow 0} \overset{t\rightarrow\infty}{\rightarrow} 0 \, . 
\end{align*}
\end{proof}

\begin{lemmaS}\label{lemma:dimada}
Consider the application of the MAdaSub algorithm on a given dataset \(\mathcal{D}\) with some tuning parameter choices \(\bs r^{(0)}\in(0,1)^p\), \(L_j>0\) and \(\epsilon\in(0,0.5)\). 
Then, for \(j\in\mathcal{P}\), we have
\begin{equation} \left| \tilde{r}_j^{(t)} - \tilde{r}_j^{(t-1)} \right| \overset{\text{a.s.}}{\rightarrow} 0, ~~t\rightarrow\infty \, . \end{equation} 
Furthermore, for all \(S,S'\in\mathcal{M}\) it holds 
\begin{equation} \left| P(S'\,|\,S; \tilde{\bs r}^{(t)}) - P(S'\,|\,S; \tilde{\bs r}^{(t-1)}) \right| \overset{\text{a.s.}}{\rightarrow} 0  , ~~t\rightarrow\infty \,. \label{eq:dimada2}\end{equation} 
In particular, MAdaSub fulfils the diminishing adaptation condition. 
\end{lemmaS}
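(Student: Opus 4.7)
The plan is to establish the two pointwise almost-sure convergence statements via direct analysis of the MAdaSub update rule, then promote the second to total-variation convergence over the finite model space $\mathcal{M}$, which yields diminishing adaptation.

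First, I would unwind the recursive form of the update in equation~(\ref{eq:update}) to write
\[
 r_j^{(t)} - r_j^{(t-1)} \;=\; \frac{\mathbbm{1}_{S^{(t)}}(j) - r_j^{(t-1)}}{L_j + t},
\]
and observe that $r_j^{(t-1)} \in [0,1]$ for all $t$ (as a convex combination of values in $[0,1]$ initialized by $r_j^{(0)} \in (0,1)$), so the numerator is bounded in absolute value by $1$. Hence $|r_j^{(t)} - r_j^{(t-1)}| \le 1/(L_j+t)$ deterministically. The truncation map $x \mapsto \max\{\epsilon,\min\{1-\epsilon,x\}\}$ is $1$-Lipschitz, so the same bound applies to $|\tilde r_j^{(t)} - \tilde r_j^{(t-1)}|$, yielding the first assertion in a deterministic (hence a.s.) sense.

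Next, for the second assertion I would treat the cases $S' \ne S$ and $S' = S$ separately. For $S' \ne S$, recall $P(S'|S;\tilde{\bs r}) = q(S';\tilde{\bs r})\,\alpha(S'|S;\tilde{\bs r})$, where $q(S';\tilde{\bs r}) = \prod_{j \in S'}\tilde r_j \prod_{j \notin S'}(1-\tilde r_j)$ is a product of $p$ factors, each in $[\epsilon,1-\epsilon]$ and each converging consecutively to $0$ by the first part. Applying Lemma~\ref{lemma:help} (with the bounded sequences $\tilde r_j^{(t)}$ or $1-\tilde r_j^{(t)}$) gives $|q(S';\tilde{\bs r}^{(t)}) - q(S';\tilde{\bs r}^{(t-1)})| \to 0$ a.s. Because the truncation guarantees $q(S;\tilde{\bs r}), q(S';\tilde{\bs r}) \ge \epsilon^p > 0$, the ratio $q(S;\tilde{\bs r})/q(S';\tilde{\bs r})$ is a bounded continuous function of $\tilde{\bs r}$ on the compact box $[\epsilon,1-\epsilon]^p$; composing with the $1$-Lipschitz $\min\{\cdot,1\}$ shows $\alpha(S'|S;\tilde{\bs r}^{(t)}) - \alpha(S'|S;\tilde{\bs r}^{(t-1)}) \to 0$ a.s. A second application of Lemma~\ref{lemma:help} to the two-factor product $q\cdot\alpha$ yields equation~(\ref{eq:dimada2}) for $S' \ne S$. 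The case $S' = S$ follows from $P(S|S;\tilde{\bs r}) = 1 - \sum_{S' \ne S} P(S'|S;\tilde{\bs r})$, a finite sum.

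Finally, for diminishing adaptation I would use that $\mathcal{M}$ is finite, so
\[
 \bigl\lVert P(\cdot|S;\tilde{\bs r}^{(t)}) - P(\cdot|S;\tilde{\bs r}^{(t-1)})\bigr\rVert_{TV} \;=\; \tfrac{1}{2}\sum_{S' \in \mathcal{M}} \bigl|P(S'|S;\tilde{\bs r}^{(t)}) - P(S'|S;\tilde{\bs r}^{(t-1)})\bigr|,
\]
a finite sum of terms each converging to $0$ almost surely. Taking the maximum over $S \in \mathcal{M}$ (also a finite set) preserves a.s.\ convergence, and a.s.\ convergence implies convergence in probability, which is exactly the diminishing adaptation condition stated in Theorem~\ref{thm:roberts}(b).

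I do not anticipate a substantive obstacle: the truncation step~2(a) of Algorithm~\ref{algo:MCMC} is doing all the heavy lifting by keeping the proposal probabilities bounded away from $0$ and $1$, which makes every map involved Lipschitz/continuous on a compact set. The only point requiring minor care is bookkeeping when iterating Lemma~\ref{lemma:help} to handle the product structure of $q(\cdot;\tilde{\bs r})$ and the subsequent composition with the acceptance probability; this is routine once the uniform lower bound $q(\cdot;\tilde{\bs r}) \ge \epsilon^p$ has been pointed out.
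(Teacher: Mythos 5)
Your proposal is correct and follows essentially the same route as the paper's proof: a deterministic $O(1/(L_j+t))$ bound on consecutive differences of the (truncated) proposal probabilities, propagation through the product form of $q$ via Lemma~\ref{lemma:help}, an analogous argument for the acceptance ratio using the lower bound $q(\cdot;\tilde{\bs r})\geq\epsilon^p$, complementation for the diagonal case $S'=S$, and finiteness of $\mathcal{M}$ to pass to the total-variation statement. The only (cosmetic) differences are that you obtain the first bound directly from the recursive form of the update rather than from the explicit fraction via the triangle inequality, and you invoke uniform continuity on the compact box for the ratio $q(S;\tilde{\bs r})/q(S';\tilde{\bs r})$ where the paper applies Lemma~\ref{lemma:help} once more with $m=2$.
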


\begin{proof}
For \(j\in\mathcal{P}\) we have 
\vspace{-1mm}
\begin{align*}
\Big|\tilde{r}_j^{(t)} - \tilde{r}_j^{(t-1)}\Big| 
&\leq \Big|r_j^{(t)} - r_j^{(t-1)}\Big| \\
 &\leq \left| \frac{L_jr_j^{(0)} + \sum_{i=1}^t \mathbbm{1}_{S^{(i)}}(j)}{L_j + t} - \frac{L_jr_j^{(0)} + \sum_{i=1}^{t-1} \mathbbm{1}_{S^{(i)}}(j)}{L_j+t-1}  \right| \\
&\leq \left| \frac{L_jr_j^{(0)} +  \sum_{i=1}^t \mathbbm{1}_{S^{(i)}}(j)}{L_j + t} - \frac{L_jr_j^{(0)} + \sum_{i=1}^{t} \mathbbm{1}_{S^{(i)}}(j)}{L_j+t-1}  \right| \\
&~~~ +  \left| \frac{L_jr_j^{(0)} + \sum_{i=1}^{t} \mathbbm{1}_{S^{(i)}}(j)}{L_j+t-1} - \frac{L_jr_j^{(0)} +\sum_{i=1}^{t-1} \mathbbm{1}_{S^{(i)}}(j)}{L_j+t-1}  \right| \\
&\leq \underbrace{\frac{L_jr_j^{(0)} + \sum_{i=1}^t \mathbbm{1}_{S^{(i)}}(j)}{L_j+t}}_{\in(0,1)} \times \underbrace{\frac{1}{L_j+t-1}}_{\rightarrow 0} + \underbrace{\frac{1}{L_j+t-1}}_{\rightarrow 0} \overset{\text{a.s.}}{\rightarrow} 0 , ~~t\rightarrow\infty\,. 
\end{align*}
With Lemma~\ref{lemma:help} (set \(m=p\) and note that the number of variables \(p=|\mathcal{P}|\) is fixed for the given dataset) we conclude that for \(V\in\mathcal{M}\) it holds 
\begin{align} 
 \left| q(V; \tilde{\bs r}^{(t)}) - q(V; \tilde{\bs r}^{(t-1)}) \right| = \left| \prod_{j\in V} \tilde{r}_j^{(t)} \prod_{j\in\mathcal{P}\setminus V} \left(1-\tilde{r}_j^{(t)}\right) - \prod_{j\in V} \tilde{r}_j^{(t-1)} \prod_{j\in\mathcal{P}\setminus V} \left(1-\tilde{r}_j^{(t-1)}\right) \right| \overset{\text{a.s.}}{\rightarrow}  0\,. \label{eq:zero1}
\end{align} 
Let \(S,S'\in\mathcal{M}\) and suppose that \(S\neq S'\). Then we have 
\begin{align}
\left| P(S'\,|\,S; \tilde{\bs r}^{(t)}) - P(S'\,|\,S; \tilde{\bs r}^{(t-1)}) \right| = \left| q(S'; \tilde{\bs r}^{(t)})\alpha(S'\,|\,S; \tilde{\bs r}^{(t)}) - q(S'; \tilde{\bs r}^{(t-1)})\alpha(S'\,|\,S; \tilde{\bs r}^{(t-1)}) \right| \,. \label{eq:zero2}
\end{align}
Note that \(q(S'; \tilde{\bs r}^{(t)})\in[\epsilon^p,(1-\epsilon)^p]\) and \( \alpha(S'\,|\,S; \tilde{\bs r}^{(t)}) \in [0,1]\) for all \(t\in\N_0\). Furthermore, we have already shown that \( \left| q(S'; \bs r^{(t)}) - q(S'; \bs r^{(t-1)}) \right| \overset{\text{a.s.}}{\rightarrow}  0\) for all \(S'\in\mathcal{M}\). Therefore, we also have 
\begin{align}
  \left| \alpha(S'\,|\,S; \tilde{\bs r}^{(t)}) - \alpha(S'\,|\,S; \tilde{\bs r}^{(t-1)}) \right| 
&\leq \left|  \frac{C(S') \, q(S; \tilde{\bs r}^{(t)})}  {C(S) \, q(S'; \tilde{\bs r}^{(t)} )} - \frac{C(S') \, q(S; \tilde{\bs r}^{(t-1)})}  {C(S) \, q(S'; \tilde{\bs r}^{(t-1)} )}   \right| \nonumber \\
&= \frac{C(S')}{C(S)}  \left|  \frac{ q(S; \tilde{\bs r}^{(t)})}  {q(S'; \tilde{\bs r}^{(t)} )} - \frac{q(S; \tilde{\bs r}^{(t-1)})}  { q(S'; \tilde{\bs r}^{(t-1)} )}  \right| \overset{\text{a.s.}}{\rightarrow}  0 \,, \label{eq:zero3}
\end{align}
where we made use of Lemma~\ref{lemma:help} with \(m=2\) and  \[ a_1^{(t)} = q(S; \tilde{\bs r}^{(t)})\in[\epsilon^p,(1-\epsilon)^p] ~~\text{  and  } ~~ a_2^{(t)} = \frac{1}{q(S'; \tilde{\bs r}^{(t)})}\in[(1-\epsilon)^{-p},\epsilon^{-p}] \, , ~~ t\in\N_0 \, ,\]
noting that 
\[ \left|a_2^{(t)}-a_2^{(t-1)} \right| = \underbrace{\frac{1}{q(S'; \tilde{\bs r}^{(t)})q(S'; \tilde{\bs r}^{(t-1)})}}_{\leq\epsilon^{-2p}} \left|q(S'; \tilde{\bs r}^{(t)})- q(S'; \tilde{\bs r}^{(t-1)}) \right| \overset{\text{a.s.}}{\rightarrow}  0 \,.\]
Again by using Lemma~\ref{lemma:help} and combining equations~(\ref{eq:zero1}), (\ref{eq:zero2}) and (\ref{eq:zero3}) we conclude that 
\[ \left| P(S'\,|\,S; \tilde{\bs r}^{(t)}) - P(S'\,|\,S; \tilde{\bs r}^{(t-1)}) \right| \overset{\text{a.s.}}{\rightarrow}  0  \,.\]
Finally, we consider the case \(S=S'\). Then it holds
\begin{align*}
\left| P(S\,|\,S; \tilde{\bs r}^{(t)}) - P(S\,|\,S; \tilde{\bs r}^{(t-1)}) \right| &= \Big| 1- \sum_{S'\neq S} P(S'\,|\,S; \tilde{\bs r}^{(t)}) - \Big( 1- \sum_{S'\neq S}  P(S'\,|\,S; \tilde{\bs r}^{(t-1)}) \Big) \Big| \\
&\leq  \sum_{S'\neq S} \Big| P(S'\,|\,S; \tilde{\bs r}^{(t)}) - P(S'\,|\,S; \tilde{\bs r}^{(t-1)}) \Big| \overset{\text{a.s.}}{\rightarrow}  0 \,.
\end{align*}
Thus we have shown that equation~(\ref{eq:dimada2}) holds for all \(S,S'\in\mathcal{M}\). In particular, we conclude that the diminishing adaptation condition is satisfied for MAdaSub (recall that almost sure convergence implies convergence in probability). 
\end{proof}

\begin{theorem}\label{thmA:MAdaSub}
The MAdaSub algorithm (Algorithm~1) is ergodic for all choices of \(\bs r^{(0)}\in(0,1)^p\), \(L_j>0\) and \(\epsilon\in(0,0.5)\) and fulfils the weak law of large numbers. 
\end{theorem}

\begin{proof}
The MAdaSub algorithm fulfils the simultaneous uniform ergodicity condition (see Lemma~\ref{lemma:simuniform}) and the diminishing adaptation condition (see Lemma~\ref{lemma:dimada}). 
Furthermore, for each fixed \(\tilde{\bs r}\in[\epsilon,1-\epsilon]^p\), the corresponding transition kernel  \(P(\cdot\,|\,\cdot\,;\tilde{\bs r})\) is induced by a simple Metropolis-Hastings step and therefore has the desired target distribution \(\pi(\cdot\,|\,\mathcal{D})\) as its stationary distribution. Hence, by Theorem~\ref{thm:roberts} the MAdaSub algorithm is ergodic and fulfils the weak law of large numbers.
\end{proof}

\begin{corollary}
For all choices of \(\bs r^{(0)}\in(0,1)^p\), \(L_j>0\) and \(\epsilon\in(0,0.5)\), the proposal probabilities \(r_j^{(t)}\) of the explanatory variables \(X_j\) in MAdaSub converge (in probability) to the respective posterior inclusion probabilities \(\pi_j=\pi(j\in S\,|\,\mathcal{D})\), i.e. for all \(j\in\mathcal{P}\) it holds that 
\( r_j^{(t)} \overset{\text{P}}{\rightarrow} \pi_j \) as~\(t\rightarrow\infty\).
\end{corollary}

\begin{proof}
Since MAdaSub fulfils the weak law of large numbers (Theorem~\ref{thmA:MAdaSub}), for \(j\in\mathcal{P}\) it holds that 
\[ \frac{1}{t}\sum_{i=1}^t \mathbbm{1}_{S^{(i)}}(j) \overset{\text{P}}{\rightarrow} \pi_j ,~~ t\rightarrow\infty \,.\]
Hence, for \(j\in\mathcal{P}\), we also have
\[r_j^{(t)} = \frac{L_j r_j^{(0)}+\sum_{i=1}^t \mathbbm{1}_{S^{(i)}}(j)}{L_j+t} \overset{\text{P}}{\rightarrow} \pi_j ,~~ t\rightarrow\infty \,.\]
\end{proof}

\clearpage

\setcounter{algorithm}{1}
\begin{samepage}
\section{Algorithmic details of parallel version of MAdaSub} \label{sec:detailsparallel}

\begin{algorithm}[H]
\caption{Parallel version of MAdaSub}\label{algo:parallel} 
\vspace{-0.1cm}
\begin{flushleft} \textbf{Input:} \end{flushleft}
\vspace{-7mm}
\begin{itemize}
\item Number of workers \(K\in\N\).
\vspace{-3mm}
\item Number of rounds \(R\in\N\).
\vspace{-3mm}
\item Number of iterations per round \(T\in\N\). 
\vspace{-3mm}
\item Data $\mathcal{D}=(\bs X,\bs y)$.
\vspace{-3mm}
\item (Approximate) kernel of posterior \(\pi(S\,|\,\mathcal{D})\propto \pi( \bs y \,|\,\bs X, S) \, \pi( S ) \) for \(S\in\mathcal{M}\).
\vspace{-3mm}
\item Vector of initial proposal probabilities \(\bs r^{(k,0)} = \left(r_1^{(k,0)},\dots,r_p^{(k,0)}\right)^T\in(0,1)^p\) for each worker \(k=1,\dots,K\). 
\vspace{-3mm}
\item Adaptation parameters $L^{(k)}_j>0$  for \(j\in\mathcal{P}\) and each worker \(k=1,\dots,K\). 
\vspace{-3mm}
\item Constant \(\epsilon\in(0,0.5)\) (chosen to be small, e.g.\ \(\epsilon\leq\frac{1}{p}\)).
\vspace{-3mm}
\item Starting points \(S^{(k,0)}\in\mathcal{M}\) for \(k=1,\dots,K\) (optional).
\end{itemize}

\vspace{-4mm}
\begin{flushleft} \textbf{Algorithm: } \end{flushleft}
\vspace{-6mm}
\begin{enumerate}
\item[(1)] 
Set \(\bar{\bs r}^{(k,0)} = \bs r^{(k,0)}\) for \(k=1,\dots,K\).  \\[1mm]
For \(k=1,\dots,K\): If starting point \(S^{(k,0)}\) not specified:
\par
\begingroup
\leftskip=0.5cm 
\noindent Sample $b_j^{(k,0)}\sim\text{Bernoulli}\left(r_j^{(k,0)}\right)$ independently for $j\in\mathcal{P}$. \\
Set \(S^{(k,0)} = \{j\in\mathcal{P};~b_j^{(k,0)}=1\}\).
\par
\endgroup 
\item[(2)] For $m=1,\dots,R$: (for each round)
\vspace{-1mm}
\begin{enumerate}
\item[(a)] For \(k=1,\dots,K\): (for each worker in parallel)
  \par
\begingroup
\leftskip=0.4cm 
\vspace{1mm}
\noindent 	Run MAdaSub (Algorithm~1) on worker \(k\) for \(T\) iterations with 
         \begin{itemize}
				 \item starting point \(S^{(k,(m-1)T)}\),
				 \item initial proposal probabilities \(\bar{\bs r}^{(k,m-1)}\),
				 \item initial adaptation parameters \(L^{(k)}_j + (m-1)TK\), for \(j\in\mathcal{P}\).
	       \end{itemize}
					Output: Sampled models \(S^{(k,(m-1)T + t)}\) for \(t=1,\dots,T\).
\par
\endgroup
\vspace{1mm}
\item[(b)] Exchange information between workers: \\
For \(k=1,\dots,K\) compute \(\bar{\bs r}^{(k,m)} = \left(\bar{r}_1^{(k,m)},\dots,\bar{r}_p^{(k,m)}\right)^T\) with 
\[\bar{ r}_j^{(k,m)} = \frac{ L_j^{(k)} r_j^{(k,0)} + \sum_{t=1}^{mT} \sum_{l=1}^K \mathbbm{1}_{S^{(l,t)}}(j) }{ L_j^{(k)} + mTK },~~ j\in\mathcal{P} \,.\]
\end{enumerate}
\end{enumerate}

\vspace{-5mm}
\begin{flushleft} \textbf{Output:}  \end{flushleft}
\vspace{-7mm}
\begin{itemize}
\item For each worker \(k=1,\dots,K\) approximate sample \(S^{(k,b+1)},\dots,S^{(k,RT)}\) from posterior distribution \(\pi(\cdot\,|\,\mathcal{D})\), after burn-in period of length \(b\).
\end{itemize}
\end{algorithm}

\end{samepage}

\clearpage

\section{Ergodicity of parallel version of MAdaSub}\label{sec:ergodicityparallel}

In this section we extend the ergodicity result for the serial MAdaSub algorithm (Algorithm~1) of Section~\ref{sec:ergodicity} to the parallel version of MAdaSub (Algorithm~\ref{algo:parallel}).

\begin{theorem}\label{thm:parallelSup}
Consider the parallel version of MAdaSub (Algorithm~\ref{algo:parallel}). Then, for each worker \(k\in\{1,\dots,K\}\) and all choices of \(\bs r^{(k,0)}\in(0,1)^p\), \(L^{(k)}_j>0\), \(j\in\mathcal{P}\) and \(\epsilon\in(0,0.5)\), each induced chain \(S^{(k,0)},S^{(k,1)},\dots\) of the workers \(k=1,\dots,K\) is ergodic and fulfils the weak law of large numbers.
\end{theorem}

\begin{proof} 
The proof of the simultaneous uniform ergodicity condition for each of the parallel chains is along the lines of the proof for the serial version of MAdaSub (see Lemma~\ref{lemma:simuniform}). As before, we can conclude with Theorem~\ref{thm:roberts} that each parallel chain is ergodic and fulfils the weak law of large numbers, provided that the diminishing adaptation condition is also satisfied for each of the parallel chains. \\
In order to show the diminishing adaptation condition for the chain on worker \(k\in\{1,\dots,K\}\) it suffices to show that for \(j\in\mathcal{P}\) it holds 
\begin{equation} \left| r_j^{(k,t)} - r_j^{(k,t-1)} \right| \overset{\text{a.s.}}{\rightarrow} 0, ~~t\rightarrow\infty \, , \label{eq:dimparallel} \end{equation} 
where 
\begin{equation} r_j^{(k,t)} = \frac{L_j^{(k)} r_j^{(k,0)} + \sum_{l=1,l\neq k}^K \sum_{i=1}^{ \lfloor \frac{t}{T} \rfloor T} \mathbbm{1}_{S^{(l,i)}}(j) + \sum_{i=1}^t \mathbbm{1}_{S^{(k,i)}}(j) } {L_j^{(k)} + \lfloor \frac{t}{T} \rfloor T (K-1) + t } \label{eq:rparallel}\end{equation}
denotes the proposal probability of variable \(X_j\) after \(t\) iterations of the chain on worker \(k\); the remaining steps of the proof are analogous to the proof of diminishing adaptation for the serial version of MAdaSub (see Lemma~\ref{lemma:dimada}). Note that in equation~(\ref{eq:rparallel}) we make use of the convention that \(\sum_{i=a}^b c_i = 0 \) for \(b<a\); additionally, \(\lfloor c \rfloor\in\N\) denotes the greatest integer less than or equal to \(c\in\R\). Furthermore, note that for \(t=mT\) with \(m\in\N\) it holds \(r_j^{(k,t)} = \bar{r}_j^{(k,m)}\) for \(j\in\mathcal{P}, k\in\{1,\dots,K\}\). \\
Using the triangle inequality (compare proof of Lemma~\ref{lemma:dimada}) and noting that for all \(t,T\in\N\) we have \(\lfloor \frac{t}{T} \rfloor - \lfloor \frac{t-1}{T} \rfloor \leq 1\), we conclude that for \(k\in\{1,\dots,K\}\) it holds
\begin{align*}
& ~~~ \left| r_j^{(k,t)} - r_j^{(k,t-1)} \right| \\[2mm]
&= \Bigg| \frac{L_j^{(k)} r_j^{(k,0)} + \sum_{l=1,l\neq k}^K \sum_{i=1}^{ \lfloor \frac{t}{T} \rfloor T} \mathbbm{1}_{S^{(l,i)}}(j) + \sum_{i=1}^t \mathbbm{1}_{S^{(k,i)}}(j) } {L_j^{(k)} + \lfloor \frac{t}{T} \rfloor T (K-1) + t } \\[1mm]
&~~~~ - \frac{L_j^{(k)} r_j^{(k,0)} + \sum_{l=1,l\neq k}^K \sum_{i=1}^{ \lfloor \frac{t-1}{T} \rfloor T} \mathbbm{1}_{S^{(l,i)}}(j) + \sum_{i=1}^{t-1} \mathbbm{1}_{S^{(k,i)}}(j) } {L_j^{(k)} + \lfloor \frac{t-1}{T} \rfloor T (K-1) + t-1 } \Bigg| \\[3mm]
&\leq \underbrace{\frac{L_j^{(k)} r_j^{(k,0)} + \sum_{l=1,l\neq k}^K \sum_{i=1}^{ \lfloor \frac{t}{T} \rfloor T } \mathbbm{1}_{S^{(l,i)}}(j) + \sum_{i=1}^t \mathbbm{1}_{S^{(k,i)}}(j) } {L_j^{(k)} + \lfloor \frac{t}{T} \rfloor T (K-1) + t } }_{\in(0,1)} 
\times \underbrace{\frac{(K-1)T\left(\lfloor \frac{t}{T} \rfloor - \lfloor \frac{t-1}{T} \rfloor \right) + 1 } { L_j^{(k)} + \lfloor \frac{t-1}{T} \rfloor T (K-1) + t-1 }}_{\rightarrow 0} \\[1mm]
&~~~~+ \underbrace{\frac{(K-1)T +1}{L_j^{(k)} + \lfloor \frac{t-1}{T} \rfloor T (K-1) + t-1 }}_{\rightarrow 0} \overset{\text{a.s.}}{\rightarrow} 0 , ~~t\rightarrow\infty\,.
\end{align*} 
Thus, we have shown that equation~(\ref{eq:dimparallel}) holds and this completes the proof.
\end{proof}

\begin{corollary}
Consider the parallel version of MAdaSub (Algorithm~\ref{algo:parallel}). 
Then, for each worker \(k\in\{1,\dots,K\}\) and all choices of \(\bs r^{(k,0)}\in(0,1)^p\), \(L^{(k)}_j>0\), \(j\in\mathcal{P}\) and \(\epsilon\in(0,0.5)\), the proposal probabilities \(\bar{r}_j^{(k,m)}\) of the explanatory variables \(X_j\) converge (in probability) to the respective posterior inclusion probabilities \(\pi_j=\pi(j\in S\,|\,\mathcal{D})\), i.e. for all \(j\in\mathcal{P}\) and \(k=1,\dots,K\) it holds that 
\( \bar{r}_j^{(k,m)} \overset{\text{P}}{\rightarrow} \pi_j \) as \(m\rightarrow\infty\).
\end{corollary}

\begin{proof}
Since each chain in the parallel MAdaSub algorithm fulfils the weak law of large numbers (Theorem~\ref{thm:parallelSup}), for \(j\in\mathcal{P}\) and \(k\in\{1,\dots,K\}\) it holds that 
\[ \frac{1}{mT}\sum_{i=1}^{mT} \mathbbm{1}_{S^{(k,i)}}(j) \overset{\text{P}}{\rightarrow} \pi_j ,~~ m\rightarrow\infty \,.\]
Hence, for \(j\in\mathcal{P}\) and \(k\in\{1,\dots,K\}\), we also have
\begin{align*}
\bar{r}_j^{(k,m)}  &= \frac{ L_j^{(k)} r_j^{(k,0)} + \sum_{t=1}^{mT} \sum_{l=1}^K \mathbbm{1}_{S^{(l,t)}}(j) }{ L_j^{(k)} + mTK } \\ 
 &= \frac{ \frac{L_j^{(k)} r_j^{(k,0)}}{mTK} + \frac{1}{K} \sum_{l=1}^K \frac{1}{mT} \sum_{t=1}^{mT} \mathbbm{1}_{S^{(l,t)}}(j) } { \frac{L_j^{(k)}}{mTK} + 1 }  \overset{\text{P}}{\rightarrow} \pi_j ,~~ m\rightarrow\infty \,.
\end{align*}\end{proof}

\clearpage

\section{Further approaches related to MAdaSub} \label{sec:comparisonsSup}

\citet{clyde2011} propose a Bayesian Adaptive Sampling (BAS) algorithm which is based on sampling without replacement from the posterior model distribution, where the individual sampling probabilities of the variables are adapted during the algorithm in such a manner that they converge against the posterior inclusion probabilities. By construction, if the number of iterations is equal to the number of possible models, the BAS algorithm enumerates all possible models. However, since BAS samples without replacement, it has to be ensured that no model is sampled twice and therefore, after each iteration of the algorithm, the sampling probabilities of some of the remaining models have to be renormalized. 
Additionally, BAS differs from the other methods discussed in Section~3.2 since it is not an MCMC algorithm and may yield biased estimates of posterior inclusion probabilities after a limited number of iterations. 

Another related adaptive method for Bayesian variable selection has been proposed by \citet{ji2013}. They consider an adaptive independence Metropolis-Hastings algorithm for sampling directly from the posterior distribution of the regression coefficients \(\bs\beta=(\beta_1,\dots,\beta_p)^T\), assuming that the prior of \(\beta_j\) for \(j\in\mathcal{P}\) is given by a mixture of a point-mass at zero (indicating that the corresponding variable~\(X_j\) is not included in the model) and a continuous normal distribution (indicating that variable~\(X_j\) is ``relevant''). Mixtures of normal distributions are used as proposals in the Metropolis-Hastings step, which are adapted during the algorithm to minimize the Kullback-Leibler divergence from the target distribution. The considered family of mixture distributions should ideally have a sufficient number of mixture components to be able to approximate the 
multimodal posterior distribution of~\(\bs\beta\). In comparison, MAdaSub focuses on sampling from the discrete model distribution and makes use of independent Bernoulli distributions as approximations to the targeted posterior model distribution, while the updating scheme is motivated in a Bayesian way. Further, it is not clear how the adaptive mixture approach of \citet{ji2013} scales to very high-dimensional problems.

\clearpage

\renewcommand*{\thefigure}{E.\arabic{figure}}

\section{Additional figures for the illustrative simulated data example of Section~5.1} \label{sec:illustrativeSup}

\begin{figure}[!ht]\centering
\includegraphics[width=0.9\textwidth]{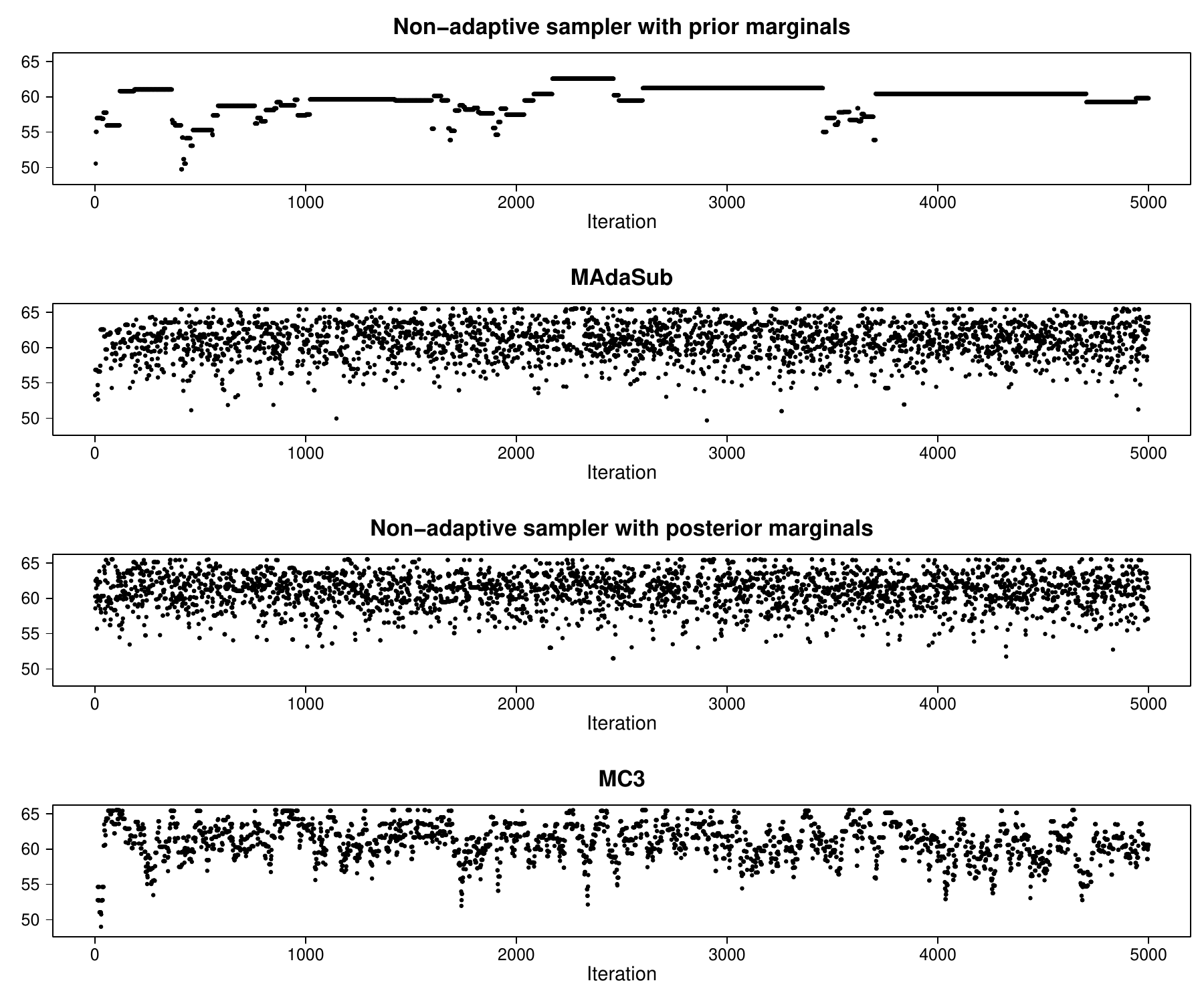} 
\caption{\label{fig:lowdim_values}Illustrative example with g-prior. \small Evolution of the values of the posterior (log-)kernel along the first 5,000 iterations (\(t\)) for non-adaptive sampler with prior marginals as proposal probabilities
, for MAdaSub (with \(L_j=p\)), for non-adaptive sampler with posterior marginals as proposal probabilities and for MC3 sampler (from top to bottom). } 
\end{figure}

\begin{figure}[!t]\centering 
\includegraphics[width=\textwidth]{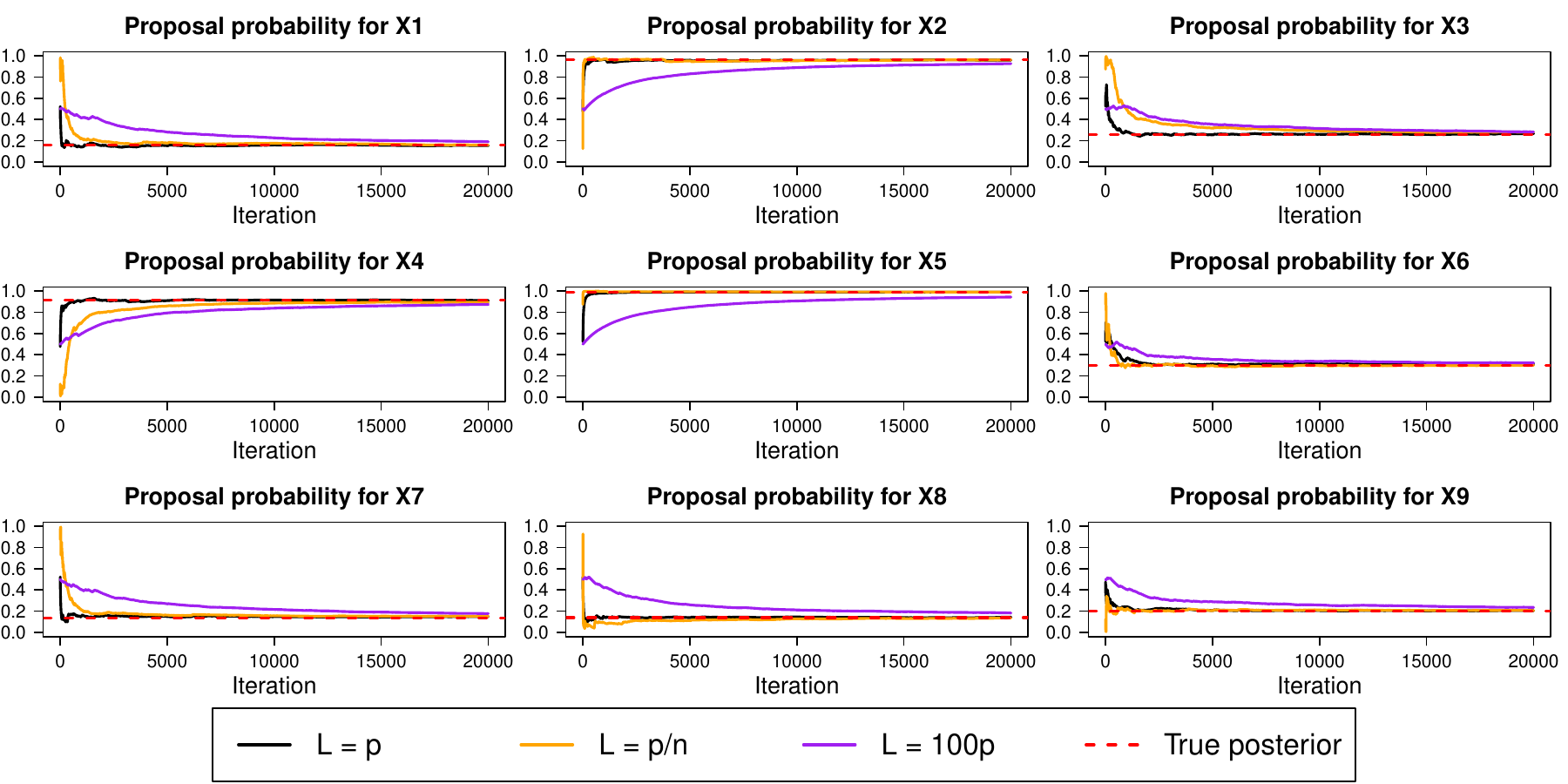}
\caption{\label{fig:lowdim_probabilities} Illustrative example with g-prior. \small Evolution of the proposal probabilities~\(r_j^{(t)}\), for \(j=1,\dots,9\), along the iterations~(\(t\)) of MAdaSub with \(L_j=p\) (black), \(L_j=p/n\) (orange) and \(L_j=100p\) (purple) for \(j\in\mathcal{P}\). The red horizontal lines indicate the true posterior inclusion probabilities.} 
\end{figure}

\begin{figure}[!ht]\centering 
\includegraphics[width=\textwidth]{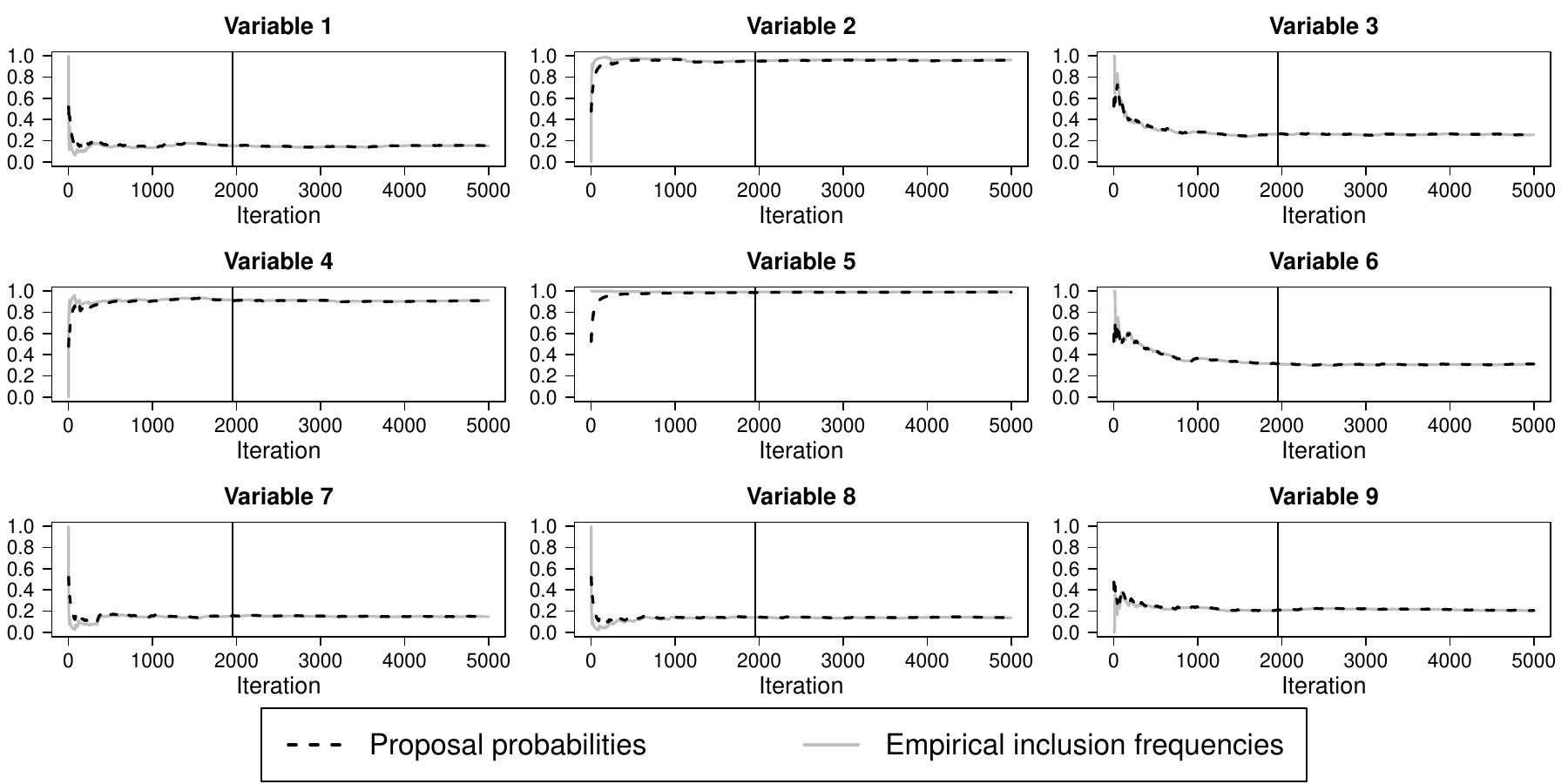}
\caption{\label{fig:lowdim_convergence_check}Illustrative example with g-prior. \small Evolution of proposal probabilities~\(r_j^{(t)}\) and running empirical inclusion frequencies~\(f_j^{(t)}\) along the iterations~(\(t\)) of MAdaSub with \(L_j=p\), for \(j=1,\dots,9\). The vertical line indicates the smallest iteration~\(t_c\) for which \(\max_{j\in\mathcal{P}}|f_j^{(t_c)}-r_j^{(t_c)}|\leq 0.005\).} 
\end{figure}

\clearpage


\setcounter{figure}{0}
\setcounter{table}{0}

\renewcommand*{\thefigure}{F.\arabic{figure}}
\renewcommand*{\thetable}{F.\arabic{table}}

\section{Additional results for the low-dimensional simulation study of Section~5.2}

\begin{figure}[!ht]\centering 
\includegraphics[width=\textwidth]{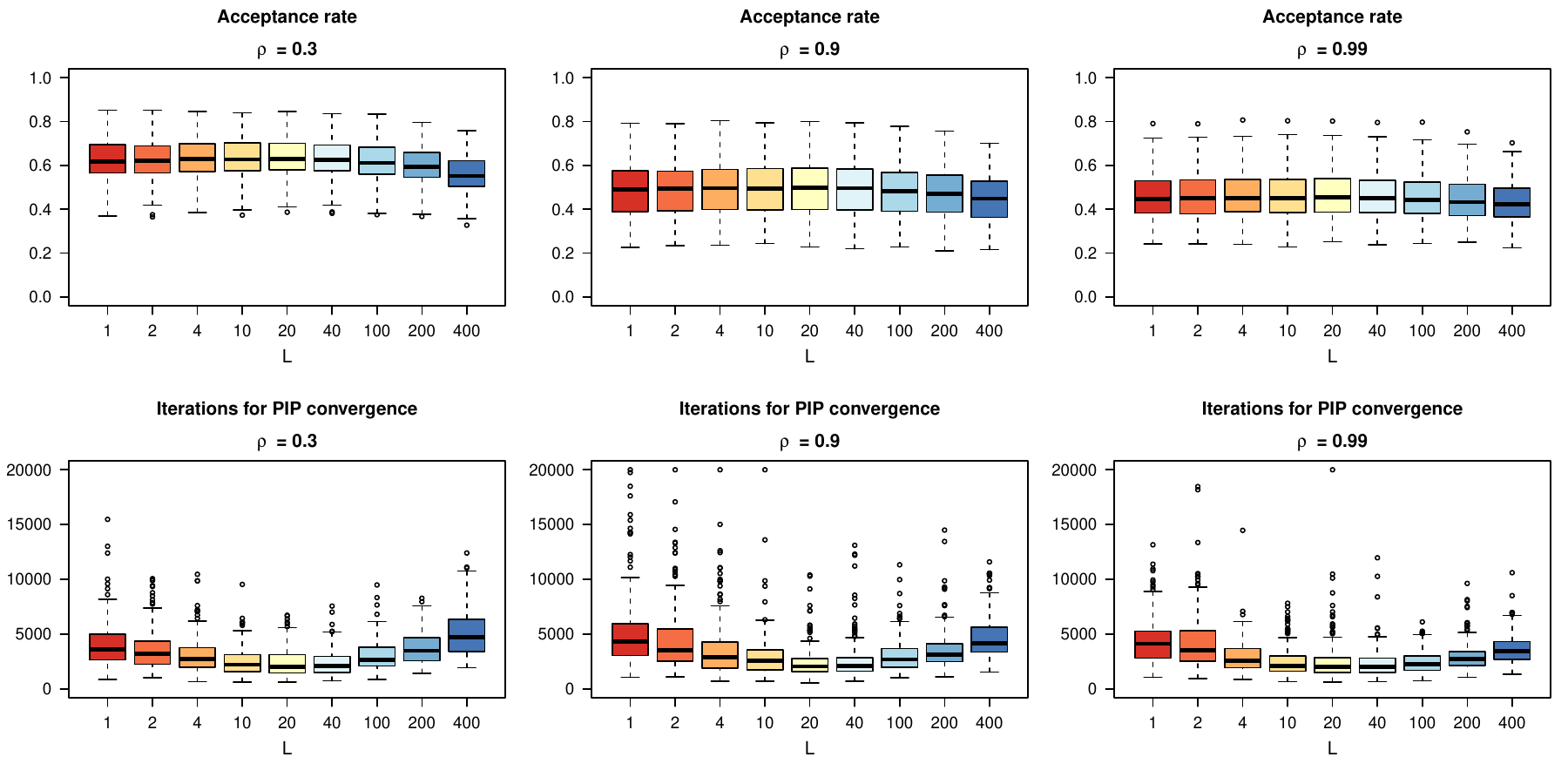}
\caption{\label{fig:simulations_Ljs}\small Results of sensitivity analysis regarding different choices of variance parameters~$L_j=L$ for~$j\in\mathcal{P}$ in MAdaSub for low-dimensional simulation setting with~\(n=60\), \(p=20\) and varying correlation \(\rho\in\{0.3,0.9,0.99\}\) in Toeplitz structure. Initial proposal probabilities \(r_j^{(0)}=0.5\) in MAdaSub are based on prior inclusion probabilities. Performance in terms of acceptance rates (upper plots) and numbers of iterations for convergence of posterior inclusion probabilities~(PIP, lower plots).}  
\end{figure}

\clearpage


\setcounter{figure}{0}
\setcounter{table}{0}

\renewcommand*{\thefigure}{G.\arabic{figure}}
\renewcommand*{\thetable}{G.\arabic{table}}

\section{Additional results for the high-dimensional simulation study of Section~5.3} \label{sec:highdimSup}

In this section we present additional results for the high-dimensional simulation study of Section~5.3 of the main document. \vspace{5mm}

\begin{table*}[!h]
\begin{center}
\resizebox{\textwidth}{!}{\begin{tabular}{ clR{2.8cm}R{2.8cm}R{2.8cm}R{2.8cm} } 
 \hline
 $(n,p)$ & Algorithm & \multicolumn{1}{r}{$\text{SNR}=0.5$}   & \multicolumn{1}{r}{$\text{SNR}=1$} & \multicolumn{1}{r}{$\text{SNR}=2$}  & \multicolumn{1}{r}{$\text{SNR}=3$}  \\ \hline \hline 
 $ (500,500)$        & MAdaSub ser./ par. & 56.6 / \,\,\,20.3      & 6.2 / \,\,\,\,\,\,2.5  & 1.0 / \,\,\,\,2.7 & 2.6 / \,\,\,\,3.8 \\
										 & EIA*/ ASI*         & 4.9 / \,\,\,\,\,\,1.7  & 1.8 / \,\,\,21.3 & 5.5 / \,31.8  & 5.1 / \,\,\,\,7.5 \\ \hline
 $ (500,5000)$       & MAdaSub ser./ par. & 128.6 / 147.8          & 4.8 / \,\,\,\,\,6.0 & 0.8 / \,\,\,\,6.2  & 2.5 / \,\,\,\,9.4 \\
										 & EIA*/ ASI*         & 8.7 / \,\,\,29.9       & 2.2 / 126.9  & 718.0 / 2053  & 81.5 / 2271 \\ \hline
	 $ (1000,500)$     & MAdaSub ser./ par. & 136.7 / \,\,\,71.8     & 4.8 / \,\,\,\,\,\,3.0  & 0.9 / \,\,\,\,3.0  & 2.8 / \,\,\,\,3.8 \\
										 & EIA*/ ASI*         & 5.9 / \,\,\,41.9       & 16.3 / \,\,\,\,\,2.1 & 7.7 / \,16.9  & 4.2 / \,12.0 \\ \hline
	 $ (1000,5000)$    & MAdaSub ser./ par. & 248.2 / 239.5          & 1.0 / \,\,\,\,\,\,0.8 & 2.5 / \,\,\,\,8.5  & 3.6 / \,\,\,\,9.7  \\
										 & EIA*/ ASI*         & 2.2 / \,\,\,15.4       & 2.2 / \,\,\,37.0  & 9167 / 4423  & 11.3 / \,30.8 \\ \hline
\end{tabular}}
\end{center}
\caption{\small Complimentary results of high-dimensional simulation study. Performance of different adaptive algorithms~(A) compared to add-delete-swap \(\text{MC}^3\) algorithm~(B), in terms of the median estimated ratio~$\hat{r}_{A,B}$ of the relative time-standardized effective sample size for PIPs over all variables. Note that, for comparison reasons and in contrast to the median estimated ratios~$\hat{r}_{A,B}^{(20)}$ reported in Table~1 of the paper for the 20 variables with the largest estimated PIPs, here the median is taken over all variables, even though the majority of variables receives very small posterior probability. \\[1mm]
*Results for exploratory individual adaptation~(EIA) and adaptively scaled individual adaptation~(ASI) algorithms are taken from Table~1 in~\cite{griffin2018}. Comparisons between MAdaSub and algorithms of~\cite{griffin2018} should be interpreted in a holistic way, as the used computational systems, implementations and the specific simulated datasets for each setting may differ.}
\label{tab:highdim_supp1}
\end{table*}

Table~\ref{tab:highdim_supp1} provides complimentary results based on the same evaluation metric as in~\cite{griffin2018}, i.e.\ regarding the median estimated ratio of the relative time-standardized effective sample size for PIPs over \textbf{all} variables. Results indicate that MAdaSub also yields a competitive performance compared to the exploratory individual adaptation~(EIA) and adaptively scaled individual adaptation~(ASI) algorithms of~\cite{griffin2018}, with advantages of MAdaSub in low SNR settings and advantages of the adaptive algorithms of~\cite{griffin2018} in high SNR settings.

\begin{table*}
\begin{center}
\resizebox{\textwidth}{!}{\begin{tabular}{ llR{2.8cm}R{2.8cm}R{2.8cm}R{2.8cm} } 
 \hline
  & & \multicolumn{1}{c}{$\text{SNR}=0.5$} & \multicolumn{1}{c}{$\text{SNR}=1$} & \multicolumn{1}{c}{$\text{SNR}=2$} & \multicolumn{1}{c}{$\text{SNR}=3$}  \\ 
 $L_j$ & MAdaSub & $\hat{r}_{A,B}^{(20)}$ / \,\,\,Acc.  &  $\hat{r}_{A,B}^{(20)}$ / \,\,\,Acc. &  $\hat{r}_{A,B}^{(20)}$ / \,\,\,Acc. & $\hat{r}_{A,B}^{(20)}$ / \,\,\,Acc.  \\ \hline \hline 
 $ 1$        & serial &   63.3 / 44.9\%   & 21.9 / 33.4\%      &  5.3 / \,\,\,9.6\%  &  8.0  / 13.2\% \\
             & parallel & 19.7 / 45.2\%   & 8.9 /  37.4\%      &  8.6 / 20.4\%       &  13.0 / 24.7\% \\ 			\hline
 $ 5$        & serial & 72.0 / 45.3\%     & 21.3 / 35.7\%      &  7.0 / 12.2\%       &  12.9 / 16.8\% \\
             & parallel & 20.9 / 45.3\%   & 8.3 / 38.3\%       &  12.1 / 23.9\%      &  17.0 / 29.2\% \\ 			\hline
 $ 50$       & serial & 66.4 / 45.2\%     & 23.8 / 36.6\%      &  9.4 / 14.0\%       &  16.8 / 19.7\% \\
             & parallel & 20.5 / 45.3\%   & 7.8 / 38.9\%       &  14.8 / 26.6\%      &  20.2 / 31.9\% \\ 			\hline
 $ 500$      & serial & 67.0 / 44.6\%     & 22.7 / 31.9\%      &  4.7 / \,\,\,6.3\%  &  8.0  / \,\,\,9.3\% \\
             & parallel & 22.2 / 45.2\%   & 8.5 / 36.9\%       &  9.0 / 18.3\%       &  10.4 / 20.9\% \\ 			\hline				
 $ 5000$     & serial & 27.6 / 26.4\%     & 3.9 / \,\,\,7.8\%  &  0.2 / \,\,\,0.2\%  &  0.1 / \,\,\,0.2\% \\
             & parallel & 21.3 / 44.0\%   & 8.8 / 28.7\%       &  1.2 / \,\,\,4.0\%  &  1.8 / \,\,\,4.8\% \\ 			\hline		
 $ 50000$    & serial & 0.3 / \,\,\,1.0\% & 0.2 / \,\,\,0.2\%  &  0.03 / \,\,\,0.1\% &  0.02 / \,\,\,0.1\% \\
             & parallel & 3.8 / 13.1\%    & 0.3 / \,\,\,2.3\%  &  0.02 / \,\,\,0.1\% &  0.01 / \,\,\,0.1\% \\ 			\hline		
\end{tabular}}
\end{center}
\caption{\small Results of sensitivity analysis regarding different choices of variance parameters~$L_j$ in MAdaSub for high-dimensional simulation setting with \(n=500\) and \(p=500\), with fixed choices of~$r_j^{(k,0)}=10/p$ for all serial and parallel chains~$k$. Performance of MAdaSub algorithms~(A) with serial and parallel updating schemes compared to add-delete-swap \(\text{MC}^3\) algorithm~(B) in terms of median estimated ratios~$\hat{r}_{A,B}^{(20)}$ of the relative time-standardized effective sample size for PIPs over the 20 variables with the largest estimated PIPs, and in terms of median acceptance rates (Acc.). 
}
\label{tab:highdim_Lj}
\end{table*}

Table~\ref{tab:highdim_Lj} provides results of a sensitivity analysis regarding different choices of the variance parameters~$L_j$ in MAdaSub for the high-dimensional simulation setting of Section~5.3 with \(n=500\) and \(p=500\), showing that the choice~$L_j=p=500$ also performs well for all considered signal-to-noise ratios \(\text{SNR}\in\{0.5,1,2,3\}\); however, competitive and partly favourable results are also obtained for \(L_j<p=500\) in this sparse high-dimensional setting.

\begin{table*}
\begin{center}
\resizebox{\textwidth}{!}{\begin{tabular}{ lR{2.8cm}R{2.8cm}R{2.8cm}R{2.8cm} } 
 \hline
  &  \multicolumn{1}{c}{$\text{SNR}=0.5$} & \multicolumn{1}{c}{$\text{SNR}=1$} & \multicolumn{1}{c}{$\text{SNR}=2$} & \multicolumn{1}{c}{$\text{SNR}=3$}  \\ 
 Initialization & $\hat{r}_{A,B}^{(20)}$ / \,\,\,Acc.  &  $\hat{r}_{A,B}^{(20)}$ / \,\,\,Acc. &  $\hat{r}_{A,B}^{(20)}$ / \,\,\,Acc.  & $\hat{r}_{A,B}^{(20)}$ / \,\,\,Acc. \\ \hline \hline 
 Fixed $r_j^{(k,0)}$ \& fixed $L_j^{(k)}$        & 20.8  / 45.2\%  & 10.3 / 36.9\%  & 10.4 / 18.6\% &  12.8 / 21.4\%  \\
 Random $r_j^{(k,0)}$ \& fixed $L_j^{(k)}$  & 21.0  / 45.3\%  & 9.9 / 37.8\%   & 10.5 / 19.5\% &  15.5 / 22.8\% \\
 Random $r_j^{(k,0)}$ \& random $L_j^{(k)}$       & 20.0  / 45.3\%  & 9.0 / 37.7\%   & 7.8  / 18.1\% &  10.9 / 21.4\% \\
\end{tabular}}
\end{center}
\caption{\small Results of sensitivity analysis regarding random (different) versus fixed (the same) initialisations of tuning parameters~$r_j^{(k,0)}$ and~\(L_j^{(k)}\) for the parallel MAdaSub chains in the high-dimensional simulation setting with \(n=500\) and \(p=500\). Fixed initializations are $r_j^{(k,0)}=10/p$ and $L_j^{(k)}=p$, while  random initializations are \(r_j^{(k,0)}=q^{(k)}/p\sim U(2/p,10/p)\) and \(L_j^{(k)}=L^{(k)}\sim U(p/2,2p)\) for each chain~$k$. Performance of parallel MAdaSub algorithm~(A) compared to add-delete-swap \(\text{MC}^3\) algorithm~(B) in terms of median estimated ratios~$\hat{r}_{A,B}^{(20)}$ of the relative time-standardized effective sample size for PIPs over the 20 variables with the largest estimated PIPs and in terms of median acceptance rates (Acc.). 
}
\label{tab:highdim_randomfixed}
\end{table*}

Table~\ref{tab:highdim_randomfixed} provides results of a sensitivity analysis regarding random (different) versus fixed (the same) initialisations of the tuning parameters~$r_j^{(k,0)}$ and~\(L_j^{(k)}\) for the parallel MAdaSub chains in the same high-dimensional simulation setting, showing that the performance of MAdaSub appears not to be largely affected by the different (random or fixed) initializations of its tuning parameters in this setting. Yet, results indicate that choosing different random initial proposal probabilities~\(r_j^{(k,0)}\) for the chains~$k$ can be beneficial and tends to yield slightly improved performance compared to considering the same fixed tuning parameters~\(r_j^{(k,0)}\) and~$L_j^{(k)}$ for each chain. On the other hand, the parallel MAdaSub algorithm with random initializations of both tuning parameters~\(r_j^{(k,0)}\) and~$L_j^{(k)}~\sim U(p/2,2p)$ tends to yield slightly worse performance, as variance parameters~$L_j^{(k)}>p$ are not favourable in this setting (see also Table~G.2). Despite this, to avoid optimistic biases in the evaluation of the proposed algorithm (cf.~\citealp{buchka2021}), in Table~1 of the main document we still report the results for the parallel version with the originally considered random initializations of both tuning parameters~\(r_j^{(k,0)}\) and~$L_j^{(k)}$. 

\begin{table*}
\begin{center}
\resizebox{\textwidth}{!}{\begin{tabular}{ cR{4cm}R{4cm}R{4cm}R{4cm} } 
 \hline
  &  \multicolumn{1}{c}{$\text{SNR}=0.5$} & \multicolumn{1}{c}{$\text{SNR}=1$} & \multicolumn{1}{c}{$\text{SNR}=2$} & \multicolumn{1}{c}{$\text{SNR}=3$}  \\ 
 $(R,T)$ & $\hat{r}_{A,B}^{(20)}$ / \,\,\,Acc. / \,\,\,Time  &  $\hat{r}_{A,B}^{(20)}$ / \,\,\,Acc. / \,\,\,Time &  $\hat{r}_{A,B}^{(20)}$ / \,\,\,Acc. / \,\,\,Time & $\hat{r}_{A,B}^{(20)}$ / \,\,\,Acc. / \,\,\,Time \\ \hline \hline 
 $ (10,5000)$       &  42.5 / 45.4\% / \,\,\,32.4s  & 17.0 / 34.9\% / \,\,\,39.3s  & 6.3 / 10.2\% / \,\,\,53.8s  &  10.1 / 14.6\% / \,\,\,55.6s \\
 $ (20,2500)$       &  36.3 / 45.3\% / \,\,\,36.6s  & 15.3 / 36.1\% / \,\,\,43.9s  & 7.5 / 13.6\% / \,\,\,58.8s  &  11.5 / 18.0\% / \,\,\,60.1s \\
 $ (50,1000)$       &  23.0 / 45.3\% / \,\,\,54.4s  & 9.9  / 37.7\% / \,\,\,60.8s  & 7.8 / 18.1\% / \,\,\,77.6s  &  12.2 / 21.4\% / \,\,\,79.1s \\
 $ (100,500)$       &  16.3 / 45.3\% / \,\,\,81.9s  & 5.4  / 38.8\% / \,\,\,90.1s  & 8.0 / 22.2\% / 108.7s &  11.3 / 25.3\% / 109.7s \\
 $ (200,250)$       &  10.0 / 45.3\% / 136.4s       & 3.6  / 39.7\% / 146.5s       & 6.1 / 24.3\% / 171.5s &  8.6  / 28.7\% / 174.6s \\
\end{tabular}}
\end{center}
\caption{\small Results of sensitivity analysis regarding different choices of rounds~$R$ and iterations~$T$ per round in parallel version of MAdaSub for high-dimensional simulation setting with \(n=500\) and \(p=500\). Performance of parallel MAdaSub algorithm~(A) compared to add-delete-swap \(\text{MC}^3\) algorithm~(B) in terms of median estimated ratios~$\hat{r}_{A,B}^{(20)}$ of the relative time-standardized effective sample size for PIPs over the 20 variables with the largest estimated PIPs, in terms of median acceptance rates (Acc.) and in terms of median computation times (in seconds). 
}
\label{tab:highdim_RT}
\end{table*}

Finally, Table~\ref{tab:highdim_RT} provides results of a sensitivity analysis regarding different choices of the number of rounds~$R$ and the number of iterations~$T$ per round in the parallel version of MAdaSub for the same high-dimensional simulation setting, considering varying combinations of \((R,T)\) such that the total number of iterations~$R\times T$ per chain remains constant. Results show that there is a trade-off regarding sampling effectiveness and computational efficiency: if the frequency of communication between the different chains is increased (i.e.\ larger numbers of rounds~$R$), then the convergence of the proposal probabilities is accelerated, leading to larger acceptance rates (for \(\text{SNR}\geq1\)); however, the higher frequency of communication between the chains comes at the prize of larger computation times. For settings with high signal-to-noise ratios (\(\text{SNR}\geq2\)), the resulting median estimated ratios of the relative time-standardized effective sample size are largest for~$R\in[20,100]$. Note that we considered the number of parallel chains to be the same as the number of assigned CPUs (i.e.\ 5 parallel chains with 5 CPUs, see Section~5.3), which is the most natural choice. However, in practice the ``optimal'' choice of the number of rounds~$R$ may also depend on the number of available CPUs for parallel computation (especially in case this number is considerably different from the number of parallel MAdaSub chains).

\clearpage


\setcounter{figure}{0}

\renewcommand*{\thefigure}{H.\arabic{figure}}

\section{Additional results for Tecator data application of Section~6.1} \label{sec:TecatorSup}

Here we provide additional results regarding the efficiency of the serial MAdaSub algorithm under the same setting as in \citet{lamnisos2013}, where several adaptive and non-adaptive MCMC algorithms are compared using normal linear models for the Tecator data. In particular, \citet{lamnisos2013} consider a classical \(\text{MC}^3\) algorithm (\citealp{madigan1995}), the adaptive Gibbs sampler of \citet{nott2005} and adaptive and non-adaptive Metropolis-Hastings algorithms based on the tunable model proposal of \citet{lamnisos2009}. In the comparative study of \citet{lamnisos2013} each algorithm is run for 2,000,000 iterations, including an initial burn-in period of 100,000 iterations. Furthermore, thinning is applied using only every 10th iteration, so that the finally obtained MCMC sample has size 190,000. For comparison reasons, after a burn-in period of 100,000 iterations, we run the serial MAdaSub algorithm for 190,000 iterations, so that the considered MCMC sample has the same size as in \citet{lamnisos2013}. In the serial MAdaSub algorithm we set \(r_j^{(0)}=\frac{5}{100}\) for \(j\in\mathcal{P}\), i.e.\ we use the prior inclusion probabilities as the initial proposal probabilities in MAdaSub; further, we set \(L_j=p\) for \(j\in\mathcal{P}\) and \(\epsilon=\frac{1}{p}\). Since the acceptance rate of MAdaSub is already sufficiently large in the considered setting yielding a well-mixing algorithm, we do not consider additional thinning of the resulting chain. In fact, the acceptance rate of the serial MAdaSub chain is approximately 0.38 for the 190,000 iterations (excluding the burn-in period). We note that in this example the relatively large number of 100,000 burn-in iterations is not necessarily required for MAdaSub and is only used for comparison reasons.   

\citet{lamnisos2013} report estimated median effective sample sizes of the different samplers 
for the evolution of the indicators \(\big(\gamma_j^{(t)}\big)_{t=1}^T\) for \(j\in\mathcal{P}\), where \(\gamma_j^{(t)}=\mathbbm{1}_{S^{(t)}}(j)\) indicates whether variable~\(X_j\) is included in the sampled model~\(S^{(t)}\) in iteration~\(t\). The estimated median effective sample size for the 190,000 iterations of the serial MAdaSub algorithm is approximately 38,012 (using the R-package \texttt{coda}), which is slightly larger than the values for the competing algorithms reported in \citealp{lamnisos2013} (the largest one is 37,581 for the ``optimally'' tuned Metropolis-Hastings algorithm). Note that when using 1,900,000 iterations with thinning (every 10th iteration after 100,000 burn-in iterations) as in the other algorithms, the estimated median effective sample size for MAdaSub is much larger (178,334), yielding almost independent samples of size 190,000.  

We finally provide details on the computational costs of the serial and parallel versions of MAdaSub for the analysis of the Tecator data presented in Section~6.1 of the main document. The computation time for each of the 5000 iterations of the serial MAdaSub algorithm is approximately 3.5 seconds (using an R implementation of MAdaSub on an Intel(R) Core(TM) i7-7700K, 4.2 GHz processor); thus, even without parallelization, one obtains accurate posterior estimates with the serial MAdaSub algorithm within seconds using a usual desktop computer (e.g.\ after 10,000 or 15,000 iterations, see Figure~4 of the main document). \citet{lamnisos2013} report that the computation times for each of the other considered MCMC methods were in the order of 25,000 seconds for the total number of 2,000,000 iterations (using a MATLAB implementation).  Although the computation times are not directly comparable, these results indicate that the serial MAdaSub algorithm is already very efficient. The timings for MAdaSub are also of a similar order as for the recent adaptive algorithms of~\cite{griffin2018}, who report that short runs of 6000 iterations of the exploratory individual adaptation algorithm yield stable estimates for the Tecator data with computation times of about 5 seconds~\citep{griffin2018}. When using a computer cluster with 50 CPUs, the overall computation time for all considered 50 MAdaSub chains (each with a large number of 290,000 iterations) is 460 seconds, while the computation time for a single chain is 231 seconds on the same system. This shows that, even though 25 of the 50 MAdaSub chains communicate with each other after every 5,000 iterations, the parallelization yields a substantial speed-up in comparison to a serial application of 50 independent chains.

\clearpage


\setcounter{figure}{0}

\renewcommand*{\thefigure}{I.\arabic{figure}}

\section{Additional results for PCR and Leukemia data applications of Section~6.2} \label{sec:PCRSup}

To further illustrate the stability of the results, we examine three independent runs of the serial MAdaSub algorithm for the PCR and leukemia data, each with \(T=1{,}000{,}000\) iterations, setting \(r_j^{(0)}=\frac{q}{p}\) as initial proposal probabilities with different expected search sizes \(q\): for the first run we set \(q=2\), for the second run \(q=5\) and for the third run \(q=10\). Further tuning parameters are set to \(L_j=p\) and \(\epsilon=1/p\) for each of the three MAdaSub runs.

\begin{figure}[!ht]
\centering
\begin{subfigure}{.5\textwidth}
  \centering
		\includegraphics[width=\linewidth]{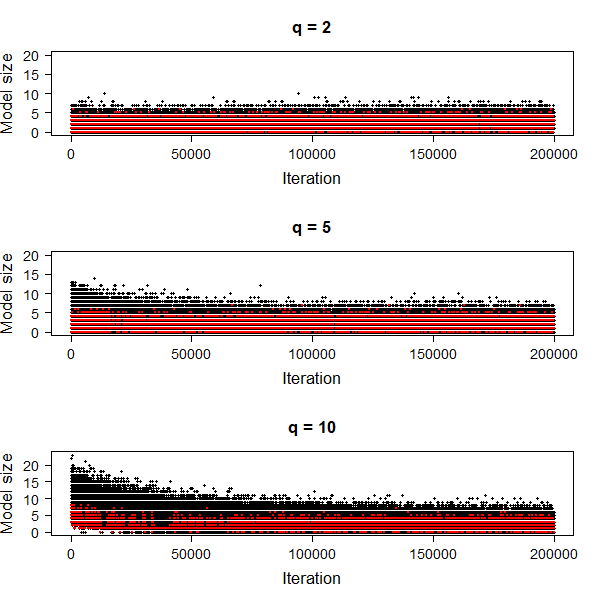}
  \caption{PCR dataset}
\end{subfigure}%
\begin{subfigure}{.5\textwidth}
  \centering
	\includegraphics[width=\linewidth]{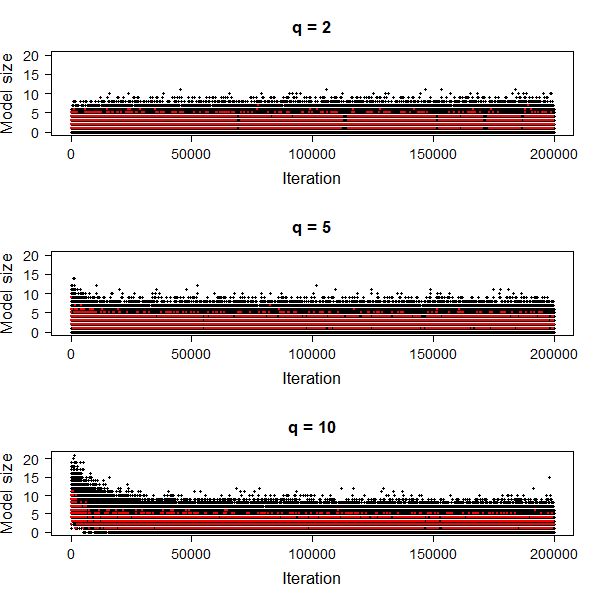} 
  \caption{Leukemia dataset}
\end{subfigure}
\caption{PCR and leukemia data applications. \small Evolution of the sizes~\(|V^{(t)}|\) of the proposed models (black) and of the sizes~\(|S^{(t)}|\) of the sampled models (red) along the first 200,000 iterations~(\(t\)) of three independent runs of the serial MAdaSub algorithm for \(q=2\), \(q=5\) and \(q=10\).} 
\label{fig:highdimreal_sizes}
\end{figure}

Figure~\ref{fig:highdimreal_sizes} depicts the evolution of the sizes of the sampled and proposed models for the first 200,000 iterations of MAdaSub, showing that the algorithm quickly adjusts the search sizes appropriately based on the history of the already sampled models. Furthermore, Figure~\ref{fig:highdimreal_scatter} shows scatterplots of the final proposal probabilities for the different runs of MAdaSub, illustrating that the proposal probabilities converge to the same values despite their different initial choices (with somewhat larger variability for the leukemia data). 

\begin{figure}[!ht] 
\centering
\begin{subfigure}[t]{1\textwidth}
 \includegraphics[width=\linewidth]{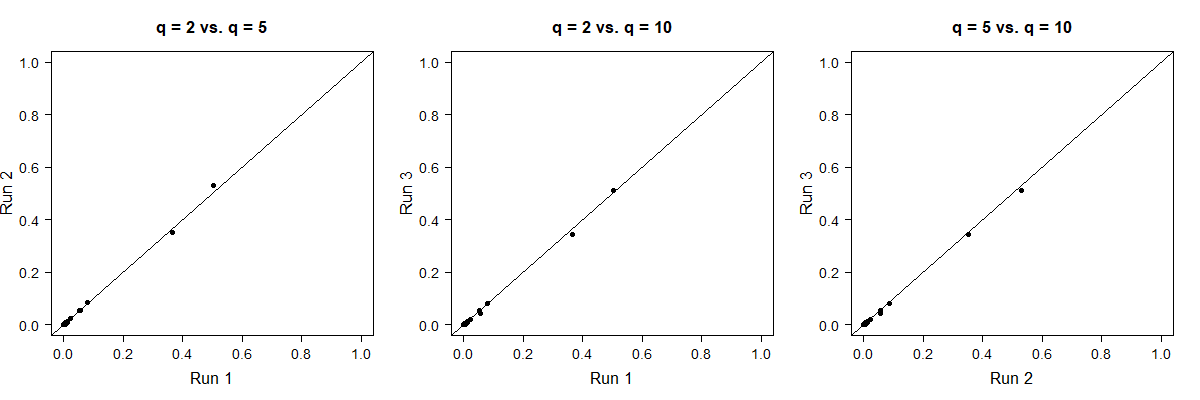}
   \caption{PCR dataset}
	\vspace{0.5cm}
\end{subfigure}
   \begin{subfigure}[t]{1\textwidth}
		\includegraphics[width=\linewidth]{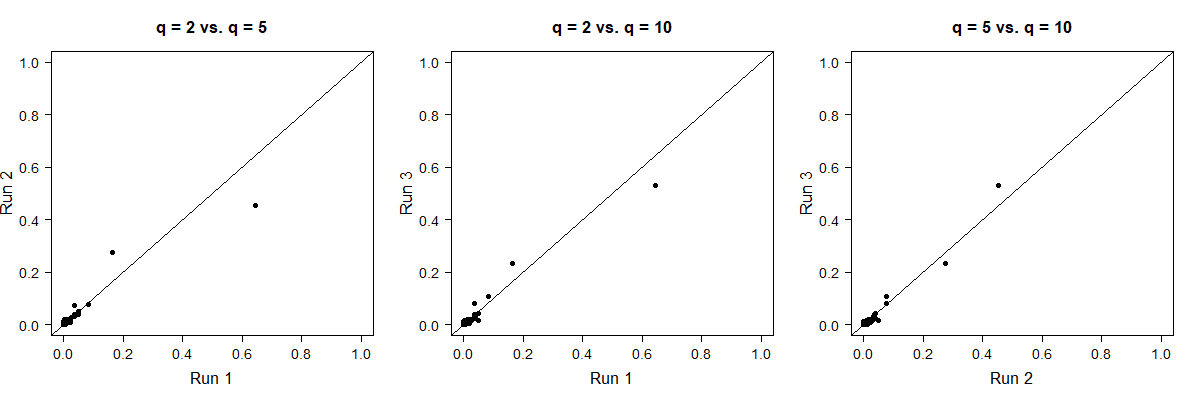}
   \caption{Leukemia dataset}
\end{subfigure}
\caption{PCR and leukemia data applications. \small Scatterplots of final proposal probabilities~\(r_j^{(T)}\) after \(T=1{,}000{,}000\) iterations for three independent runs of the serial MAdaSub algorithm (\(q=2\), \(q=5\) and \(q=10\)).} 
\label{fig:highdimreal_scatter}
\end{figure}

Similarly as for the Tecator data, Figures~\ref{fig:PCR_parallel} and~\ref{fig:Leukemia_parallel} depict boxplots of empirical inclusion frequencies of the most informative variables for the first three rounds (each of 20,000 iterations) and after 1,000,00 iterations (with a burn-in period of 200,000 iterations) of 25 serial and 25 parallel MAdaSub chains for the PCR and leukemia dataset, respectively, considering random initializations of proposal probabilities \(r_j^{(k,0)}\) and variance parameters \(L_j^{(k)}\) for each chain \(k=1,\dots,50\) (see the main paper for details). The results further illustrate the benefits of the parallel version of MAdaSub, providing particularly stable estimates of posterior inclusion probabilities for the PCR data after only 60,000 iterations (see also Figures~5 and~6 of the main document).

\begin{figure}[!t]\centering 
\includegraphics[width=\textwidth]{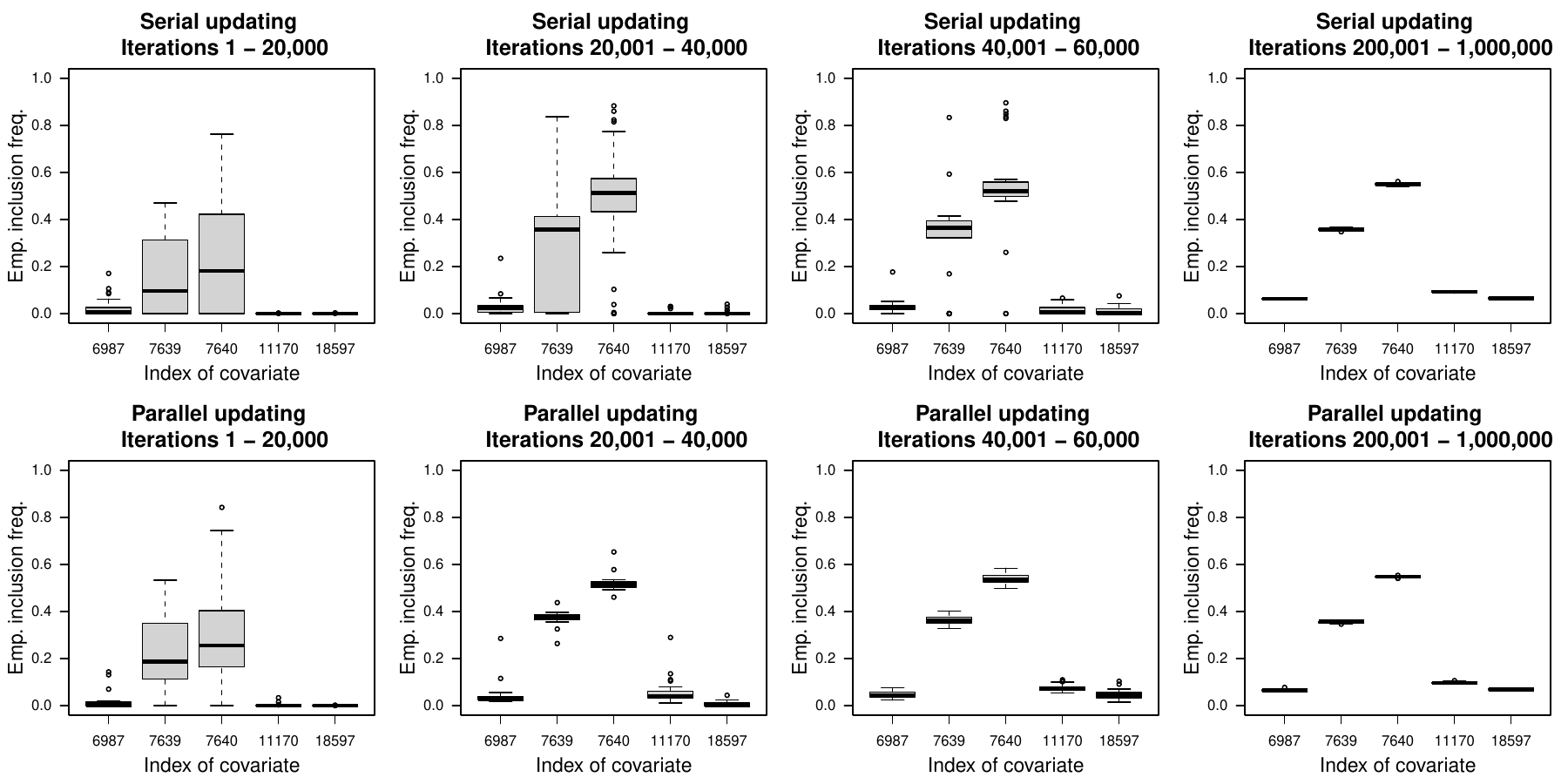}
\caption{\label{fig:PCR_parallel}PCR data application. \small Results of 25 serial MAdaSub chains (Algorithm~1, top) and of 25 parallel MAdaSub chains exchanging information every 20,000 iterations (Algorithm~\ref{algo:parallel}, bottom) in terms of empirical variable inclusion frequencies \(f_j\) for most informative variables \(X_j\) (with final \(f_j\geq 0.05\) for at least one chain).} 
\end{figure}

\begin{figure}[!t]\centering 
\includegraphics[width=\textwidth]{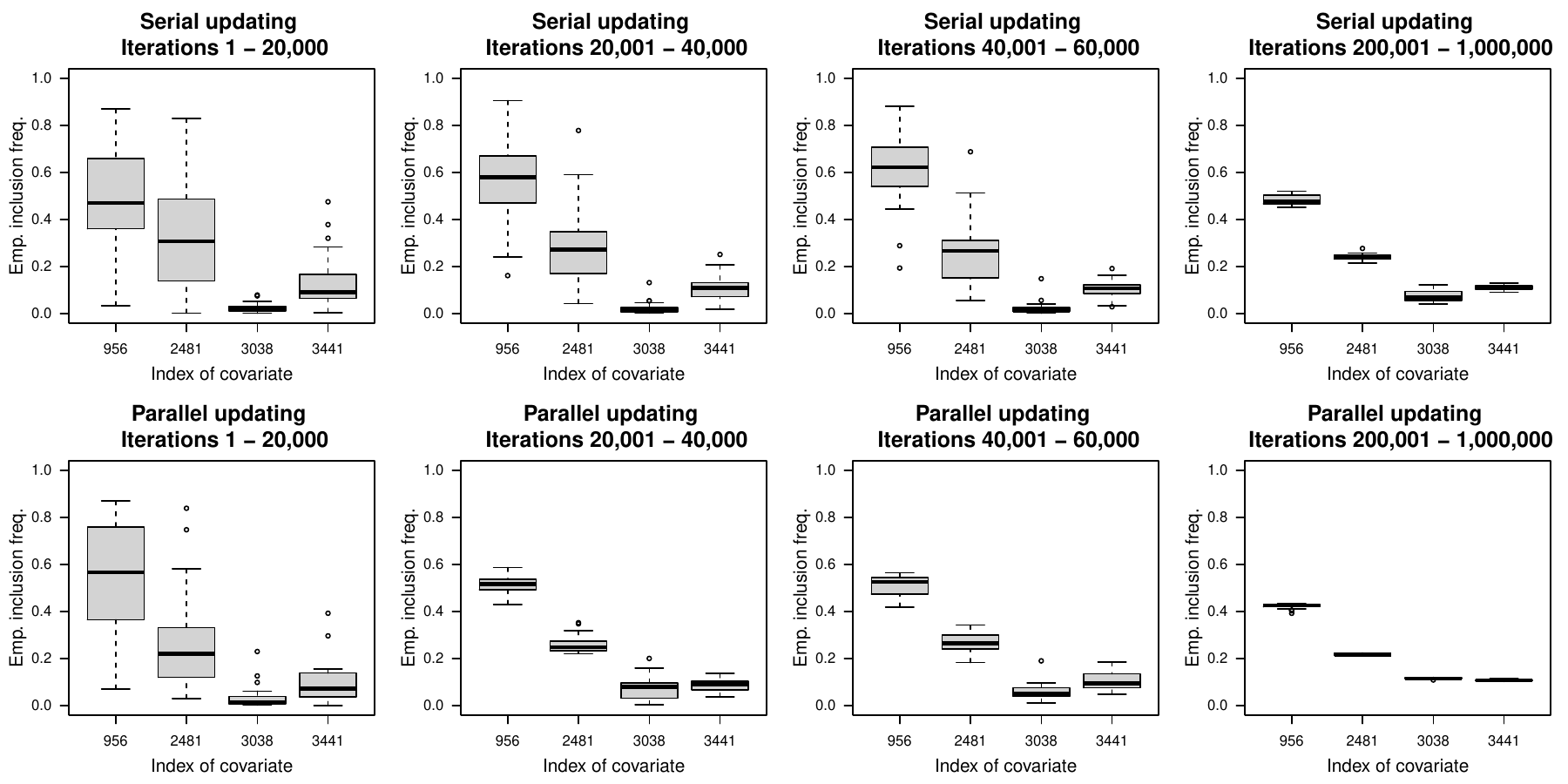}
\caption{\label{fig:Leukemia_parallel}Leukemia data application. \small Results of 25 serial MAdaSub chains (Algorithm~1, top) and of 25 parallel MAdaSub chains exchanging information every 20,000 iterations (Algorithm~\ref{algo:parallel}, bottom) in terms of empirical variable inclusion frequencies \(f_j\) for most informative variables \(X_j\) (with final \(f_j\geq 0.1\) for at least one chain).} 
\end{figure}

\end{document}